\documentclass[pdflatex,sn-mathphys-num]{sn-jnl}
\usepackage{amsmath,amssymb,amsfonts,amsthm}
\usepackage{mathtools,thm-restate,thmtools}
\usepackage{latexsym,bbm,xspace,graphicx,float}
\usepackage{physics}                            \usepackage[ruled,vlined,hangingcomment,longend,linesnumbered,resetcount]{algorithm2e}
   
\SetCommentSty{mycommfont}
\usepackage[svgnames,dvipsnames]{xcolor}
\definecolor{red}{HTML}{E51400}  
\definecolor{blue}{HTML}{0050EF} 
\definecolor{purple}{HTML}{AA00FF}
\definecolor{green}{HTML}{006265}
\definecolor{brown}{HTML}{AC4022}
\definecolor{lightlavender}{HTML}{f0edf5}
\definecolor{lavender}{HTML}{ab9dbd}
\definecolor{lightgray}{HTML}{f6f6f6}
\definecolor{gray}{HTML}{ececec}
\definecolor{lightyellow}{HTML}{fcf4dd}
\definecolor{yellow}{HTML}{f3ce85}
\definecolor{remark}{HTML}{f6809f}
\definecolor{magenta}{HTML}{7f007f}
\definecolor{dsblue}{HTML}{814f61}

\usepackage{nicefrac}
\usepackage{booktabs,tabularx}
\usepackage{graphicx}
\usepackage[strings]{underscore}
\usepackage{nicematrix}
\usepackage{multirow,multicol}

\usepackage[capitalise,nameinlink]{cleveref}
\usepackage{nameref}
\usepackage[acronym]{glossaries-extra}
\usepackage{xspace}
\setabbreviationstyle[acronym]{long-short}   
\usepackage{tikz}
\usetikzlibrary{shapes, arrows, positioning, calc, backgrounds, fit}
\declaretheorem[name=Theorem,refname={Theorem,Theorems},
Refname={Theorem,Theorems},numberwithin=section]{thm}                                
\declaretheorem[name=Lemma,sibling=thm,refname={Lemma,Lemmas},
Refname={Lemma,Lemmas},numberwithin=section]{lem}                  
\declaretheorem[name=Corollary,sibling=thm,refname={Corollary,Corrolaries},
Refname={Corollary,Corrolaries},numberwithin=section]{coro}             
\declaretheorem[name=Definition,refname={Definition,Definitions},
Refname={Definition,Definitions},numberwithin=section]{defn}           
\declaretheorem[name=Proposition,sibling=thm,refname={Proposition,Propositions},
Refname={Proposition,Propositions},numberwithin=section]{prop}          
\declaretheorem[name=Remark,refname={Remark,Remarks},
Refname={Remark,Remarks},numberwithin=section]{remark}                 
\declaretheorem[name=Observation,refname={Observation,Observations},
Refname={Observation,Observations},numberwithin=section]{obs}               
\declaretheorem[name=Fact,refname={Fact,Facts},
Refname={Fact,Facts},numberwithin=section]{fact}                    

\declaretheorem[name=Example,refname={Example,Examples},
Refname={Example,Examples},numberwithin=section]{example}                    

\declaretheorem[name=Open Problem,refname={Open problem,Open problems},
Refname={Open Problem,Open Problems}]{open}                    

\makeatletter
\def\ll@open{%
  \protect\numberline{\csname the\thmt@envname\endcsname}%
  \ifx\@empty\thmt@shortoptarg
    \thmt@open
  \else
    \thmt@shortoptarg
  \fi}
\def\l@thmt@theorem{} 
\makeatother

\usepackage{etoolbox}
\usepackage[disableredefinitions,basic]{complexity}                                     
\usepackage{tikz}

\newcommand{\eps}{\varepsilon}                                      
                
\newcommand{\inprod}[2]{\left\langle #1,#2\right\rangle}            
\newcommand{\from}{:}
\newcommand{\prob}{\operatorname{Pr}}
\newcommand{\size}[1]{\operatorname{size}\left(#1\right)}
\renewcommand{\poly}[1]{\operatorname{poly}\left(#1\right)}

\renewcommand{\polylog}[1]{\operatorname{polylog}\left(#1\right)}
\newcommand{\ftn}{\bF^n_2}
\newcommand{\ztn}{\bZ^n_2}
\newcommand{\st}{\;\vert\;}
\newcommand{\supp}[1]{\operatorname{supp}\left(#1\right)}
\newcommand{\nchi}{\protect\raisebox{2pt}{$\chi$}}

\newcommand{\cA}{\mathcal{A}}
\newcommand{\cB}{\mathcal{B}}
\newcommand{\cC}{\mathcal{C}}
\newcommand{\cD}{\mathcal{D}}

\newcommand{\cF}{\mathcal{F}}
\newcommand{\cG}{\mathcal{G}}
\newcommand{\cH}{\mathcal{H}}

\newcommand{\cN}{\mathcal{N}}
\newcommand{\cO}{\mathcal{O}}

\newcommand{\cQ}{\mathcal{Q}}

\newcommand{\ccS}{\mathcal{S}}
\newcommand{\cT}{\mathcal{T}}
\newcommand{\cU}{\mathcal{U}}

\newcommand{\cW}{\mathcal{W}}
\newcommand{\cX}{\mathcal{X}}
\newcommand{\cY}{\mathcal{Y}}

\newcommand{\bF}{\mathbb{F}}

\newcommand{\bN}{\mathbb{N}}

\newcommand{\bP}{\mathbb{P}}

\newcommand{\bR}{\mathbb{R}}

\newcommand{\bX}{\mathbb{X}}

\newcommand{\bZ}{\mathbb{Z}}

\newcommand{\fD}{\mathfrak{D}}

\newcommand{\fU}{\mathfrak{U}}

\newcommand{\clg}[1]{\left\lceil #1 \right\rceil}

\newcommand{\simpexpect}[1]{\mathbb{E}\,\left[\,#1\,\right]}
\newcommand{\expect}[2]{\underset{#1}{\mathbb{E}}\,\left[\,#2\,\right]}
\newcommand{\err}[2]{\operatorname{err}_{#1}\left(#2\right)}                  \newcommand{\opterr}[2]{\operatorname{opterr}_{#1}\left(#2\right)}            
\newcommand{\simperr}[1]{\operatorname{err}\left(#1\right)}                                \newcommand{\spectrum}[1]{\operatorname{FS}\left(#1\right)}                   
\newcommand{\sign}[1]{\operatorname{sign}\left(#1\right)}     \newcommand{\mem}[1]{\operatorname{MEM}_{#1}}                 

\newcommand{\ex}[2]{\operatorname{EX}\left(#1,#2\right)} 
\newcommand{\qex}[2]{\operatorname{QEX}\left(#1,#2\right)}    
\newcommand{\rex}[3]{\operatorname{REX}^{#3}\left(#1,#2\right)} 
\newcommand{\qrex}[3]{\operatorname{QREX}^{#3}\left(#1,#2\right)}       
\newcommand{\aex}[1]{\operatorname{AEX}\left(#1\right)}         
\newcommand{\qaex}[1]{\operatorname{QAEX}\left(#1\right)}               

\newcommand{\bigO}[1]{{\operatorname{O}\left( #1 \right)}}
\newcommand{\tildeO}[1]{\operatorname{\tilde{O}}\left( #1 \right)}

\newcommand{\bigOmega}[1]{{\Omega\left( #1 \right)}}

\newcommand{\bigTheta}[1]{{{\Theta}\left( #1 \right)}}

\usepackage[nolist,nohyperlinks]{acronym}
\begin{acronym}
\acro{DNF}{Disjunctive Normal Form}
\acro{CNF}{Conjunctive Normal Form}
\acro{PAC}{Probably Approximately Correct}
\acro{RCN}{Random Classification Noise}
\acro{HSP}{Hidden Subgroup Problem}
\acro{LWE}{Learning with Errors}
\acro{LPN}{Learning Parity with Noise}
\acro{NCP}{Nearest Codeword Problem}
\acro{SVP}{Shortest Vector Problem}
\acro{DLP}{Discrete Logarithm Problem}
\acro{NISQ}{Noisy Intermediate-Scale Quantum}
\acro{RAM}{Random Access Memory}
\acro{QRAM}{Quantum Random Access Memory}
\acro{QQRAM}{Quantum Random Access for Quantum Memory}
\acro{CQRAM}{Quantum Random Access for Classical Memory}
\acro{QFT}{Quantum Fourier Transform}
\acro{MQ}{Membership Query}
\acro{EX}{Example Query}
\acro{REX}{Random Noise Example Query}
\acro{AEX}{Agnostic Example Query}
\acro{QEX}{Quantum Example Query}
\acro{QREX}{Quantum RCN Example Query}
\acro{QAEX}{Quantum Agnostic Example Query}
\acro{QGL}{Quantum Goldreich Levin}
\end{acronym}

\newcommand{\DNF}{\ac{DNF}\xspace}
\newcommand{\CNF}{\ac{CNF}\xspace}
\newcommand{\RCN}{\ac{RCN}\xspace}
\renewcommand{\PAC}{\ac{PAC}\xspace}
\newcommand{\HSP}{\ac{HSP}\xspace}
\newcommand{\LWE}{\ac{LWE}\xspace}
\newcommand{\LPN}{\ac{LPN}\xspace}
\newcommand{\NCP}{\ac{NCP}\xspace}
\newcommand{\SVP}{\ac{SVP}\xspace}
\newcommand{\DLP}{\ac{DLP}\xspace}

\newcommand{\MQ}{\ac{MQ}\xspace}

\begin{document}

\title[The Quantum Learning Menagerie]{\begin{center} The Quantum Learning Menagerie\\
{\large(A survey on Quantum learning for Classical concepts)}\end{center}}

\author*[1]{\fnm{Sagnik} \sur{Chatterjee}}\email{chatsagnik@gmail.com}

\affil[1]{\orgdiv{School of Theoretical Computer Science (STCS)},
\orgname{TIFR Mumbai}}


\abstract{ 
This paper surveys various results in the field of Quantum Learning theory, specifically focusing on learning quantum-encoded classical concepts in the Probably Approximately Correct (PAC) framework. The cornerstone of this work is the emphasis on query, sample, and time complexity separations between classical and quantum learning that emerge under learning with query access to different labeling oracles. This paper aims to consolidate all known results in the area under the above umbrella and underscore the limits of our understanding by leaving the reader with 23 open problems.
}

\keywords{quantum computing, PAC learning, survey, time and sample complexity}



\maketitle

\section{Introduction}
Ever since computer scientists co-opted the notion of computational devices leveraging quantum mechanical principles~\cite{Feynman1982,Feynman1985,Deutsch1985}, the overarching goal in the quantum \textit{computing} community has been to demonstrate speedups in computational tasks over their classical counterparts. This came to a head when the seminal results of Grover~\cite{grover1996fast} and Shor~\cite{shor} opened the proverbial floodgates for research into computational speedups in various areas of traditional computer science, previously thought to have reached their computational barriers.

Learning theory~\cite{vapnikcherovnekis,vapnik74theory,vapnik82,valiant84learnability} is the study of how algorithms can generalize from data. This subfield of theoretical computer science is closely connected to fields like cryptography, coding theory, and circuit complexity, which makes it an extremely interesting area of research. The close of the twentieth century was marked with a series of results on classical impossibility barriers to \textit{efficient learning},\footnote{The notion of learnability and efficient learnability will be formally introduced later on.} even for extremely simple concept classes such as Boolean functions. It was at this critical juncture that \citet{bshouty95dnf} spearheaded the development of the subfield of quantum learning theory by circumventing known classical barriers for learning DNF formulas by using \textit{quantum query algorithms}. This survey aims to capture a vast portion of the literature on quantum learning theory, \textit{subject to a few constraints}.

\subsection{Structure and Organisation}
In \cref{sec:background} we establish basic notation, and provide an overview of Quantum Computing and Computational Learning Theory, followed an introduction of the PAC-learning framework in \cref{sec:prepac}, focusing on the notions of sample complexity (learnability) and time complexity (efficient learnability). We introduce various query models and discuss separations between them in \cref{sec:query}. Of particular importance are \cref{sec:clmem,sec:biasedqueryoracles}, where we highlight separations between sampling and membership queries in the classical and quantum settings, respectively, and propose a new generalized measure of separation between sampling and membership queries in the quantum case. In \cref{sec:qlsamp}, we survey sample complexity separations between classical and quantum PAC learning algorithms. Our discussions move beyond sample complexity separations in the traditional setup considered by previous surveys such as \cite{arunachalam2017survey,Anshu2024survey}, and cover newer research directions that consider separations w.r.t. generalized oracles, separations in the multi-class setting, and separations under white-box access. In \cref{sec:qltime}, we survey computational separations between classical and quantum learning algorithms. We formally introduce the notion of the Hidden Subgroup Problem induced concept classes, and demonstrate that existing separations follow as from separations in this general concept class. We then interpret the seemingly magical quantum learning speedups for \LWE and \LPN as a consequence of learning under powerful new oracle models. We conclude the survey with the latest developments in separations and impossibility results on learning classical Boolean concept classes such as halfspaces, decision trees, juntas, DNFs, and shallow circuits. 

\subsection*{What this survey is not about}
As noted in the abstract, the main goal of this paper is to survey the sample and time complexity-theoretic separations between classical and quantum learning algorithms in the context of learning \textit{classical concept classes}, specifically when restricted to the PAC learning framework. This work can be thought of as an extension to an earlier survey from 2017 by \citet{arunachalam2017survey}, both in terms of giving an overview of new quantum learning results in the last decade and emphasizing the underlying philosophy of learning separations stemming from learning under labeling query oracles. We detail the scope of the survey upfront, before moving on to the main subject matter at hand.

This survey does not touch on many important subareas of quantum learning theory, specifically under different learning models. We highlight two such categories below:
\begin{enumerate}
    \item Quantum learning for quantum objects. This sub-field of quantum learning theory aligns with Feynman's original vision of using quantum computers to figure out quantum phenomena. Examples include \textit{learning} arbitrary quantum states, quantum channels, and quantum processes. We refer the reader to other  surveys~\cite{Dunjko_2018,Anshu2024survey} for a more comprehensive treatment of this area of quantum learning.
    \item Quantum learning under different learning frameworks such as quantum statistical learning, quantum exact learning, etc. We refer the reader to the following manuscript by \citet{nietner2023unifyingquantumstatisticalparametrized} for an introduction to quantum statistical learning, and the following resources~\cite{arunachalam2017survey,Arunachalam2021exact} for more details on quantum exact learning.
\end{enumerate}

\section{Background and Preliminaries}\label{sec:background}
\subsection{Basic Notation}\label{sec:prenotation}
We represent the logarithm to base $2$ by $\log$. For $n\in\bN$, any positive polynomial function in $n$ is denoted by $\poly{n}$. A multivariate polynomial over $n$-variables is denoted as $\poly{x_1,x_2,\ldots,x_n}$.
For $n\in\bN$, we denote by $[n]$ the set $\{1,\ldots,n\}$.

The finite field of order $2$ is denoted by $\bF_2=\{-1,+1\}$ or $\bZ_2=\{0,1\}$, and $\ftn$ (resp. $\ztn$) denotes an $n$-dimensional vector space over $\bF_2$ (resp. $\bZ_2$).
A string $x\in\ftn$ is an $n$-dimensional vector in $\ftn$, where $x_i$ denotes the $i$-th bit of $x$. A Boolean literal is a bit $x_i$ or its negation $\overline{x}_i$. For a string $x\in\ftn$, the string $\overline{x}\in\ftn$ is the string given by $\overline{x}_i=1\oplus x_i$, where $\oplus$ is the bit-wise sum operation. For $a, b\in\ftn$ (or $\ztn$), any bit-wise binary operation $\odot$ results in a string $(a\odot b)\in \ftn$ (or $\ztn$ respectively). The bit-wise AND, OR, and XOR operations are denoted as $a\land b$, $a\lor b$, and $a\oplus b$, respectively. The Hamming weight of $x\in\ftn$ is denoted by $\abs{x}$.

The indicator random variable for an event $E$ is denoted by $\mathbb{I}_{E}$. 
The uniform distribution over a set $\ccS$ is denoted by $\cU\left(\ccS\right)$. The support of a distribution $\fD$ over the Boolean hypercube $\ftn$ is denoted as $\supp{\fD}=\{z\in\ftn\st\fD(z)\neq 0\}$. 
$x\sim\fD$ denotes a string sampled from a probability distribution $\fD$, i.e. $\prob(\mathbf{X}=x)=\fD(x)$. 
Similarly, $x\sim \ccS\iff x\sim\cU(\ccS)$, where $\ccS\subseteq[n]$. 
We denote by $\expect{\fD}{\mathbf{X}}$ the expectation of the random variable $\mathbf{X}$ with respect to the distribution $\fD$, and by $\simpexpect{\mathbf{X}}$ the expectation of $\mathbf{X}$ with respect to the uniform distribution. 
When dealing with inner product spaces, we will sometimes denote $\expect{\fD}{\bX}$ as $\inprod{\fD}{\bX}$. 

A function $f:\ftn\mapsto\bF_2^m$ is known as a \textbf{Boolean function}. Recall that the finite field of order $2$ is denoted by $\bF_2=\{-1,+1\}$ or $\bZ_2=\{0,1\}$, and $\ftn$ (resp. $\ztn$) denotes an $n$-dimensional vector space over $\bF_2$ (resp. $\bZ_2$).
For this survey, we are concerned with decision problems, i.e., binary-valued functions, and therefore mostly consider Boolean functions of the form $f:\ftn\mapsto\bF_2$, unless explicitly stated otherwise. We refer the reader to \cref{sec:booleanappendix} for more details on the representation of Boolean functions.

\subsection{Basics of Quantum Computing}\label{sec:prequantum}

A density matrix $\rho$ corresponding to a $n$-qubit quantum state is a $n\times n$ positive semi-definite (PSD) matrix with trace $1$: $\rho=\sum_{i\in[n]} p_i\ket{\psi_i}\bra{\psi_i}$, where $\{\ket{\psi_i}\}_{i\in[n]}$ are pure states and $\{p_i\}_{i\in[n]}$ form a probability distribution. Density matrices corresponding to pure states have rank $1$, i.e., $\rho=\ket{\psi}\bra{\psi}$. 

\begin{defn}[Trace Distance]\label{def:tracedis}
The trace distance between two quantum states $\rho$ and $\sigma$ is denoted by 
$
D(\rho,\sigma) = \frac{1}{2}\norm{\rho-\sigma}_1.
$
When $\rho$ and $\sigma$ correspond to pure states $\ket{\phi}$ and $\ket{\psi}$ the trace distance is expressed as $
D(\ket{\phi},\ket{\psi}) = \sqrt{1-{\abs{\bra{\phi}\ket{\psi}}}^2}.
$
\end{defn}

\begin{fact}[Distinguishing quantum states]\label{fact:discrimination}
    Suppose an algorithm is tasked with distinguishing between two mixed states $\rho_0$ and $\rho_1$. If the algorithm is given $\rho_b$, where $b$ is picked uniformly at random from $\{0,1\}$, then the best success probability of distinguishing between $\rho_0$ and $\rho_1$ is $\frac{1}{2}+\frac{1}{2}D(\rho_0,\rho_1)= \frac{1}{2}+\frac{1}{2}\sqrt{1-{\abs{\bra{\rho_0}\ket{\rho_1}}}^2}$. 
\end{fact}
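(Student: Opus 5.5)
The plan is to prove the Helstrom bound by writing the success probability of an arbitrary measurement as a linear functional of $\rho_0-\rho_1$, and then optimising that functional over all two-outcome POVMs.

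\textbf{Reduction to a linear functional.} Any distinguishing algorithm, however adaptive, ends in a two-outcome POVM $\{M_0,M_1\}$ with $M_0,M_1\succeq 0$ and $M_0+M_1=I$; the algorithm outputs $0$ on outcome $M_0$. Since $b$ is uniform, the success probability is
\[
p_{\mathrm{succ}}=\tfrac12\operatorname{tr}(M_0\rho_0)+\tfrac12\operatorname{tr}(M_1\rho_1)=\tfrac12+\tfrac12\operatorname{tr}\bigl(M_0(\rho_0-\rho_1)\bigr),
\]
using $\operatorname{tr}(\rho_1)=1$ and $M_1=I-M_0$. So it suffices to show
\[
\max_{0\preceq M\preceq I}\operatorname{tr}\bigl(M(\rho_0-\rho_1)\bigr)=\tfrac12\norm{\rho_0-\rho_1}_1=D(\rho_0,\rho_1).
\]

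\textbf{Optimising over $M$.} Set $\Delta=\rho_0-\rho_1$. It is Hermitian and traceless, so take its Jordan decomposition $\Delta=P-Q$ with $P,Q\succeq 0$ supported on orthogonal eigenspaces. Tracelessness gives $\operatorname{tr}P=\operatorname{tr}Q$, and by definition of the trace norm $\operatorname{tr}P+\operatorname{tr}Q=\norm{\Delta}_1$. Hence $\operatorname{tr}P=\tfrac12\norm{\Delta}_1$.

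\emph{Upper bound.} For any $0\preceq M\preceq I$ we have
\[
\operatorname{tr}(M\Delta)=\operatorname{tr}(MP)-\operatorname{tr}(MQ)\le\operatorname{tr}(MP)\le\operatorname{tr}(P).
\]
The first inequality holds because $\operatorname{tr}(MQ)\ge 0$ for PSD $M,Q$. The second holds because $\operatorname{tr}((I-M)P)\ge0$.

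\emph{Achievability.} Take $M_0=\Pi_+$, the projector onto the span of the nonnegative-eigenvalue eigenvectors of $\Delta$. Then $\operatorname{tr}(\Pi_+\Delta)=\operatorname{tr}P$, so the bound is attained.

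Together these give $p_{\mathrm{succ}}\le\tfrac12+\tfrac12 D(\rho_0,\rho_1)$, with equality for the Helstrom measurement $\{\Pi_+,\,I-\Pi_+\}$. This establishes the first equality in the statement.

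\textbf{The pure-state form.} The expression $\sqrt{1-\abs{\bra{\rho_0}\ket{\rho_1}}^2}$ only makes sense when $\rho_i=\ket{\psi_i}\bra{\psi_i}$ are pure, so I read that part of the statement as the pure-state specialisation. To verify it, restrict to $\operatorname{span}\{\ket{\psi_0},\ket{\psi_1}\}$, where $\Delta$ is a traceless rank-$\le2$ Hermitian matrix with eigenvalues $\pm\lambda$. Then
\[
2\lambda^2=\operatorname{tr}(\Delta^2)=2-2\abs{\braket{\psi_0}{\psi_1}}^2,
\]
so $\lambda=\sqrt{1-\abs{\braket{\psi_0}{\psi_1}}^2}$. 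This agrees with the pure-state formula in \cref{def:tracedis}.

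\textbf{Main obstacle.} The only non-routine step is the upper bound $\operatorname{tr}(M\Delta)\le\operatorname{tr}P$, which needs the Jordan decomposition and positivity of $\operatorname{tr}(AB)$ for PSD $A,B$. Everything else is bookkeeping.
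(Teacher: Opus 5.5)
The paper does not prove this Fact: it is quoted as the standard Helstrom--Holevo bound. Your argument is the textbook proof and it is correct: you reduce to $\max_{0\preceq M\preceq I}\operatorname{tr}(M\Delta)$, bound this by $\operatorname{tr}P$ via the Jordan decomposition $\Delta=P-Q$, and attain the bound with the projector $\Pi_+$.

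You are right to read the second equality only as a pure-state specialisation. For genuinely mixed states, $\bra{\rho_0}\ket{\rho_1}$ has no standard meaning. Under the natural reading as the fidelity $F(\rho_0,\rho_1)$, one only gets the Fuchs--van de Graaf inequality $D(\rho_0,\rho_1)\le\sqrt{1-F(\rho_0,\rho_1)^2}$. This is strict in general: for $\rho_0=I/2$ and $\rho_1=\ket{0}\bra{0}$ one has $D=\tfrac12$ but $\sqrt{1-F^2}=\tfrac{1}{\sqrt2}$. The paper only ever applies the Fact to the pure states $\ket{\psi_f},\ket{\psi_g}$ of \cref{eq:statesfordis}. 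There your rank-two eigenvalue computation gives exactly the formula of \cref{def:tracedis}.

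Those applications also use only the converse half, $p_{\mathrm{succ}}\le\tfrac12+\tfrac12 D$, which is your upper-bound step. Achievability by the Helstrom measurement is not needed for the paper's query lower bounds.
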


Given a Boolean function $f$, a quantum circuit computes $f$ with error $\varepsilon>0$, if $\forall x\in\ftn$, the quantum circuit $Q$ can compute $f(x)$ with probability at least $1-\varepsilon$.\footnote{By compute, we mean that the outcome of the measurement in the ancilla is $f(x)$.} The bounded error query complexity of computing $f$ is the smallest number of calls to any oracle $O_x$ s.t. $Q$ computes $f$ w.p. $\geq 1-\varepsilon$. The bounded error time complexity also takes into account the time taken to implement the unitaries $V_1,\ldots,V_k$. The complexity class $\BQP$ is the class of decision problems efficiently solvable by a quantum computer and is defined as the class of languages $L\subseteq\{0,1\}^*$ that are decidable, with a bounded error probability of at most $\nicefrac{1}{3}$ by a poly-time uniform family of quantum circuits $\{Q_n\}_n$.
\begin{remark}
    Throughout this survey, when we say that a quantum algorithm exists that performs a certain task, we mean that there is a uniform family of quantum query circuits that performs said task.
\end{remark}
\noindent We refer the reader to \cref{sec:quantappendix} for more discussions on quantum algorithms.

\subsection{Overview of Computational Learning Theory}\label{sec:prelearning}

Let $\cX_n\subseteq\ftn$ be a set of $n$-bit instances (or strings), and $\cY$ be the set of labels. For $n\in\bN$, we denote by $\cC=\cup_{n>0}\cC_n$, where $\cC_n=\{c_n\st c_n\from\cX_n\mapsto\cY\}$, a class of functions that maps instances to labels, known as the \textbf{concept class}. 
Similarly, a \textbf{hypothesis class} $\cH=\cup_{n>0}\cH_n$, where $\cH_n=\{h_n\st h_n\from\cX_n\mapsto\cY\}$, is also a class of functions that maps instances to labels. When it is obvious from the context, we drop the subscript $n$ from the definitions. For most of this survey, we assume that the label set is binary. Therefore, the concepts and the hypotheses discussed above are Boolean functions.  

Since multiple classes of Boolean circuits can compute the same Boolean function (see \cref{sec:booleanappendix} for a detailed discussion), the choice of representation of the concept or hypothesis becomes important for computational tasks. We assume that the representation of any concept class or hypothesis class is \textit{fixed}. 
\begin{remark}
    For this survey, the choice of the underlying representation is the concept class $\cC$ / hypothesis class $\cH$ under consideration, and its corresponding size, denoted by $\size{\cC}$ and $\size{\cH}$ respectively, is the maximal cost of the representation of $\cC$ and $\cH$ respectively.
\end{remark} 
 
A learning algorithm (or learner) $A$ is a randomized map from a set of instances $\ccS$,\footnote{The set $\ccS$ is referred to as the training set for $A$.} sampled (i.i.d.) according to an unknown distribution $\fD$ and labeled by an unknown \textit{target} concept $c\in\cC$ to a hypothesis function $h\in\cH$.\footnote{We assume here that both the hypothesis class $\cH$ and the concept class $\cC$ are known, but the specific target concept $c$ is unknown.} Since the label set is binary, most of the algorithms discussed in this survey perform \textit{binary classification}. We note here that the learner $A$ can be either a classical randomized algorithm or a quantum algorithm. Throughout most of this survey, we focus on one metric to quantify the performance of the output hypotheses, namely, the error of a hypothesis. The notion of error semantically captures the number of mistakes made by the hypothesis $h$ with respect to the concept $c$ in different forms.
\begin{defn}[Error]\label{def:error}\sloppy
The error of a hypothesis $h\in\cH:\ftn\mapsto\bF_2$ with respect to an unknown concept $c\in\cC:\ftn\mapsto\{-1,+1\}$ and an unknown underlying distribution $\fD$ is defined as:
    \begin{equation}\label{eq:error}
        \err{\fD,c}{h}:= \expect{x\sim\fD}{\mathbb{I}_{\left(h(x)\neq c(x)\right)}}=\prob_{x\sim\fD}\left[h(x)\neq c(x)\right].
    \end{equation}
\end{defn}
The optimal error of a hypothesis class $\cH$ with respect to an unknown concept  $c\in\cC$ with respect to to an underlying distribution $\fD$ is defined as $\opterr{\fD,c}{\cH}:= \min_{h\in\cH} \err{\fD,c}{h}.$

\begin{remark}\label{rem:ideal}
    A hypothesis with $\simperr{h}=0$ is called the \textbf{ideal hypothesis}.
\end{remark}

\subsection{The PAC-learning framework}\label{sec:prepac}
In this section, we quantify the performance of the learning algorithm $A$ using the \PAC learning model, introduced by \citet{valiant84learnability}.

\begin{defn}[$(\varepsilon,\delta)$-PAC learnability]\label{def:pac}
    A learning algorithm $A$ $(\varepsilon,\delta)$-PAC learns an unknown concept class $\cC$ over $n$-bit instances using hypothesis class $\cH_A$, if for every distribution $\fD$ over $\cX$, and every target concept $c\in\cC$,  there exists a function $m_0\from (0,1)^2\mapsto\bN$, such that for every $\varepsilon,\delta\in[0,1]$, $A$ takes as input a training set $\ccS$ consisting of $m>m_0(\varepsilon,\delta)$ sampled i.i.d. according to $\fD$ and labeled by $c$ and outputs $h\in\cH_A$ s.t.
    \begin{equation}\label{eq:pac}
        \prob\left[\err{\fD,c}{h}\leq\varepsilon\right]\geq 1-\delta.
    \end{equation}
\end{defn}

The probability in \cref{eq:pac} is over the choice of the training set $\ccS$ and any internal randomization of the learner $A$. We also note that in  \cref{def:pac}, there is an implicit \textit{assumption of realizability}, i.e., with probability $1$ over $\ccS$, there exists $h_{\mathrm{opt}}\in\cH$ s.t. $\err{\fD,c}{h_{\mathrm{opt}}}=0$.

\subsubsection{Sample Complexity and Learnability}
In \cref{def:pac}, $m$ is the \textbf{sample complexity} of the learner $A$ for learning $\cC$. If $m=\poly{\nicefrac{1}{\varepsilon},\nicefrac{1}{\delta},n,\size{\cH_A}}$, then the concert class $\cC$ is said to be PAC-learnable by $A$ using the hypothesis class $\cH_A$. A concept class $\cC$ is \textit{learnable} or PAC-learnable using $\cH_A$ if there exists some $\varepsilon,\delta,m_0$ for which \cref{eq:pac} holds.
\begin{remark}
    \cref{def:pac} follows \citet{uml2014}, where the notion of sample complexity is decoupled from the definition of PAC learning. This is done in order to drive home the distinction between learnability and efficient learnability. In other standard textbooks and sources, it is common to find $m=\poly{\nicefrac{1}{\varepsilon},\nicefrac{1}{\delta},n,\size{\cH_A}}$ directly in \cref{def:pac} itself.
\end{remark}
\subsubsection{Time Complexity and Efficient Learnability} If $A$ $(\varepsilon,\delta)$-PAC learns $\cC$ in polynomial time, i.e., in time $\mathrm{poly}{\left(\nicefrac{1}{\varepsilon},\nicefrac{1}{\delta},n,\size{\cH_A}\right)}$, then we say that the concept class $\cC$ is \textbf{efficiently PAC learnable} (or simply, \textbf{efficiently learnable}) by $A$ using the hypothesis class $\cH_A$. For efficient learners, $\size{\cH_A}=\poly{\nicefrac{1}{\varepsilon},\nicefrac{1}{\delta},n,\size{\cC}}.$ We note here that not all \textit{learnable} concept classes with respect to a learner $A$ are \textit{efficiently learnable} with respect to $A$.

\begin{remark}\label{rem:representationchoice}
    All Boolean functions are efficiently PAC learnable if the representation scheme of the underlying concept class is extremely expressive/ verbose, for example, the concept class is described using truth tables. The interesting case is when the underlying concept is represented using Boolean functions, which are believed to have \textit{limited expressivity}, such as Decision trees, since extremely simple Boolean circuits can compute them. We refer the reader to \cref{sec:hypoappendix} for more discussions on hypothesis classes associated with learning algorithms.
\end{remark}

While there are many possible ways to classify PAC learning algorithms, in this survey, we focus on the sample complexity and time complexity of PAC learners (specifically the separation between classical and quantum PAC learners) under various noise models, and with respect to various oracle queries. 

\begin{remark}
One broad classification umbrella not included in this survey deals with the representation and accuracy of the output hypothesis produced by the learner. We refer the reader to \cref{sec:tax1} for a brief discussion on the above taxonomy.
\end{remark}

\section{Query Complexity Separations}\label{sec:query}
\subsection{PAC Learning with queries}\label{sec:prequery}
The learning algorithm $A$ in \cref{def:pac} $(\varepsilon,\delta)$-\PAC learns $\cC$ with respect to random examples (provided in the form of the training set $\ccS$). Alternatively, we can also assume that the learner $A$ can make black-box queries, i.e., the learner has access to an oracle $\cO$. Learning with respect to queries is primarily used to study generalizations or restrictions of PAC learning setups under various forms of labeling noise. For a more detailed discussion on label noise in the PAC framework, the reader is referred to \cref{sec:prenoise}. In this generalized model, we redefine the notion of \PAC learning as follows:

\begin{defn}[\PAC learning with Oracle access]\label{def:pacoracle}
    A learner $A$ $(\varepsilon,\delta)$-\PAC learns concept class $\cC$  over $n$-bit instances, with query access to oracle $\cO$, using hypothesis class $\cH_A$, if for every $\varepsilon,\delta\in(0,1)$, for every distribution $\fD$ over $\cX$, and every target concept $c\in\cC$,  $A$ takes as input a training set $\ccS$ consisting of $m=\poly{\nicefrac{1}{\varepsilon},\nicefrac{1}{\delta},n,\size{\cH_A}}$ sampled i.i.d. according to $\fD$ and labeled by $c$, and makes at most $m$ queries to $\cO$ and outputs $h$ s.t. the error of $h$ is at most $\varepsilon$ with probability at least $1-\delta$.
\end{defn}

\subsubsection{PAC learning with Sampling Oracles}

Sampling oracles are also known as example query oracles.

\subsubsection{Classical Example Query Oracles}\label{sec:orclclassical}

\par\noindent\textbf{The \ac{EX} Oracle:}
Querying the classical random \textbf{example oracle} $\ex{\fD}{c}$ generates labeled examples of the form $(x,c(x))$ where $x\sim\fD$. Using the \ac{EX} oracle, we can redefine the sample complexity and time complexity of \PAC learning concept $c\in\cC$ using $A$ in terms of the number of queries $A$ makes to $\ex{\fD}{c}$ as follows.

\begin{defn}[$\left(\varepsilon,\delta\right)$-PAC learning with access to $\mathrm{EX}$ oracle]\label{def:pacex}
    
    A learner $A$ $(\varepsilon,\delta)$-\PAC learns concept class $\cC$  over $n$-bit instances, with query access to $\ex{\fD}{c}$, using hypothesis class $\cH_A$, if for every $\varepsilon,\delta\in(0,1)$, for every distribution $\fD$ over $\cX$, and every target concept $c\in\cC$,  $A$ makes at most $m=\poly{\nicefrac{1}{\varepsilon},\nicefrac{1}{\delta},n,\size{\cH_A}}$ queries to $\ex{\fD}{c}$ and outputs $h$ s.t. $\prob\left[\err{\fD,c}{h}\leq\varepsilon\right]\geq 1-\delta$.  
\end{defn}

\par\noindent\textbf{The \ac{REX} Oracle:}
Querying an $\rex{\fD}{c}{\eta}$ oracle generates labeled examples of the form $\{(x_i,y^{\prime}_i)\}_{i\in[m]}$, where $y^{\prime}_i=\overline{y_i}$ w.p. $\eta$ and $y^{\prime}_i=y_i$ w.p. $1-\eta$. Hence, \PAC learning in the \RCN setting can be thought of as the learner $A$ making queries to an \ac{REX} oracle.

\par\noindent\textbf{The \ac{AEX} Oracle:}
Similar to the \RCN setting, \PAC learning in the agnostic setting can be thought of as the learner $A$ making queries to an \ac{AEX} oracle, where invoking an $\aex{\fD}$ oracle generates samples $(x,y)\sim\fD$ according to some unknown joint distribution $\fD$ over $\cX\times\cY$. 
The notion of learnability in the agnostic case is defined as follows~\cite{kalai2008agnostic}.
\begin{restatable}[$\beta$-optimal $(\varepsilon,\delta)$-agnostic PAC learning]{defn}{betaoptimal}\label{def:agnosticpacv1}\sloppy
    A learner $A$ $\beta$-optimally $(\varepsilon,\delta)$-agnostic PAC learns a benchmark concept class $\cC$  over $n$-bit instances, with query access to $\aex{\fD}$, using hypothesis class $\cH_A$, if for every $\varepsilon,\delta\in(0,1)$, for some $\beta\in[0,\nicefrac{1}{2})$, for every distribution $\fD$ over $\cX\times\cY$,
    $A$ makes at most $m=\poly{\nicefrac{1}{\varepsilon},\nicefrac{1}{\delta},n,\size{\cH_A}}$ queries to $\aex{\fD}$ and outputs $h$ s.t. $\prob\left[\err{\fD}{h}\leq\opterr{\fD}{\cC}+\beta+\varepsilon\right]\geq 1-\delta$ for some $\beta\in\left[0,\nicefrac{1}{2}\right)$. 
\end{restatable}

Learnability (and efficient learnability) in \cref{def:agnosticpacv1} implies that the number of queries made by $A$ to $\aex{\fD}$ (time taken by $A$ to output $h$, respectively) is polynomially bounded, similar to \cref{def:pacex}. In \cref{def:agnosticpacv1}, if there exists a concept $c\in\cC$ s.t. $\err{\fD}{c}:=\prob_{(x,y)\sim\fD}\left[c(x)\neq y\right]=0,$ then $\opterr{\fD}{\cC}=\beta$, then we are almost back in the noiseless PAC learning setting. 

\begin{obs}
    The concept class $\cC$ serves as a \textit{benchmark} for the learner $A$ in the agnostic setting. If we choose a very bad benchmark concept class, i.e., $\opterr{\fD}{\cC}>\nicefrac{1}{2}$, then the best agnostic learner will produce a hypothesis with accuracy less than $\nicefrac{1}{2}$.
\end{obs}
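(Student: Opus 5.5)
The plan is to unpack the agnostic guarantee of \cref{def:agnosticpacv1} and compare it with the trivial accuracy threshold. First, fix a joint distribution $\fD$ over $\cX\times\cY$ and a benchmark class $\cC$ with $\opterr{\fD}{\cC}>\nicefrac{1}{2}$. The learner $A$ only promises, with probability at least $1-\delta$, a hypothesis $h$ with
\begin{equation*}
\err{\fD}{h}\leq \opterr{\fD}{\cC}+\beta+\varepsilon .
\end{equation*}
Since $\beta\geq 0$, $\varepsilon>0$, and $\opterr{\fD}{\cC}>\nicefrac{1}{2}$, the right-hand side is strictly larger than $\nicefrac{1}{2}$. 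So the guarantee only says that the error is at most some quantity above $\nicefrac{1}{2}$. It therefore says nothing beyond ``accuracy $1-\err{\fD}{h}$ is at least something below $\nicefrac{1}{2}$.''

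Second, to justify the phrase ``the best agnostic learner will produce'' such a hypothesis, I would read the observation as a statement about what the benchmark certifies. Even a learner meeting the definition with $\beta=0$ and $\varepsilon\to 0$ only guarantees accuracy
\begin{equation*}
1-\opterr{\fD}{\cC}<\nicefrac{1}{2}.
\end{equation*}
To show this cannot be improved in the worst case \emph{purely from the definition}, I would give a witness. Take $\cH_A=\cC$, and let $\fD$ be supported on pairs $(x,y)$ for which every $c\in\cC$ disagrees with $y$ with probability exceeding $\nicefrac{1}{2}$. For example, let $\cC$ consist of a single concept $c$, and let $y=\overline{c(x)}$ with probability $p>\nicefrac{1}{2}$. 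Then every $h\in\cH_A$ has $\err{\fD}{h}=\opterr{\fD}{\cC}>\nicefrac{1}{2}$, so the output hypothesis necessarily has accuracy below $\nicefrac{1}{2}$.

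The main obstacle is interpretive rather than technical. The observation is informal, and outside a proper-learning restriction a learner could output $\overline{h}$ and beat $\nicefrac{1}{2}$. I would therefore state explicitly that the claim concerns the guarantee certified relative to the benchmark, or equivalently proper learning with $\cH_A=\cC$. I would also remark that negating hypotheses shows such a benchmark is ``bad'' in the sense of being uninformative.
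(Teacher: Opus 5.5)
Your argument is correct. The paper gives no proof of this observation; it is a remark meant to follow directly from \cref{def:agnosticpacv1}, and your first paragraph is precisely that reasoning: $\opterr{\fD}{\cC}+\beta+\varepsilon>\nicefrac{1}{2}$, so the guarantee only certifies accuracy below $\nicefrac{1}{2}$.

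Your caveat is also warranted. Read literally, the claim fails for improper learners, since $\overline{c}$ has error $1-\err{\fD}{c}<\nicefrac{1}{2}$ for every $c\in\cC$. For proper learners ($\cH_A=\cC$) the claim is immediate from $\err{\fD}{h}\geq\opterr{\fD}{\cC}$ for all $h\in\cC$. Your single-concept witness therefore only serves to show that the hypothesis $\opterr{\fD}{\cC}>\nicefrac{1}{2}$ can actually occur.
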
 

\begin{obs}
 In learning under classical sampling oracles, the Learner $A$ obtains copies of a convex combination, where the weights are Probability Masses.
 
 $$\sum_{x\in\mathcal{X}} \fD_x\;\; {(x,c(x))},\;\; \sum_{x\in\mathcal{X}} \fD_x = 1,\;\; \fD:\mathcal{X}\xrightarrow{}[0,1].\; \text{(Noiseless setting.)}$$     
 $$\sum_{(x,y)\in\mathcal{X}\times\mathcal{Y}} \fD_{x,y}\;\; {(x,y)},\;\; \sum_{(x,y)\in\mathcal{X}\times\mathcal{Y}} \fD_{x,y} = 1,\;\; \fD:\mathcal{X}\times\mathcal{Y}\xrightarrow{}[0,1].\; \text{(Agnostic setting.)}$$
    
\end{obs}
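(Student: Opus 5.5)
The statement is an observation rather than a theorem: a single query to a classical sampling oracle returns a random labeled example, and so its output is a probability mixture over the finite set of possible labeled examples, with the mixing weights given by $\fD$. My plan is to read ``obtains copies of a convex combination'' as the claim that the law of each oracle response is the displayed mixture, and that $m$ i.i.d. queries yield $m$ independent copies of that mixture. The point of writing it this way is to contrast it later with quantum example oracles, where the weights become amplitudes $\sqrt{\fD_x}$ in a superposition.

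\textbf{Noiseless case.} I will show that for every fixed labeled pair $(x',b)$ the probability that one query to $\ex{\fD}{c}$ returns $(x',b)$ equals $\fD_{x'}\mathbb{I}_{(b=c(x'))}$. By the definition of $\ex{\fD}{c}$, the oracle draws $x\sim\fD$ and outputs $(x,c(x))$. The event that the output is $(x',b)$ is therefore $\{x=x'\}\cap\{b=c(x')\}$, whose probability is as claimed. Summing over $x'\in\cX$, the distribution of the output is exactly $\sum_{x}\fD_x\,(x,c(x))$, viewed as a formal convex combination of point masses. It is a convex combination because $\fD:\cX\to[0,1]$ with $\sum_x\fD_x=1$, which holds since $\fD$ is a distribution. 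Since $\cX\subseteq\ftn$ is finite, the sum is a finite mixture and no measure-theoretic issues arise.

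\textbf{Agnostic case.} The same argument applies to $\aex{\fD}$, which outputs $(x,y)\sim\fD$ for a joint distribution $\fD$ on $\cX\times\cY$. Here the output law is literally $\sum_{(x,y)}\fD_{x,y}\,(x,y)$, again with nonnegative weights summing to $1$.

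\textbf{Repeated queries.} Each query is an independent draw, so the training set $\ccS$ obtained from $m$ queries is distributed as the $m$-fold product of the mixture. This is the sense of ``copies'' in the statement. For the \ac{REX} oracle, if one wants to include it, the weights become $\fD_x(1-\eta)$ on $(x,c(x))$ and $\fD_x\eta$ on $(x,\overline{c(x)})$. This is a special case of the agnostic form.

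The only real obstacle is interpretive rather than technical: fixing a precise meaning for a ``convex combination of labeled examples.'' I will formalize it as a probability measure, namely a convex combination of Dirac measures $\delta_{(x,c(x))}$, after which everything reduces to the definitions of the oracles.
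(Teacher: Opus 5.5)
Your proposal is correct and follows the same route as the paper. The paper gives no separate proof because the observation is just the definition of $\ex{\fD}{c}$ and $\aex{\fD}$: each query is an independent draw from $\fD$, labeled by $c$ in the noiseless case, and you unwind exactly that, with the mixture $\sum_x \fD_x\,\delta_{(x,c(x))}$ (equivalently the diagonal state $\sum_x \fD_x \ket{x,c(x)}\bra{x,c(x)}$ obtained by measuring a QEX sample) as the right reading of ``convex combination''.
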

\subsubsection{Quantum Example Query Oracles}\label{sec:orclquantum}
\par\noindent\textbf{The \ac{QEX} Oracle:}
\citet{bshouty95dnf} generalized the notion of \nameref{def:pacex} to the quantum setting by introducing the \ac{QEX} oracle (parameterized as $\qex{\fD}{c}$), and defining quantum \PAC learning with access to a \ac{QEX} oracle as follows.
\begin{equation}\label{eq:qexstate}
    \qex{\fD}{c}:\ket{0,0}\mapsto\sum_{x\in\ftn}\sqrt{\fD(x)}\ket{x,c(x)}.
\end{equation}
\par\noindent\textbf{The \ac{QREX} Oracle:}
We can define a quantum example oracle that captures the \RCN setting as follows.
\begin{equation}\label{eq:qrexstate}
    \qrex{\fD}{c}{p}:\ket{0,0}\mapsto\sum_{x\in\ftn}\sqrt{(1-p)\cdot\fD(x)}\ket{x,f(x)}+\sum_{x\in\ftn}\sqrt{p\cdot\fD(x)}\ket{x,\overline{f(x)}}.
\end{equation}

\begin{defn}[Quantum PAC learning with QEX  (resp. QREX) oracle]\label{def:pacqex}
    A quantum algorithm $A$ $(\varepsilon, \delta)$-quantum PAC
    learns concept class $\cC$  over $n$-bit instances, with query access to $\qex{\fD}{c}$ (resp. $\qrex{\fD}{c}{p}$), using hypothesis class $\cH_A$, if for every $\varepsilon,\delta\in(0,1)$, for every distribution $\fD$ over $\cX$, and every target concept $c\in\cC$,  $A$ makes at most $m=\poly{\nicefrac{1}{\varepsilon},\nicefrac{1}{\delta},n,\size{\cH_A}}$ queries to $\qex{\fD}{c}$ (resp. $\qrex{\fD}{c}{p}$) and outputs $h$ s.t. $\prob\left[\err{\fD,c}{h}\leq\varepsilon\right]\geq 1-\delta$.  
\end{defn}

\par\noindent\textbf{The \ac{QAEX} Oracle:}
\citet{arunachalam2017survey} generalized the \acf{AEX} oracle and the notion of \nameref{def:agnosticpacv1} to the quantum setting as follows:
\begin{equation}\label{eq:qaexstate}
\qaex{\fD}:\ket{0,0}\mapsto\sum_{(x,y)\in\bF^{n+1}_2}\sqrt{\fD(x,y)}\ket{x,y}.
\end{equation}
\begin{defn}[Quantum agnostic PAC learning with QAEX oracle]\label{def:pacqaex}
    A quantum learner $A$ $\beta$-optimally $(\varepsilon,\delta)$-PAC 
    learns a benchmark concept class $\cC$  over $n$-bit instances, with query access to $\qaex{\fD}$, using hypothesis class $\cH_A$, if for every $\varepsilon,\delta\in(0,1)$, for some $\beta\in[0,\nicefrac{1}{2})$, for every distribution $\fD$ over $\cX\times\cY$,
    $A$ makes at most $m=\poly{\nicefrac{1}{\varepsilon},\nicefrac{1}{\delta},n,\size{\cH_A}}$ queries to $\qaex{\fD}$ and outputs $h$ s.t. $\prob\left[\err{\fD}{h}\leq\opterr{\fD}{\cC}+\beta+\varepsilon\right]\geq 1-\delta$ for some $\beta\in\left[0,\nicefrac{1}{2}\right)$. 
\end{defn}
\begin{obs}
    If the learner is classical but the query oracles are quantum, the only recourse left to the learner is to measure the state in \cref{eq:qexstate,eq:qaexstate,eq:qrexstate} to obtain one labeled sample. Therefore, the $\ex{\fD}{c}$ and $\qex{\fD}{c}$ oracles (and their noisy versions, respectively) are equivalent to a classical learning algorithm.
\end{obs}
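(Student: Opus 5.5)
The plan is to argue the two directions of the equivalence separately. Throughout, fix a classical learner $A$ that is given the oracle $\qex{\fD}{c}$ (or $\qrex{\fD}{c}{p}$, or $\qaex{\fD}$). Since $A$ is classical, it cannot maintain coherence or apply unitaries to what the oracle returns. Its only way to extract classical information from an oracle call is to measure the prepared state. Without loss of generality we take this to be a measurement in the computational basis, because any other classical post-processing can be simulated after that measurement.

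For the first direction, compute the outcome distribution when $\sum_{x}\sqrt{\fD(x)}\ket{x,c(x)}$ is measured in the computational basis. By the Born rule, the outcome $(x,b)$ has probability $\abs{\sqrt{\fD(x)}}^2\mathbb{I}_{b=c(x)}=\fD(x)\,\mathbb{I}_{b=c(x)}$. This is exactly the distribution of a single draw from $\ex{\fD}{c}$.

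Repeating the same computation for \cref{eq:qrexstate} gives probability $(1-p)\fD(x)$ for $(x,c(x))$ and $p\,\fD(x)$ for $(x,\overline{c(x)})$, which is the $\rex{\fD}{c}{p}$ distribution. For \cref{eq:qaexstate} it gives $\fD(x,y)$, which is the $\aex{\fD}$ distribution. So each quantum oracle call, after measurement, is one sample from the corresponding classical oracle.

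For the converse, a classical learner with a classical example oracle can replace each call by an $\ex{\fD}{c}$ sample, and this produces an identical transcript distribution. Hence the two oracle models give the same success probability for every $A$, and the same query count, so sample and time complexity coincide up to the measurement overhead of $\bigO{n}$ per call.

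The main subtlety I expect is justifying the ``only recourse'' claim. One might worry that a classical learner could choose a cleverer measurement basis, or pass inputs other than $\ket{0,0}$. I would handle this by modelling the classical learner as having only computational-basis access, as the statement assumes. Any measurement it performs on a single copy therefore yields at most one labelled sample. A general POVM on one copy is a classical post-processing of the basis-measurement outcome only if it is diagonal, and that restriction is exactly what ``classical'' means here. I would state this as the modelling assumption rather than try to prove it.
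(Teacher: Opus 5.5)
Your proposal is correct and follows the same reasoning the paper uses, which appears only implicitly in the observation itself: a classical learner can only measure the oracle state in the computational basis, and by the Born rule each such measurement yields exactly one sample from $\ex{\fD}{c}$, $\rex{\fD}{c}{p}$, or $\aex{\fD}$ respectively. You go slightly further than the paper by stating explicitly that the ``only recourse'' claim is a modelling assumption (classical means diagonal-POVM access to a single copy) rather than something provable, and this is the right call: a non-diagonal measurement, such as Hadamard-basis Fourier sampling on $\qex{\cU}{c}$, would break the equivalence.
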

\begin{remark}
    In \cref{def:pacqex,def:pacqaex}, when we say that the quantum learner outputs a hypothesis, we mean that the quantum learner performs a POVM measurement such that each outcome of the measurement is associated with a hypothesis.
\end{remark}
\begin{remark}
    The concepts of learnability and efficient learnability in the quantum PAC setting are similar to those in the classical setting. The notions of complexity that we care about in the quantum PAC setting are, once again, the sample (and/or query) complexity and the time complexity of the learner.
\end{remark}
\begin{remark}
    We use $A^{\cO}$ to denote that learner $A$ makes queries to oracle $\cO$. 
\end{remark}

\begin{obs}
    In learning under quantum sampling oracles, the Learner $A$ obtains copies of a linear combination, where the weights are Probability Amplitudes.
    $$\sum_{x\in\mathcal{X}} \sqrt{\fD_x}\;\; {(x,c(x))},\;\; \sum_{x\in\mathcal{X}} \fD_x = 1,\;\; \fD:\mathcal{X}\xrightarrow{}[0,1].\; \text{(Noiseless setting.)}$$

    $$\sum_{(x,y)\in\mathcal{X}\times\mathcal{Y}} \sqrt{\fD_{x,y}} \;\;{(x,y)},\;\; \sum_{(x,y)\in\mathcal{X}\times\mathcal{Y}} \fD_{x,y} = 1,\;\; \fD:\mathcal{X}\times\mathcal{Y}\xrightarrow{}[0,1].\; \text{(Agnostic setting.)}$$
\end{obs}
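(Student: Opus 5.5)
The plan is to read the observation directly off the oracle definitions in \cref{eq:qexstate,eq:qaexstate}, and then check the normalisation and the contrast with the classical observation. First, for the noiseless case, I will note that a single call to $\qex{\fD}{c}$ on the fixed input $\ket{0,0}$ prepares the pure state
\[
\ket{\psi_{\fD,c}}=\sum_{x\in\cX}\sqrt{\fD(x)}\,\ket{x,c(x)} .
\]
This is a linear combination (a superposition) of the basis vectors $\ket{x,c(x)}$, and the coefficient of each basis vector is the amplitude $\sqrt{\fD_x}$. Since the learner's only access to the data is through such calls (\cref{def:pacqex}), $m$ queries give exactly $m$ copies $\ket{\psi_{\fD,c}}^{\otimes m}$. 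It is irrelevant whether these copies are used coherently or in sequence, because the oracle's input is always the same.

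Second, I will verify that the expression really is a quantum state. Its squared norm is $\sum_x \abs{\sqrt{\fD(x)}}^2=\sum_x\fD(x)=1$, where I use that the kets $\ket{x,c(x)}$ for distinct $x$ are orthonormal: they differ in the first register. Hence the constraint on the weights is the $\ell_2$ condition $\sum_x(\sqrt{\fD_x})^2=1$, which is precisely the condition $\sum_x\fD_x=1$ on the underlying distribution. This is the sense in which the weights are probability amplitudes rather than probability masses.

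The agnostic case is identical with $\qaex{\fD}$ in place of $\qex{\fD}{c}$. The basis vectors become $\ket{x,y}$ for $(x,y)\in\cX\times\cY$, which are orthonormal. The amplitudes are $\sqrt{\fD_{x,y}}$, and normalisation follows from $\sum_{x,y}\fD_{x,y}=1$.

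To connect this with the earlier observation about classical oracles, I will note that measuring a copy in the computational basis gives outcome $(x,c(x))$ with probability $\abs{\sqrt{\fD_x}}^2=\fD_x$, by the Born rule. So the convex combination of the classical observation is recovered by measurement, which matches the remark that a classical learner with a quantum oracle only sees samples from $\ex{\fD}{c}$.

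The only point needing care is to separate this coherent superposition from the mixed state $\sum_x\fD_x\ket{x,c(x)}\bra{x,c(x)}$. These are different objects: they have different off-diagonal terms, which is what later enables Fourier sampling. I would state this explicitly so that ``linear combination'' is read at the level of state vectors rather than density matrices. Beyond that, the argument is definitional, and I do not expect a real obstacle.
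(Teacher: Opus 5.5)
Your proposal is correct and matches the paper, which states this observation without proof as a direct reading of the oracle definitions in \cref{eq:qexstate,eq:qaexstate}; your normalisation check, the Born-rule link back to the classical observation, and the pure-versus-mixed distinction are all accurate additions. The one step to keep scoped is ``$m$ queries give exactly $\ket{\psi_{\fD,c}}^{\otimes m}$'': it holds because the oracle only ever acts on $\ket{0,0}$, which is the copies model the survey adopts, and it is not the whole story under the white-box unitary access discussed later in the survey, where inverses are available.
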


\begin{remark}
    Recently, \citet{arunachalam2025efficientlylearningdepth3circuits} introduced a stronger model of quantum agnostic learning under which they obtained state-of-the-art learnability results similar to the ones described in \cref{sec:qltime}. In their model, \citet{arunachalam2025efficientlylearningdepth3circuits} consider that a benchmark quantum phase state that encodes information about the target concept class. The input
\end{remark}
        
\subsubsection{PAC learning with Membership Query Oracles}\label{sec:orclmq}
One of the most important oracles for PAC learning is the Membership Query oracle $\mem{c}$. 
Querying the $\mem{c}$ oracle with an instance $x\in\ftn$ returns the instance's label $c(x)$ directly. As we can see, the \MQ oracle allows us to make very powerful \textit{out-of-distribution} queries. 
\subsection{Power of Classical Membership Queries}\label{sec:clmem}
Learning with the membership queries is an \textit{active} model of learning, where the learner can choose to query information about any instance-label pair it wishes, compared to the \textit{passive} learning models in which the learner can only obtain instance-label pairs that are sampled according to an unknown distribution. We state two results below that give some intuition into the relative strength between the two query models of learning.
\citet{bshouty1995exact} proved the following result.
\begin{lem}[Learnability of Boolean functions]\label{lem:exactlearnability}\sloppy
    Any $n$-bit Boolean function is PAC learnable with Membership queries in polynomial time in $n$ and its \DNF size or its decision tree size.
\end{lem}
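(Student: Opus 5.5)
This is Bshouty's monotone theory result (``Exact learning Boolean functions via the monotone theory''). The plan is to prove exact learnability with membership and equivalence queries in time $\poly{n,\text{DNF-size},\text{CNF-size}}$, and then get the PAC statement by the standard conversion. Each equivalence query is replaced by a batch of $O\!\left(\frac{1}{\varepsilon}\left(\log\frac{1}{\delta}+\log t\right)\right)$ random examples; here $t$ bounds the number of equivalence queries. The conversion then yields a PAC learner with membership queries. Decision-tree size $s$ bounds both DNF and CNF size by $s$, so the decision-tree case follows. For general DNF size, the result is stated this way in the literature only when the CNF size is also polynomially related. I would follow Bshouty's formulation, where the complexity is polynomial in the DNF size plus the CNF size. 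If needed, this can be weakened to the ``monotone dimension'' version.

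\textbf{Key structural step.} For $a\in\ftn$, define the $a$-monotone closure $M_a(f)$. This is the smallest function that is monotone with respect to the order $x\le_a y$, meaning $y$ agrees with $a$ wherever $x$ does, and that lies above $f$.

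I would first show that $f=\bigwedge_{a\in S}M_a(f)$ whenever $S$ hits every clause of a CNF for $f$. Concretely, for each clause there is an $a\in S$ falsifying it. Each $M_a(f)$ has a DNF with at most $|\mathrm{DNF}(f)|$ terms: take each term of $f$ and drop the literals that $a$ satisfies.

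\textbf{Algorithm.} Maintain a set $S$ of bases, initially empty, and for each $a\in S$ a set of $a$-minimal positive points. The hypothesis is $h=\bigwedge_{a\in S}\bigvee_{x}M_a(\text{term generated by }x)$.

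\begin{itemize}
\item On a \emph{negative} counterexample $z$ (so $h(z)=1$ but $f(z)=0$), add $z$ as a new basis.
\item On a \emph{positive} counterexample $y$, some conjunct $a$ rejects it. Walk $y$ toward $a$ bit by bit, using membership queries to keep the label $1$, until it is $a$-minimal. Then add it to that conjunct.
\end{itemize}

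The walk costs $O(n)$ membership queries per step.

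\textbf{Counting argument.} By construction each conjunct stays below $M_a(f)$. Each positive counterexample yields a new term of $M_a(f)$, so there are at most $|\mathrm{DNF}(f)|$ per basis. Each negative counterexample falsifies a CNF clause not yet covered, so there are at most $|\mathrm{CNF}(f)|$ bases.

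The main obstacle is this last claim: that a negative counterexample genuinely makes progress on a new clause. This needs the lemma that $M_z(f)(z)=0$ and that $z$ is covered by some clause not hit by the earlier bases. I would prove it by taking the clause of $f$ falsified by $z$ and showing that no existing $a\in S$ falsifies it, since otherwise $M_a(f)(z)=0$ and hence $h(z)=0$.
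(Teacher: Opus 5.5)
Your plan is Bshouty's monotone-theory (CDNF) algorithm. That is exactly the result the paper relies on: it cites it and gives no proof of its own. Your caveat is also right. The theorem gives time polynomial in $n$, the DNF size \emph{and} the CNF size, so the lemma's ``DNF size'' alone overstates it. Time polynomial in DNF size alone would settle the open problem of distribution-free PAC learning of DNF with membership queries. The decision-tree case does follow, as you say.

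Two things in the write-up need fixing.

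First, your verbal definition of $\le_a$ is reversed. As written, $a$ is the \emph{top} element, so $M_z(f)(z)=1$ whenever $f\not\equiv 0$, and the rule ``drop the literals $a$ satisfies'' is wrong. You need $x\le_a y$ iff $x$ agrees with $a$ wherever $y$ does (i.e.\ $x\oplus a\le y\oplus a$ bitwise), so that $a$ is the least element. This is the orientation your walk toward $a$, your term-dropping rule, and $M_z(f)(z)=f(z)=0$ all assume.

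Second, you assert but do not prove the bound of $|\mathrm{DNF}(f)|$ positive counterexamples per basis, and this is where the walk matters.
\begin{itemize}
\item Fix a minimal DNF $T_1\vee\dots\vee T_s$ of $f$. Let the walked point $y'$ be locally $a$-minimal, meaning no single flip toward $a$ keeps the label $1$, and suppose $T_j(y')=1$.
\item Then $y'$ is the unique $\le_a$-least satisfying point of $T_j$. Otherwise some coordinate where $y'$ differs from $a$ is not forced by a literal of $T_j$ that is false at $a$, and flipping it keeps $T_j$, hence $f$, true.
\item So the added term is exactly $M_a(T_j)$. It is new because $y'\le_a y$ and the conjunct rejected $y$.
\end{itemize}

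With that, the number of equivalence queries is at most $1+k+ks$ for CNF size $k$ and DNF size $s$. Since $k$ and $s$ are unknown to the learner, do not size the batches by $t$. Instead, simulate the $i$-th equivalence query with $\lceil\frac{1}{\eps}(\ln\frac{1}{\delta}+i\ln 2)\rceil$ examples, so the failure probabilities sum to at most $\delta$.
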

Recall that \nameref{def:pacex} is nothing but vanilla \PAC learning, reformulated using query access to \ac{EX} oracle. Hence, \cref{lem:exactlearnability} immediately implies a possible learning separation between \PAC learning with \ac{EX} queries and \MQ queries for \DNF, which is not known to be \PAC learnable with \ac{EX} queries.
The following proposition, which follows from similar arguments by~\cite{arunachalam17sample,apolloni1998sample}, gives an unconditional (albeit polynomial) separation between \MQ oracles and \ac{EX} oracles.

\begin{prop}[\MQ vs \ac{EX} oracles]\label{prop:mqvex}
    Suppose a (publicly known) set of instances $\ccS$ is the largest set shattered by a binary concept class $\cC$, and let $\fD:\ccS\mapsto[0,1]$ be a distribution that puts most of its weight on a particular instance as follows: $$\fD(x) = \begin{cases}
            1-3\varepsilon & \;\;,x=x^\prime, \varepsilon=\nicefrac{1}{\poly{\abs{\ccS}}}\\
            \frac{3\varepsilon}{\abs{\ccS}}, & \;\;,x\in\ccS\setminus\{x^\prime\}.
        \end{cases}$$
    Then, any $(\varepsilon,\delta)$\PAC learner for $c$ with respect to $\fD$ making queries to a $\mem{c}$ oracle has polynomially smaller query complexity than a learner making queries to an $\ex{\fD}{c}$ oracle. 
\end{prop}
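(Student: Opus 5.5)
The plan is to bound the two query complexities separately and compare them. Let $d=\abs{\ccS}$, the VC dimension of $\cC$. Because $\ccS$ is shattered, every labeling of $\ccS$ is realized by some concept in $\cC$. The target can therefore be an arbitrary labeling of $\ccS$, and learning $c$ under $\fD$ amounts to learning enough of its labels on $\ccS$.

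\textbf{Upper bound for $\mem{c}$.} The learner knows $\ccS$ and $\fD$. It queries $\mem{c}$ on $x^\prime$ and on a set $T\subseteq\ccS\setminus\{x^\prime\}$, then outputs any concept consistent with the answers; such a concept exists by shattering. The hypothesis errs only on the unqueried points. Each of these carries mass $3\varepsilon/d$, so choosing $\abs{T}\ge 2d/3$ makes the total error at most $(d/3)(3\varepsilon/d)=\varepsilon$. The learner is deterministic, so $\delta$ plays no role, and it uses $\bigO{d}$ queries. (One can also make the threshold exactly tight by choosing $\abs{T}=\lceil 2d/3\rceil$.)

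\textbf{Lower bound for $\ex{\fD}{c}$.} This follows the argument of \cite{arunachalam17sample}. Take $c$ uniformly random among the labelings of $\ccS$ and fix $c(x^\prime)$, since that point reveals nothing. A sample reveals a point of $\ccS\setminus\{x^\prime\}$ only with probability $3\varepsilon$. So with $m$ samples, the expected number of distinct light points seen is at most $3\varepsilon m$.

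If $m\le d/(c_0\varepsilon)$ for a suitable constant $c_0$, then with constant probability (by Markov's inequality) at least about $d/2$ light points remain unseen. Their labels are independent uniform bits, so any hypothesis errs on each of them with probability $1/2$. The expected error is then at least about $(d/4)(3\varepsilon/d)>\varepsilon/2$, and a reverse-Markov or Chernoff step converts this into $\prob[\err{\fD,c}{h}>\varepsilon]>\delta$ for small constant $\delta$.

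\textbf{Main obstacle.} The difficult step is making the constants consistent, because the mass $3\varepsilon$ is only three times the target error. Counting carefully gives an error of roughly $3\varepsilon/4$ from the unseen half. The fix I would try first is to rerun the argument with $\varepsilon$ replaced by a constant fraction $\varepsilon/c_1$: lower-bound the error against threshold $\varepsilon/c_1$, or equivalently require more unseen points (say a $5/6$ fraction, achievable by shrinking $c_0$). Alternatively, I would invoke the known $\Omega(d/\varepsilon+\log(1/\delta)/\varepsilon)$ lower bound directly.

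\textbf{Conclusion.} This yields $\Omega(d/\varepsilon)$ \ac{EX} queries against $\bigO{d}$ \MQ queries. Since $\varepsilon=1/\poly{d}$, the factor $1/\varepsilon$ is a polynomial gap in $\abs{\ccS}$, which is the claimed separation.
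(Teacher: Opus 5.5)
Your proposal follows the paper's proof: an $\bigO{\abs{\ccS}}$ membership-query learner that learns the labels of about two-thirds of $\ccS$, compared against the $\Omega(d/\varepsilon)$ example lower bound behind Lemma 12 of \cite{arunachalam17sample}, which the paper simply cites and you sketch via the unseen-points argument. Of your fixes for the constants, the one that works for this fixed $\fD$ is forcing more than a $2/3$ fraction of the light points to stay unseen, which needs a larger $c_0$, not a smaller one; lowering the error threshold to $\varepsilon/c_1$ would not lower-bound an $(\varepsilon,\delta)$-learner.
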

\begin{proof}
    \par\noindent\textbf{Learning with respect to \MQ oracle:} 
    Since $\cC$ has VC-dim $\abs{\ccS}$, there exists a concept $c\in\cC$, s.t. $c(x^\prime)=0$, and assigns some labeling in $2^{\abs{\ccS}-1}$ to other elements of the instance space. Let $A$ be a \PAC learner for $c$ with respect to $\fD$ that makes at most $Q_{\text{\MQ}}$ queries to an $\mem{c}$ oracle. $A$ requires $Q=\bigO{\frac{2}{3}\abs{\ccS}}$ queries to \PAC learn $c$ with respect to $\fD$, since we need to learn the labels of at most $\frac{2}{3}\abs{\ccS}$ instances to $\varepsilon$-approximate $c$.

    \par\noindent\textbf{Learning with respect to \ac{EX} oracle:} From Lemma 12 of \citet{arunachalam17sample}, we obtain that $Q_{\text{\ac{EX}}}=\poly{\abs{\ccS}}$ queries, straightforwardly.  This is polynomially larger than the \MQ case.
\end{proof}
Both \cref{lem:exactlearnability,prop:mqvex} show why active learning with \MQ oracles is more powerful than relying on sampling queries for learning. In the next section, we show that there also exists a notion of separation between quantum active queries and quantum sampling queries. Despite the advantages, there have been criticisms of \MQ oracles, outlined in \cref{sec:orclcritism}.

\subsection{Power of Quantum Membership Query Oracles}\label{sec:biasedqueryoracles}
The quantum \MQ oracle $\mathrm{QMEM}(f):\ket{x}\mapsto \ket{x}\ket{f(x)}$ is simply a function oracle ({see~\cref{fig:differentoracles}}), and ubiquitous in designing quantum algorithms such as \cref{alg:FourierSampling}. We now introduce a nomenclature for biased quantum membership query oracles.
\begin{defn}[$\varepsilon$-Biased oracles]\label{def:epsbiasedoracles}
    A quantum example query oracle $\cO^{\varepsilon}_f$ corresponding to a Boolean function $f$ is called an $\varepsilon$-biased for some $0<\varepsilon<\frac{1}{2}$, when 
    \begin{equation*}
        \cO^{\varepsilon}_f:\ket{0}^{\otimes n}\ket{0}\mapsto \sum_{x\in\ftn}\alpha\ket{x}\ket{f(x)}+\sqrt{1-\alpha^2}\ket{x}\ket{\overline{f(x)}},\;\;\;\alpha>\frac{1}{2}+\varepsilon.
    \end{equation*}
\end{defn}
\ac{QREX} oracles is an $\varepsilon$-biased oracle. Recall that an unbiased oracle is an $\varepsilon$-biased oracle with $\alpha=1$.
\citet{iwama2005general} proved the following lemma:
\begin{restatable}[Simulatability of Unbiased Oracles by $\varepsilon$-Biased Oracles]{lem}{iwama}\label{lem:iwama}
    For any $\bigO{T}$ query quantum algorithm that solves a problem with high probability with query access to an unbiased oracle, there exists an $\bigO{\frac{T}{\eps}}$ query quantum algorithm that solves the same problem with high probability with query access to an $\eps$-biased oracle. 
\end{restatable}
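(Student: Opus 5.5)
The strategy is to simulate each unbiased query $O_f:\ket{x}\ket{b}\mapsto\ket{x}\ket{b\oplus f(x)}$ using $\bigO{1/\eps}$ calls to the $\eps$-biased oracle, with an error per simulated query small enough that the total over $T$ queries stays a small constant. The difficulty is that a naive approach, such as majority voting over $\bigO{\log T/\eps^2}$ biased answers followed by uncomputation, costs $\bigO{T\log T/\eps^2}$. Reaching the claimed $\bigO{T/\eps}$ needs two ingredients: amplitude amplification in place of classical repetition, which trades $1/\eps^2$ for $1/\eps$, and robust error accumulation in place of union-bound error reduction, which removes the $\log T$ factor.

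Concretely, I would work with the biased oracle in its coherent form. Applied to an input register $\ket{x}$ with a fresh ancilla, it produces $\alpha_x\ket{x}\ket{f(x)}+\sqrt{1-\alpha_x^2}\ket{x}\ket{\overline{f(x)}}$ (up to garbage), with $\alpha_x^2\ge \frac12+\eps$ after normalising the bias. Both this unitary and its inverse are available. I would then run amplitude amplification, or equivalently a fixed-point / phase-estimation variant, on the reflection about the "good" subspace in which the answer bit equals $f(x)$. The key observation is that the marked subspace depends on $f(x)$, which we do not know. So I would amplify the estimate of the \emph{bias sign} $\alpha_x^2-(1-\alpha_x^2)$ by phase estimation on the Grover iterate. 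Its eigenphase $\theta_x$ satisfies $\sin^2\theta_x=\alpha_x^2$, and $\alpha_x^2$ is bounded away from $\frac12$ by $\eps$. This means distinguishing $\theta_x>\pi/4+\Omega(\eps)$ from $\theta_x<\pi/4-\Omega(\eps)$ takes $\bigO{1/\eps}$ applications and yields $f(x)$ with constant error. After computing the answer, we XOR it into the target and run the estimation in reverse, so that the garbage is uncomputed.

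Next comes error accumulation. A constant-error simulation per query would destroy a $T$-query algorithm. Following the robust-oracle idea, I would instead boost each simulated query to error $1/\poly{}$ and pay a $\log T$ factor, then argue that this factor is removable. The intended route is the Høyer–Mosca–de Wolf style robust search argument, which shows that errors from imperfect queries can be bounded in total by the sum of the per-query deviations in norm, not in probability. Combined with a recursive boosting schedule, this keeps the total overhead at $\bigO{T/\eps}$. Phase estimation with $\bigO{1/\eps\cdot\log(1/\gamma)}$ queries gives error $\gamma$; the point is to make the overall deviation $\sum_i\gamma_i$ constant while the average cost per query stays $\bigO{1/\eps}$.

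The main obstacle is exactly this last step: showing that the $\log T$ overhead disappears. My first attempt would be a hybrid argument in $\ell_2$ norm. The final-state deviation is at most $\sum_t\norm{(\tilde O_t-O_f)\ket{\psi_t}}$, so I would aim for per-query norm error $\bigO{1/\sqrt T}$ only "on average", combined with the observation that the error of a phase-estimation-based oracle is coherent and can be partially cancelled. If this fails, I would fall back to citing Iwama et al.'s own amplification construction, treating the biased oracle as a noisy bit with amplitude bias.
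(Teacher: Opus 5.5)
The paper gives no proof of this lemma; it is quoted from Iwama et al. So I am judging your argument on its own terms, and it has a genuine gap exactly where you flag one.

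Your per-query step is sound once you mark the subspace ``answer bit $=1$'' (probability $\alpha^2$ or $1-\alpha^2$) rather than the unknown ``answer $=f(x)$''. But it gives a simulated query with only constant error. Driving that to the $\bigO{1/T}$ norm error that $T$ queries require costs $\bigO{\eps^{-1}\log T}$ per query, and none of your suggestions removes the $\log T$:
\begin{itemize}
\item The hybrid bound $\sum_t\norm{(\tilde O_t-O_f)\ket{\psi_t}}$ adds norm errors linearly. An average per-query error of $\bigO{1/\sqrt{T}}$ therefore yields $\bigO{\sqrt{T}}$, not a constant, and you give no mechanism by which the coherent errors would cancel.
\item The H{\o}yer--Mosca--de Wolf recursion exploits the specific structure of Grover search. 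It is not a general tool for making an arbitrary $T$-query algorithm robust to bounded-error inputs with constant overhead, and no such general result is available to invoke.
\item The ``recursive boosting schedule'' is never specified, and falling back on citing Iwama et al.\ is not a proof.
\end{itemize}
What your route actually establishes is $\bigO{T\log T/\eps}$.

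The missing idea is the hypothesis you dropped when you wrote $\alpha_x$. In the paper's definition of an $\eps$-biased oracle, a single amplitude $\alpha$ serves every $x$; that is exactly what distinguishes it from the strongly biased oracle. With a uniform bias, each simulated query can be made \emph{exact}, so nothing accumulates.

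Write $\alpha^2=\tfrac12+\gamma$ with $\gamma\ge\eps$ (your normalisation). Then $U=O^\dagger(I\otimes Z\otimes I)O$ is a reflection with $\bra{x,0}U\ket{x,0}=2\gamma(-1)^{f(x)}$, i.e.\ a block-encoding of $\mathrm{diag}_x\bigl(2\gamma(-1)^{f(x)}\bigr)$. First dilute the bias by coherently flipping the answer bit with a suitable probability after each call, so that $2\gamma$ becomes exactly $\sin(\pi/(2d))$ for an odd $d=\bigO{1/\eps}$. Then $d$ qubitization steps $\bigl((2\Pi-I)U\bigr)^d$ produce the block $C_d\bigl(\sin(\pi/(2d))(-1)^{f(x)}\bigr)=\pm(-1)^{f(x)}$, where $C_d$ is the Chebyshev polynomial, $C_d(\cos\phi)=\cos(d\phi)$. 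Because this block is unitary, the circuit implements the phase oracle with zero error using $\bigO{1/\eps}$ calls to $O$ and $O^\dagger$. Applying the same construction to $b\wedge f(x)$ gives the controlled form, and hence the XOR form, of the oracle. The total is $\bigO{T/\eps}$; this is just exact amplitude amplification with a known success probability.

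If $\gamma$ is unknown, estimate it once, on a fixed input, to additive accuracy $\gamma/T$ with $\bigO{T/\eps}$ queries. Every simulated query is then $\bigO{1/T}$-close in norm, your hybrid bound sums this to a constant, and the count stays $\bigO{T/\eps}$. With genuinely $x$-dependent biases $\alpha_x$, neither exactness nor one-shot calibration is available, which is why your argument stalls at $\log T$.
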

The notion of \nameref{def:epsbiasedoracles} was later generalized by \citet{chatterjee2024efficient}, who introduced the following oracle model.
\begin{defn}[Strongly Biased oracles]\label{def:stronglybiasedoracles}
    A quantum example query oracle $\cO^{\varepsilon_x}_f$ corresponding to a Boolean function $f$ is called a strongly biased for some $0<\varepsilon_x<\frac{1}{2},\forall x\in\ftn$, when 
    \begin{equation*}
        \cO^{\varepsilon_x}_f:\ket{0}^{\otimes n}\ket{0}\mapsto \sum_{x\in\ftn}\alpha_x\ket{x}\ket{f(x)}+\sqrt{1-\alpha_x^2}\ket{x}\ket{\overline{f(x)}},\;\;\;\alpha_x>\frac{1}{2}+\varepsilon_x,\forall x\in\ftn.
    \end{equation*}
\end{defn}
Note that learning under the sampling oracle $\ac{QAEX}$ is a stronger model of learning than learning under a strongly biased oracle. To concretely demonstrate that there exists a separation between sampling and active queries even in the quantum setting, we first reprove a result of \citet{bshouty95dnf} showing that \ac{QEX} oracles cannot exactly \textit{simulate} $\mathrm{QMEM}$ oracles w.r.t. the uniform distribution. First, we start by defining the notion of exact simulatability and then extending this to the notion of approximate simulatability. Then, we generalize the result of \citet{bshouty95dnf} to show that \ac{QEX} oracles cannot even approximately \textit{simulate} $\mathrm{QMEM}$ oracles w.r.t. the uniform distribution.

\begin{defn}[Exact Simulatibility of $\mathrm{QMEM}$]\label{def:exactsim}
    A class of quantum (resp. classical) example oracles $\cO$ can exactly simulate $\mathrm{QMEM}$ (resp. \MQ) for a Boolean concept class $\cF$ w.r.t. a distribution $\cD$, if there exists a $\BQP$ (resp. $\BPP$) algorithm $A$ s.t. for all $f\in\cF$ and all $x\in\ftn$, running $A$ with access to $\cO\left(\fD,f\right)$ on $x$ can produce $f(x)$. 
\end{defn}
\begin{thm}[Exact Simulatibility via QEX~\citet{bshouty95dnf}]\label{lem:exactsim}
    \ac{QEX} cannot exactly simulate $\mathrm{QMEM}$ w.r.t. $\cU\left(\ftn\right)$.
\end{thm}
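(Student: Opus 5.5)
We argue by contradiction, following the classic argument of \citet{bshouty95dnf}: exhibit two concepts whose \ac{QEX} states under $\cU\left(\ftn\right)$ are nearly indistinguishable, yet which disagree on some fixed input $x^\ast$. Any algorithm that outputs $f(x^\ast)$ must then distinguish the two states, which the quantum state discrimination bound (\cref{fact:discrimination}) forbids with few queries.

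\textbf{Choice of hard pair.} Fix $x^\ast\in\ftn$. Take $f_0\equiv 0$ and let $f_1$ be the point function with $f_1(x^\ast)=1$ and $f_1(x)=0$ otherwise; the concept class is $\cF=\{f_0,f_1\}$. Let $\ket{\psi_b}$ denote the state that $\qex{\cU}{f_b}$ prepares from $\ket{0,0}$. The two states differ only in the single basis term indexed by $x^\ast$, so
\[
\bra{\psi_0}\ket{\psi_1}=1-2^{-n}.
\]
By \cref{def:tracedis}, $D(\ket{\psi_0},\ket{\psi_1})=\sqrt{1-(1-2^{-n})^2}\le \sqrt{2}\cdot 2^{-n/2}$.

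\textbf{Accounting for queries.} A $\BQP$ algorithm makes $T=\poly{n}$ queries. I will treat each query as producing one copy of $\ket{\psi_b}$, so the algorithm's view is $\ket{\psi_b}^{\otimes T}$ followed by arbitrary processing. The overlap of the $T$-fold products is $(1-2^{-n})^{T}\ge 1-T2^{-n}$, hence
\[
D\left(\ket{\psi_0}^{\otimes T},\ket{\psi_1}^{\otimes T}\right)\le \sqrt{2T}\,2^{-n/2}=o(1).
\]
Running the algorithm on input $x^\ast$ and reading off $f_b(x^\ast)=b$ would identify $b$. By \cref{fact:discrimination}, the success probability is therefore at most $\tfrac12+o(1)$, which rules out computing $f(x^\ast)$ with bounded error (say $\ge 2/3$) for every $f\in\cF$. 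This contradicts \cref{def:exactsim}.

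\textbf{Main obstacle.} The delicate point is the access model. If the algorithm may call the oracle as a unitary, adaptively, and possibly its inverse, the tensor-power bound does not apply directly. I would handle this with a hybrid argument over the $T$ oracle calls. The natural unitary extension $U_b$ of each oracle acts as a state-preparation map. I would construct $U_0$ and $U_1$ so that they differ only by a rotation of angle $O(2^{-n/2})$ in the two-dimensional span of $\ket{\psi_0}$ and $\ket{\psi_1}$, and agree elsewhere. Then $\norm{U_0-U_1}=O(2^{-n/2})$. Summing this per-call difference over $T$ calls bounds the final-state distance by $O(T2^{-n/2})$, which is still negligible for polynomial $T$, and the same conclusion follows.
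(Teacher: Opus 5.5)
Your proposal is correct and is essentially the paper's proof. It uses the same pair ($f\equiv 0$ versus the point function at $x^\ast$), the same reduction to discriminating $T$-fold tensor powers of the \ac{QEX} states, and the same appeal to \cref{fact:discrimination}, and your overlap $(1-2^{-n})^{T}$ is computed more carefully than in the paper. Your hybrid-argument paragraph goes beyond the paper, which works in the copies-of-states model, and it is sound only because you choose the unitary extensions $U_0,U_1$ yourself. For the natural extension $\mathrm{QMEM}(f)\left(H^{\otimes n}\otimes I\right)$, unitary access to the \ac{QEX} oracle already yields $\mathrm{QMEM}$ with one call, so that caveat is essential.
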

\begin{proof}
    Consider two Boolean functions $f$ and $g$ that differ only in a single string $x^\prime\in\ftn$, e.g., let $f(x)=0$ for all $x\in\ftn$, and $$g(x)=\begin{cases}
        1\;\;,x=x^\prime,\\
        0\;\;,\text{otherwise}.
    \end{cases}$$
    Consider a Boolean function $h$ with the promise that it is either $f$ or $g$. Suppose a quantum learner $A$ makes $Q$ queries to a \ac{QEX} oracle $\qex{\cU}{h}$ and wants to figure out which oracle it is querying. This task is the same as trying to discriminate between the following pair of states:
    \begin{equation}
        \ket{\psi_f}={\left(\sum_{x\in\ftn}\sqrt{\cU(x)}\ket{x,f(x)}\right)}^{\otimes Q}\;\;\;,\;\;\;
        \ket{\psi_g}={\left(\sum_{x\in\ftn}\sqrt{\cU(x)}\ket{x,g(x)}\right)}^{\otimes Q}.\label{eq:statesfordis}
    \end{equation}
    From \cref{fact:discrimination}, we know that the best success probability of distinguishing between $\ket{\psi_f}$ and $\ket{\psi_g}$ is $\frac{1}{2}+\frac{1}{2}D\left(\ket{\psi_f},\ket{\psi_g}\right)$. If this probability is greater than $1-\delta$ for any $\delta\in(0,\nicefrac{1}{2})$, then $\braket{\psi_f}{\psi_g}\leq2\sqrt{\delta\cdot(1-\delta)}$. Therefore, 
    \begin{align}
        \braket{{\psi_f}}{\psi_g}&\leq2\sqrt{\delta\cdot(1-\delta)}\\
        \implies {\left(\sum_{x\neq x^{\prime}} \sqrt{\cU(x)}\right)}^{Q}&\leq2\sqrt{\delta\cdot(1-\delta)}\\
        \implies {\left(1-{\cU(x^\prime)}\right)}^{Q/2}&\leq2\sqrt{\delta\cdot(1-\delta)}.
    \end{align}
    Let $\varepsilon={\cU(x^\prime)}=\frac{1}{2^n}$. Then, we have,
    \begin{align}
        {\left(1-\varepsilon\right)}^{Q/2}&\leq2\sqrt{\delta\cdot(1-\delta)}\\
        \implies {\left(1-\varepsilon\right)}^{Q}&\leq4\delta\\
        \implies Q\log\left(1-\varepsilon\right)&\leq\log\left(4\delta\right)\label{eq:logineqquery1}\\
        \implies \frac{-Q\varepsilon}{1-\varepsilon}&\leq\log\left(4\delta\right)\label{eq:logineqquery2}.
    \end{align}
The inequality in \cref{eq:logineqquery2} follows from applying the inequality $\log(1+x)\geq\nicefrac{x}{1+x},\,\forall x>-1$ to the LHS of \cref{eq:logineqquery1}. Hence, we have that $Q=\bigOmega{\frac{1}{\varepsilon}\log\left(\frac{1}{\delta}\right)}$, i.e., the number of queries is superpolynomial. Therefore, there exists no query-efficient learner $A^{\text{\ac{QEX}}}$ that can distinguish between $\ket{\psi_f}$ and $\ket{\psi_g}$ with prob $\geq 1-\delta$.
\end{proof}
\begin{remark}[Unconditional separation between $\mathrm{QMEM}$ and \ac{QEX} w.r.t. uniform]
    Even though $\mathrm{QMEM}$ cannot be exactly simulated by \ac{QEX} w.r.t. uniform, we can simulate the \ac{QEX} oracle by $1$ query to the $\mathrm{QMEM}$ oracle as follows:
    \begin{equation}
        \mathrm{QMEM}(f)H^{\otimes n}\ket{0}^{\otimes n}\ket{0} = \sum_{x\in\ftn}\frac{1}{2^{n/2}}\ket{x,f(x)}
    \end{equation}
This shows us that at least w.r.t. the uniform distribution, $\mathrm{QMEM}$ queries are unequivocally stronger than \ac{QEX} queries, and therefore, also its noisy counterparts in \ac{QREX} and \ac{QAEX} queries.
\end{remark} 
The proof of \cref{lem:exactsim} shows us that if there exist two functions that differ on exactly one coordinate, no polynomial query quantum algorithm can distinguish the corresponding quantum states. We now formalize this notion to define the concept of approximate simulatability. 

\begin{defn}[$\varepsilon$-far functions]
    A pair of Boolean functions $f,g$ are $\varepsilon$-far if $f\neq g$ only on an $\varepsilon$-fraction of the domain, for any $\varepsilon>0$.
\end{defn}

We are now interested in the following question: Can a learning algorithm with access to (promise) sampling oracles efficiently distinguish between $f$ and $g$ w.h.p., for any $\varepsilon$-far pair $f,g\in\cF$?

\begin{defn}[Approximate Simulatibility of $\mathrm{QMEM}$]
    Consider an arbitrary pair of $\varepsilon$-far Boolean functions $\{f,g\}$, for $\varepsilon>0$, and let $h\in\{f,g\}$.
    For any $\delta>0$, a class of quantum (resp. classical) example oracles $\cO$ can $(\varepsilon,\delta)$-simulate (or approximately simulate) $\mathrm{QMEM}$ (resp. \MQ) for a Boolean concept class $\cF$ w.r.t. a distribution $\fD$, if there exists a BQP (resp. BPP) algorithm $A$ with query access to any $\cO\left(\fD,h\right)$ $A$ can figure out if $h=f$ or if $h=g$, w.p. $\geq 1-\delta$.
\end{defn}

\begin{thm}
    QEX cannot $(\varepsilon,\delta)$-simulate $\mathrm{QMEM}$ w.r.t. $\cU\left(\{0,1\}^n\right)$.
\end{thm}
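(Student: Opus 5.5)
The plan is to generalize the proof of \cref{lem:exactsim} from a single differing string to an $\varepsilon$-fraction of differing strings. The statement is really about small $\varepsilon$, such as $\varepsilon = 2^{-n}$ or any negligible $\varepsilon$. For constant $\varepsilon$, a QEX learner can obviously tell the two functions apart with $\bigO{\frac{1}{\varepsilon}\log\frac{1}{\delta}}$ measured samples. So I will read ``cannot simulate'' as follows: there exist $\varepsilon$-far pairs, with $\varepsilon$ superpolynomially small, for which no polynomial-query QEX algorithm succeeds. I would then show that the query lower bound is $\bigOmega{\frac{1}{\varepsilon}\log\frac{1}{\delta}}$, which is superpolynomial in that regime.

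\textbf{Step 1.} Fix any $\varepsilon$-far pair $f,g$ and let $S=\{x : f(x)\neq g(x)\}$, so that $\abs{S}=\varepsilon 2^n$. Put
\[
\ket{\phi_f}=\sum_{x}2^{-n/2}\ket{x,f(x)}, \qquad \ket{\phi_g}=\sum_{x}2^{-n/2}\ket{x,g(x)}.
\]
Terms with $x\in S$ land on orthogonal label basis states, so $\braket{\phi_f}{\phi_g}=\sum_{x\notin S}2^{-n}=1-\varepsilon$. A $Q$-query QEX algorithm can be assumed, without loss of generality, to receive $Q$ copies, since QEX is state preparation from $\ket{0,0}$. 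I will note explicitly that if the learner may also call the inverse oracle, the overlap argument must be replaced by a hybrid argument. In that case each query perturbs the state by at most $\norm{U_f-U_g}$ on the relevant input, and I would bound this by $\bigO{\sqrt{\varepsilon}}$ via the same inner product. The lower bound would then become $\bigOmega{1/\sqrt{\varepsilon}}$, which is still superpolynomial.

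\textbf{Step 2.} Apply \cref{fact:discrimination} to $\ket{\psi_f}=\ket{\phi_f}^{\otimes Q}$ and $\ket{\psi_g}=\ket{\phi_g}^{\otimes Q}$, whose overlap is $(1-\varepsilon)^Q$. Success probability at least $1-\delta$ forces $(1-\varepsilon)^{Q}\le 2\sqrt{\delta(1-\delta)}$. The same chain of inequalities as in \cref{eq:logineqquery1,eq:logineqquery2} then gives
\[
Q\ge \frac{1-\varepsilon}{\varepsilon}\,\log\frac{1}{2\sqrt{\delta(1-\delta)}}=\bigOmega{\frac{1}{\varepsilon}\log\frac{1}{\delta}}.
\]
Note that here there is no square root lost relative to the exact case: the overlap is $1-\varepsilon$ directly.

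\textbf{Step 3.} Conclude by choosing $\varepsilon=1/\omega(\poly{n})$, for instance $\varepsilon=2^{-n/2}$. Then no BQP algorithm with QEX access distinguishes $h=f$ from $h=g$, whereas a single QMEM query at any $x\in S$ does so. This shows that QEX cannot $(\varepsilon,\delta)$-simulate QMEM.

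The main obstacle is the formal mismatch in the definition, which quantifies over ``an arbitrary pair'' and leaves the size of $\varepsilon$ unstated. I would handle this by making the parameter regime explicit as above. The secondary technical point is the possible inverse-oracle access, which I would address with the hybrid argument.
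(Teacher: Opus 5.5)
Your argument is correct and is essentially the paper's proof. The $Q$-fold product states have overlap $(1-\varepsilon)^Q$, so \cref{fact:discrimination} forces $(1-\varepsilon)^Q\le 2\sqrt{\delta(1-\delta)}$, and the $\log(1+x)\ge x/(1+x)$ bound then gives $Q=\bigOmega{\frac{1}{\varepsilon}\log\frac{1}{\delta}}$. The paper instantiates this with $\varepsilon=\poly{n}/2^n$ via a $\poly{n}$-size disagreement set; you allow any superpolynomially small $\varepsilon$ and make explicit the parameter regime the paper leaves implicit. Your overlap computation, summing $\cU(x)$ rather than $\sqrt{\cU(x)}$, is also the correct one.

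Your inverse-access remark is not needed, since the paper works with copies of states, and as stated it does not hold. The quantity $\norm{U_f-U_g}$ is not determined by the overlap on $\ket{0,0}$. For example, the valid implementation $U_f=O_f(H^{\otimes n}\otimes I)$ satisfies $U_f(H^{\otimes n}\otimes I)=O_f$, which hands the learner $\mathrm{QMEM}$ outright.
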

\begin{proof}
    Suppose $f$ and $g$ are Boolean functions that differ on a polynomial sized subset $\ccS\subset\ftn$, i.e., $\abs{\ccS}=\poly{n}$. Therefore, $\norm{f-g}_1=\frac{1}{2^n}\sum_x\abs{f(x)-g(x)}=\frac{\abs{\ccS}}{2^n}=\nicefrac{\poly{n}}{2^n}=\varepsilon$. Suppose a quantum learner $A$ makes $Q$ queries to a \ac{QEX} oracle $\qex{\cU}{h}$ and wants to discriminate between the states $\ket{\psi_f}$ and $\ket{\psi_g}$ as described in \cref{eq:statesfordis}. From \cref{fact:discrimination}, we know that the best success probability of distinguishing between $\ket{\psi_f}$ and $\ket{\psi_g}$ is $\frac{1}{2}+\frac{1}{2}D\left(\ket{\psi_f},\ket{\psi_g}\right)$. If this probability is greater than $1-\delta$ for any $\delta\in(0,\nicefrac{1}{2})$, then $\braket{\psi_f}{\psi_g}\leq2\sqrt{\delta\cdot(1-\delta)}$. Therefore, 
    \begin{align}
        \braket{{\psi_f}}{\psi_g}&\leq2\sqrt{\delta\cdot(1-\delta)}\\
        \implies {\left(\sum_{x\notin \ccS} \sqrt{\cU(x)}\right)}^{Q}&\leq2\sqrt{\delta\cdot(1-\delta)}\\
        \implies {\left(1-\sum_{x\in\ccS}\sqrt{\cU(x)}\right)}^{Q}&\leq2\sqrt{\delta\cdot(1-\delta)}\\
        \implies {\left(1-\varepsilon\right)}^{Q}&\leq2\sqrt{\delta\cdot(1-\delta)}\\
        \implies {\left(1-\varepsilon\right)}^{2Q}&\leq4\delta\\
        \implies 2Q\log\left(1-\varepsilon\right)&\leq\log\left(4\delta\right)\label{eq:logineqquery3}\\
        \implies \frac{-2Q\varepsilon}{1-\varepsilon}&\leq\log\left(4\delta\right)\label{eq:logineqquery4}.
    \end{align}
    The inequality in \cref{eq:logineqquery4} follows from applying the inequality $\log(1+x)\geq\nicefrac{x}{1+x},\,\forall x>-1$ to the LHS of \cref{eq:logineqquery3}.
    Plugging in the value of $\varepsilon$, we see that $Q=\bigOmega{\frac{2^n}{\poly{n}}\log\left(\frac{1}{\delta}\right)}$. Hence, there does not exist any query efficient quantum learner $A^{\text{\ac{QEX}}}$ that can distinguish between $\ket{\psi_f}$ and $\ket{\psi_g}$ with probability $\geq 1-\delta$.
\end{proof}

\begin{table}[htbp]
    \centering
    \setlength{\tabcolsep}{3pt} 
    \renewcommand{\arraystretch}{1.4}
    \caption{A Taxonomy of Classical and Quantum PAC Learning Oracles. Classical \textbf{Sampling} oracles produce a random sample, while Quantum \textbf{Sampling} oracles produce a quantum state. \textbf{Active} oracles allow arbitrary queries. The representation column contrasts classical bit-pairs with quantum superposition states.}
    \label{tab:oracle_taxonomy}
    \begin{tabular}{@{} l l l l @{}}
        \toprule
        \textbf{Name} & \textbf{Paradigm} & \textbf{Description} & \textbf{Sample/State Obtained} \\ 
        \midrule
        \multicolumn{4}{@{}l}{\textit{\textbf{Classical Oracles}}} \\
        \textbf{EX} & Sampling & Standard Example & $(x, c(x))$ where $x \sim \fD$ \\
        \textbf{REX} & Sampling & R. Class. Noise & $(x, y')$ where $y'$ flipped w.p. $\eta$ \\
        \textbf{AEX} & Sampling & Agnostic Example & $(x, y) \sim \fD_{\cX \times \cY}$ (Joint Dist.) \\
        \textbf{MEM} & Active & Membership Query & Input $x \to$ Returns $c(x)$ \\
        \midrule
        \multicolumn{4}{@{}l}{\textit{\textbf{Quantum Oracles}}} \\
        \textbf{QEX} & Sampling & Quantum Example & $\sum_{x} \sqrt{\fD(x)} \ket{x, c(x)}$ \\
        \textbf{QREX} & Sampling & Quantum RCN & $\sum_{x} (\sqrt{(1-p)\fD(x)}\ket{x, c(x)} + \newline \sqrt{p\fD(x)}\ket{x, \overline{c(x)}})$ \\
        \textbf{QAEX} & Sampling & Quantum Agnostic & $\sum_{(x,y)} \sqrt{\fD(x,y)} \ket{x, y}$ \\
        \textbf{QMEM} & Active & Quantum Mem. & $\ket{x}\ket{0} \mapsto \ket{x}\ket{c(x)}$ (Superposition) \\
        \bottomrule
    \end{tabular}
\end{table}

\subsection{Separations in other Query Models}
To address concerns levied against the \textit{unrestricted} nature of queries the learner can make to an \MQ oracle, \citet{awasthi13} introduced the local \MQ oracle model, where the learner can only query the \MQ oracle on some neighborhood\footnote{Fix some metric over the instance space $\cX$, e.g., consider Hamming distance $d$ when instances are sampled from the Boolean hypercube. Now, we are only allowed to query samples that are within an open ball of radius $d$ around any instance in the training set.} of a training set $\ccS$ provided to the learner. Unfortunately, it was shown by \citet{bary20mq} that even though learning with $q$-local \MQ, for any constant $q>0$ is stronger than vanilla \PAC learning, it is not sufficient to learn relatively simple classes of Boolean functions such as juntas or Decision Trees in a distribution-free setting. A similar oracle model is the Random Walk Query (RWQ) oracle model introduced by \citet{bshouty03randomwalk}, where the first point given to the learner is uniformly sampled from $\ftn$, while every succeeding point is obtained by a uniform random walk over $\ftn$. 
\begin{remark}
    We note here that local \MQ and RWQ oracles can be used to obtain efficient \PAC learners for \DNF with respect to the uniform distribution~\cite{awasthi13,bshouty03randomwalk}.
\end{remark}

\citet{childs03glued} showed that there is an exponential separation between Random Walk Queries and Continuous time Quantum Walk Queries (CQWQ) when performing graph traversal over certain classes of finite graphs. The following questions immediately come to mind:

\begin{open}[Separation between Local MQ and CQWQ query learning]
    Can we find a natural concept class where there is a separation between learning with RWQ (or local MQ) queries and learning with CQWQ queries?
\end{open}

\begin{open}[Simulating QMQ using CQWQ queries]
    Can CQWQ queries exactly or approximately simulate QMQ w.r.t. $\cU\left(\{0,1\}^n\right)$.?
\end{open}

\section{ {Sample Complexity separations}}\label{sec:qlsamp}
In the noiseless setting, ~\citet{blumer89sample} and~\citet{hanneke16optimal} showed that the classical $(\varepsilon,\delta)$-PAC sample complexity of learning a concept class $\cC$ with VC-dimension $d$ is 
$\bigTheta{\frac{d}{\varepsilon}+\frac{\log{\left(\nicefrac{1}{\delta}\right)}}{\varepsilon}}$. In the agnostic setting, ~\citet{vapnik74theory} and~\citet{talagrand94agnostic} showed that the $(\varepsilon,\delta)$-agnostic PAC sample complexity of learning a concept class $\cC$ with VC-dimension $d$ is
$\bigTheta{\frac{d}{\varepsilon^2}+\frac{\log{\left(\nicefrac{1}{\delta}\right)}}{\varepsilon^2}}.$

We recall that having access to $k$ classical samples (or quantum samples) is analogous to making $k$ queries to the \textrm{EX} oracle (respectively, the \textrm{QEX} oracle) in the realizable case, and making $k$ queries to the \textrm{AEX} oracle (respectively, the \textrm{QAEX} oracle) in the agnostic case.
The first results on quantum vs classical sample complexity were given by \citet{gotler} who showed that quantum and classical sample complexities were polynomially related in the noiseless setting as $D=\bigO{nQ}$, for any concept class $\cC$ that is PAC learnable using $D$ queries to the \textrm{EX} oracle and $Q$ queries to the \textrm{QEX} oracle.

This result was later improved by \citet{arunachalam17sample}, who showed that the quantum sample complexity in both the noiseless and agnostic settings is equal to the respective classical sample complexities up to a constant factor, thereby proving that quantum samples are not inherently more powerful than classical samples. 
The proof techniques of \cite{arunachalam17sample} included a state identification argument using ``Pretty Good Measurement" and Fourier analysis. \citet{hadiashar24sample} later reproved the results of \cite{arunachalam17sample} using information-theoretic arguments.

\begin{lem}[Quantum Sample Complexity~\cite{arunachalam17sample,hadiashar24sample}]\label{lem:classicalequivquantumPACsample}
    Given a benchmark concept class $\cC$ with VC-dimension $d$, the $(\varepsilon,\delta)$-quantum PAC sample complexity of learning $\cC$ is
    $
    \bigTheta{\frac{d}{\varepsilon}+\frac{\log{\left(\nicefrac{1}{\delta}\right)}}{\varepsilon}}
    $,
    while the $(\varepsilon,\delta)$-quantum agnostic PAC sample complexity of learning $\cC$ is
    $
    \bigTheta{\frac{d}{\varepsilon^2}+\frac{\log{\left(\nicefrac{1}{\delta}\right)}}{\varepsilon^2}}.
    $
\end{lem}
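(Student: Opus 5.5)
Upper bounds are immediate, so the work is entirely in the lower bounds. A quantum learner with access to $\qex{\fD}{c}$ (resp.\ $\qaex{\fD}$) can measure each copy in the computational basis and obtain a classical labeled sample. It can then run the classical learners of \citet{hanneke16optimal} (realizable) or \citet{vapnik74theory,talagrand94agnostic} (agnostic). This gives $\bigO{\frac{d}{\varepsilon}+\frac{\log(1/\delta)}{\varepsilon}}$ and $\bigO{\frac{d}{\varepsilon^2}+\frac{\log(1/\delta)}{\varepsilon^2}}$ quantum samples respectively.

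For the lower bounds I would separate the $\log(1/\delta)$ term from the $d$ term. The $\log(1/\delta)/\varepsilon$ term follows from the two-state discrimination argument already used in \cref{lem:exactsim}. Take two shattered points $x_0,x_1$ with $\fD(x_0)=1-\varepsilon$, $\fD(x_1)=\varepsilon$, and two concepts differing only on $x_1$. The overlap of their $Q$-copy states is $(1-\varepsilon)^Q$. Success probability $1-\delta$ forces $(1-\varepsilon)^Q\le 2\sqrt{\delta(1-\delta)}$ by \cref{fact:discrimination}, hence $Q=\bigOmega{\log(1/\delta)/\varepsilon}$. In the agnostic case one instead uses label biases $\frac12\pm\varepsilon$ at a single point, giving an overlap of $(1-\bigO{\varepsilon^2})^Q$ and hence $\bigOmega{\log(1/\delta)/\varepsilon^2}$.

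For the $d$ term in the realizable case I would follow the information-theoretic route of \citet{hadiashar24sample}. Fix a shattered set $\{x_0,\dots,x_{d}\}$ and put mass $1-8\varepsilon$ on $x_0$ and $8\varepsilon/d$ on each other point. Take a uniformly random concept $c$ labelled by $a\in\{0,1\}^d$ drawn from an asymptotically good code, so that any hypothesis with error $\le\varepsilon$ determines $a$ up to decodable distance. A successful learner therefore recovers $\Omega(d)$ bits about $a$, so the mutual information between $a$ and the $Q$-copy state $\rho^{\otimes Q}$ must be $\Omega(d)$. On the other hand, by subadditivity this mutual information is at most $Q\cdot I(a;\rho)$, and the single copy carries only $\bigO{\varepsilon}$ bits. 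To see this, write the single-copy state as $\sqrt{1-8\varepsilon}\ket{x_0,0}+\sum_i\sqrt{8\varepsilon/d}\ket{x_i,a_i}$. Its average over $a$ has entropy $\bigO{\varepsilon\log(d/\varepsilon)}$, and a finer computation of the spectrum gives $\bigO{\varepsilon}$ up to log factors. Combining these yields $Q=\bigOmega{d/\varepsilon}$.

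The main obstacle is removing the log factor in this entropy bound to reach a tight $\Theta$. If the information argument only gives $d/(\varepsilon\log(d/\varepsilon))$, I would fall back on the Pretty Good Measurement analysis of \citet{arunachalam17sample}. That analysis writes the Gram matrix of the $2^d$ states $\ket{\psi_a}^{\otimes Q}$ as diagonal in the Fourier basis and bounds $\sum_a\sqrt{\lambda_a}$ to obtain the optimal success probability. The agnostic $d/\varepsilon^2$ bound is handled the same way, with label amplitudes biased as $\sqrt{\frac12\pm\varepsilon}$ on each shattered point, so that each copy carries $\bigO{\varepsilon^2}$ information per coordinate.
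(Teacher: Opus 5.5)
\textbf{The gap is in the $d$-terms.} Your upper bounds and the $\log(1/\delta)$ lower-bound terms are fine up to constants, with two small fixes. With $\fD(x_1)=\varepsilon$ the learner can always output $c_1$, which has error at most $\varepsilon$ against both concepts, so put mass $2\varepsilon$ on $x_1$. Also, VC-dimension $d$ only guarantees $d$ shattered points, so use $x_0$ plus $d-1$ others.

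For the $d$-terms, your claim that one copy carries $\bigO{\varepsilon}$ bits is false, and no finer spectral computation will rescue it. Take uniformly random labels, or any code with pairwise-independent coordinates (the average state depends only on pairwise marginals of $a$). Then the averaged single-copy state is exactly $\ket{v}\bra{v}+\frac{4\varepsilon}{d}\sum_i\ket{x_i,-}\bra{x_i,-}$, where $\ket{v}=\sqrt{1-8\varepsilon}\ket{x_0,0}+\sqrt{4\varepsilon}\,d^{-1/2}\sum_i\ket{x_i,+}$. Its eigenvalues are $1-4\varepsilon$ once and $4\varepsilon/d$ with multiplicity $d$. So its Holevo information is $\bigTheta{\varepsilon\log(d/\varepsilon)}$, while a classical example carries only $8\varepsilon$ bits about $a$.

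Any argument that tensorizes through $I(a;\rho_a^{\otimes Q})\le Q\,I(a;\rho_a)$ is therefore capped at $\bigOmega{d/(\varepsilon\log(d/\varepsilon))}$. The loss sits in the per-copy step itself, not in your estimate of it. Your agnostic sketch has the same defect: there the single-copy state has eigenvalues $1-\gamma^2$ and $\gamma^2/d$ (the latter $d$ times), where $\gamma^2=\bigTheta{\varepsilon^2}$, so it carries $\bigTheta{\varepsilon^2\log(d/\varepsilon)}$ bits, not $\bigO{\varepsilon^2}$.

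Falling back on the Pretty Good Measurement analysis of \citet{arunachalam17sample} is all the survey itself does; it gives no proof and only cites that paper and \citet{hadiashar24sample}. In your proposal, though, that fallback is a pointer rather than an argument, so the tight $d/\varepsilon$ and $d/\varepsilon^2$ terms are not actually established.

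\textbf{How to repair the information-theoretic route.} Keep your skeleton, but bound the Holevo information of the $Q$-copy ensemble directly instead of per copy.

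Write $\ket{\psi_a}=\sqrt{1-8\varepsilon}\ket{x_0,0}+\sqrt{8\varepsilon}\ket{\phi_a}$, with $\ket{\phi_a}$ in the $2d$-dimensional span of the $\ket{x_i,b}$. Let $K\sim\mathrm{Bin}(Q,8\varepsilon)$ count the tensor factors lying outside $\ket{x_0,0}$; its law does not depend on $a$. Pinching $\rho_Q=\mathbb{E}_a(\ket{\psi_a}\bra{\psi_a})^{\otimes Q}$ by the value of $K$ can only increase entropy. The $K=k$ block lies in a copy of the symmetric subspace of $(\mathbb{C}^{2d})^{\otimes k}$, whose dimension is $\binom{2d+k-1}{k}\le\bigl(e(2d+k)/k\bigr)^k$.

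Since the states are pure, and using concavity of $k\mapsto k\log(e(2d+k)/k)$,
\[
I(a;\rho_a^{\otimes Q})=S(\rho_Q)\le H(K)+\mathbb{E}\bigl[K\log\tfrac{e(2d+K)}{K}\bigr]\le\bigO{\log(2+\varepsilon Q)}+8\varepsilon Q\log\tfrac{e(2d+8\varepsilon Q)}{8\varepsilon Q}.
\]
Comparing this with the $\bigOmega{d}$ bits forced by Fano gives $\varepsilon Q=\bigOmega{d}$ once $d$ exceeds a constant. Smaller $d$ is covered by the two-point bound.

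For the agnostic case, write each label qubit as $\alpha\ket{+}+(-1)^{a_i}\gamma\ket{-}$ with $\gamma^2=\frac{1}{2}\bigl(1-\sqrt{1-4\beta^2}\bigr)=\bigTheta{\beta^2}$. The example is then $\alpha\ket{u_+}+\gamma\ket{w_a}$, where $\ket{u_+}=d^{-1/2}\sum_i\ket{x_i,+}$ does not depend on $a$ and $\ket{w_a}$ lies in the $d$-dimensional span of the $\ket{x_i,-}$. The same pinching argument with $K\sim\mathrm{Bin}(Q,\gamma^2)$ gives $Q=\bigOmega{d/\beta^2}=\bigOmega{d/\varepsilon^2}$. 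This needs the bias $\beta$ to be a large enough constant multiple of $\varepsilon$ that $\varepsilon$-optimality forces agreement with $a$ on all but a small constant fraction of coordinates, matching your code's distance.
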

When marginal distributions over the instances are assumed, however, \cref{lem:classicalequivquantumPACsample} may not hold. For example, \citet{Caro2020} gives quantum sample complexity speedups for Boolean linear functions under bounded noise settings when the samples are drawn from a product distribution. This raises the following question:
\begin{open}[Quantum Sample Complexity separations under non-uniform marginals]\label{open:nonuniformsample}
    Does there exist a natural concept class of non-linear Boolean functions for which we can demonstrate quantum sample complexity speedups over non-uniform marginal distributions?
\end{open}
In the case of non-uniform marginals, one can restrict our attention to special marginals such as product distributions, smoothed product distributions, or isotropic log-concave distributions. We note that even in the case of a negative answer to \cref{open:nonuniformsample}, this still leaves open the possibility for computational separations, as discussed in upcoming sections.

\subsection{Generalized Quantum Example Oracles}\label{sec:mos}
Recently, \citet{caro2024ITCS} proposed a new type of quantum agnostic example oracle, which they term mixture-of-superpositions (MOS) oracle $\cO_{\fD}$. For any arbitrary distribution $\fD$ over $\cX\times\bF_2$, querying $\cO_{\fD}$ yields the mixed state 
\begin{equation}
    \rho_\fD=\expect{f\sim\cF_\fD}{\ket{\psi_{\fD_\cX,f}}\bra{\psi_{\fD_\cX,f}}},
\end{equation}
where $\fD_\cX$ is the marginal distribution over $\cX$, and $\cF_\fD$ is the probability distribution over all labeling functions $f:\cX\mapsto{\{0,1\}}$ induced by $\cD$. Measuring all $n+1$ qubits of $\rho_\fD$ produces a sample from $\fD$ as follows:
\begin{align}
    \bra{x,y}\rho_\fD\ket{x,y} &= \expect{f\sim\cF_\fD}{\abs{\bra{x,y}\ket{\psi_{\fD_\cX,f}}}} =\expect{f\sim\cF_\fD}{\fD_\cX(x)\cdot\mathbb{I}_{f(x)=y}}\\ 
    \implies \bra{x,y}\rho_\fD\ket{x,y}&= \fD_\cX(x)\cdot\underset{f\sim\cF_\fD}{\prob}\mathbb{I}_{f(x)=y}] = \fD(x,y).
\end{align}
The MOS oracle straightforwardly generalizes the $\mathrm{QEX}$, $\mathrm{QREX}$, and $\mathrm{QAEX}$ oracles, and it is admittedly more general than the $\mathrm{QAEX}$ oracle since it allows us to model correlated label noise. Later in \cref{sec:qllattices}, we shall see results that indicate the power of the MOS oracle. However, with respect to the sample complexity, \cite{caro2024ITCS} proves that in the agnostic setting, quantum (MOS) sample complexity is equivalent to classical sample complexity up to logarithmic factors. Concretely, the MOS quantum agnostic sample complexity is $\bigOmega{\frac{d}{\varepsilon^2\log{d}}+\frac{\nicefrac{1}{\delta}}{\varepsilon^2}}$. This naturally leads us to the following question.
\begin{open}[Optimal Mixture-of-Superposition Sample Complexity Lower Bound]\label{open:mos}
    Is the $\frac{d}{\varepsilon^2\log{d}}$ term in the mixture-of-superpositions sample complexity tight? 
\end{open}
If there is a positive answer to \cref{open:mos}, it would indicate that the MOS model provides a slight (but non-trivial) advantage over the standard quantum example models even in terms of sample complexity.

\subsection{Sample Complexity in the Multi-class setting}
\citet{mohan23sample} investigated a host of different learning settings, such as noisy and agnostic offline and online quantum PAC learning, where $\abs{\cY}<\infty$, and showed that quantum and classical sample complexities are equivalent up to constant factors in all but one of these cases (the bounds in the online multi-class agnostic setting are not tight and differ from the best-known classical bounds by a log factor\footnote{See section 5.4 of \cite{mohan23sample} for a detailed discussion on this setting}). The results of \citet{mohan23sample} naturally lead to the following questions.

\begin{open}[Multi-class Agnostic Sample Complexity Separation]
    Are the classical and quantum query/sample complexities equivalent (up to constant factors) in the online multi-class agnostic setting?
\end{open}
\begin{open}[Sample Complexity Separation for Unbounded Labels]
    Are the classical and quantum query/sample complexities equivalent (up to constant factors) in the unbounded label case, even in the realizable offline PAC setting?
\end{open}
\begin{open}[Multi-class Mixture-of-Superposition Sample Complexity]
    What is the mixture-of-superposition sample complexity~\cite{caro2024ITCS} in the multi-class realizable and agnostic settings?
\end{open}
Finally, we may ask the above questions in terms of other recently introduced quantum online learning frameworks as well.
\begin{open}[Sample Complexity separations in Different Frameworks]
    Does the online learning framework of \citet{mohan23sample} extend to the more general quantum online-learning framework introduced by \citet{bansal2024onlinelearningpanoplyquantum}?  
\end{open}

\subsection{Sample Complexity separations via Unitary Access} 
In \cref{def:pacqex,def:pacqaex}, instead of assuming query access to the $\mathrm{QEX}$ and $\mathrm{QAEX}$ oracles, the learner can instead be provided with access to copies of the corresponding quantum states in the RHS of \cref{eq:qexstate,eq:qaexstate}. The results of \cite{arunachalam17sample,hadiashar24sample,mohan23sample} also hold in this equivalent setting. The learning algorithms throughout this survey assume this setup.

On the other hand, if we assume a white-box setting, where the quantum learner has unitary access to the quantum oracles generating the quantum samples directly (instead of only being able to access the quantum samples), then \citet{salmon24sample} showed that the sample complexity of $(\eps,\delta)$- (noiseless) PAC learning a concept class $\cC$ with VC-dimension $d$ is $\tildeO{\frac{d}{\sqrt{\varepsilon}}+\frac{\log{\left(\nicefrac{1}{\delta}\right)}}{\sqrt{\varepsilon}}}.$ 
\begin{open}[White-Box Agnostic Sample Complexity Separations]
    Does the quantum sample complexity advantage of \citet{salmon24sample} extend to the agnostic setting?
\end{open}
The proof of \citet{salmon24sample} critically uses the fact that having unitary access implies having access to the inverses of the unitaries. One might ask the following question, in terms of a more grey-box setting.   
\begin{open}[Grey-Box Sample Complexity Separations]
    What is the quantum (realizable/agnostic) sample complexity of learning when the quantum learner has access to a quantum channel that generates quantum examples?
\end{open}
Here, we assume that having access to a quantum channel does not necessarily imply access to its inverse. This question becomes very interesting in light of a recent work by \citet{tang2025amplitudeamplificationestimationrequire}, who show that access to the inverse is a crucial component in quantum speedups involving amplification and estimation. Finally, we re-examine the sample complexity results of \citet{mohan23sample} under this new approach. 
\begin{open}[White-Box Multi-class Sample Complexity Separations]
    Does the quantum sample complexity advantage of \citet{salmon24sample} extend to the multi-class setting as explored by \citet{mohan23sample}?
\end{open}

\section{{ Time Complexity separations}}\label{sec:qltime}
The results in \cref{sec:qlsamp} are information-theoretic in nature and do not disallow quantum vs classical separations in the computational sense. In other words, the results in \cref{sec:qlsamp} indicate that every concept class $\cC$ that is learnable by quantum learners using a polynomial number of samples/queries is also learnable by classical learners using a polynomial number of samples/queries. However, \cref{lem:classicalequivquantumPACsample} does not imply that if a concept class $\cC$ is efficiently learnable by a quantum learner, it must be efficiently learnable by a classical learner. In this section, we investigate separations in the running times between quantum and classical learners.

\subsection{Separations via the Hidden Subgroup Problem}
\citet{gotler} first provided strong evidence that there is a separation between quantum and classical learning by using a concept class $\cC$ based on factoring Blum integers\footnote{A Blum integer is an integer $N=pq$ where $p,q$ are $m$-bit primes congruent to $3\mod{4}$.}~\cite{KV94blumintegers}. Factoring Blum integers is believed to be computationally intractable classically~\cite{KV94blumintegers,angluin95mem}, but can be efficiently solved by quantum algorithms for the Abelian Hidden Subgroup Problem (Abelian \HSP)~\cite{shor}. 

A similar separation between classical and quantum was later demonstrated by \citet{Liu2021}, using a concept class $\cC_{\text{DLP}}$ based on the \DLP. 
\begin{defn}[\DLP]
Given a large prime $p$, a generator $g$ of $\bZ^*_p=\{1,2,\ldots,p-1\}$, and an input $x\in\bZ^*_p$, compute $\log_g x$ in time $\poly{\clg{\log_2 p}}$.    
\end{defn}
It is widely believed that the \DLP problem is classically intractable~\cite{blum84crypto}, but we can efficiently solve the \DLP problem using Shor's algorithm~\cite{shor}. Using this technique, \cite{Liu2021} obtained a quantum kernel estimation procedure to learn $\cC_{\DLP}$. 
\subsubsection{The Hidden Subgroup Problem over finite Abelian groups}
The results obtained by \cite{gotler,Liu2021} are not surprising, especially in light of Simon's algorithm and Shor's algorithm, which are textbook examples of quantum speedups. Both Simon's algorithm and Shor's algorithm solve restricted instances of a more general problem known as the \textrm{Hidden Subgroup Problem} (\HSP), defined below.
\begin{defn}[The Hidden Subgroup Problem]\label{def:hsp}
    Let $\cG$ be a known finite group, $\ccS$ be a finite set, and a black-box function $f:\cG\mapsto\ccS$ s.t.
    $
    f(x)=f(y)\iff x\cH=y\cH,
    $
    where $\cH\leq\cG$ is some unknown subgroup, and $h\in\cH$. Determine a generating set for $\cH$.
\end{defn}

In \cref{def:hsp}, the black-box function $f$ is said to \textit{hide} the subgroup $\cH$. Alternatively, $f$ can be thought of as separating the cosets of $\cH$. Most exponential separations between quantum and classical computing are a consequence of quantum speedups with respect to \HSP over Abelian groups. 
\begin{example}
    Restricted versions of the Abelian-\HSP problem amenable to quantum speedups include 
    \begin{itemize}
        \item \textbf{Deutsch's problem:} $\cG=\bZ_2$ and $\cH$ is either $0$ (balanced) or $\bZ_2$ (constant),
        \item \textbf{Simon's problem:}  $\cG=\ztn$ and $\cH=\{0,s\}$, where $s\in\ztn$ is some secret, and
        \item \DLP: $\cG=\bZ_p\times\bZ_p$, and $\cH=\{(0,0),(1,\log_g x),\ldots,(p-1,(p-1)\log_g x)\}$.
    \end{itemize}
\end{example}
In fact, we know how to efficiently solve any Abelian \HSP problem.
\begin{lem}[Efficiently solving Abelian \HSP, Theorem 2.2 of \cite{ettinger2000}]\label{lem:HSPfinite}
    Let $\cG$ be a known finite group, $\ccS$ be a finite set, $f:\cG\mapsto\ccS$ be a black-box function that hides a subgroup $\cH\leq\cG$. There exists a quantum algorithm that has size $\bigO{\polylog{\abs{\cG}}}$ and runs in time $\log{\abs{\cG}}$ to output a generating set of $\cH$ w.p. at least $1-\nicefrac{1}{\abs{\cG}}$.
\end{lem}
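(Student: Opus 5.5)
The plan is to run the standard ``coset-state plus Fourier sampling'' algorithm for finite Abelian groups. By the structure theorem, $\cG\cong\bZ_{N_1}\times\cdots\times\bZ_{N_k}$ with $k\le\log\abs{\cG}$. I will assume this decomposition is known, since $\cG$ is a known group; if only generators are given, it can be computed with Shor-type order finding. The \QFT over $\cG$ is then the tensor product of the \QFT{}s over each $\bZ_{N_i}$. Each of these can be implemented approximately, to inverse-exponential precision, by circuits of size $\polylog{N_i}$. This gives total size $\polylog{\abs{\cG}}$.

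A single round proceeds as follows. Prepare $\frac{1}{\sqrt{\abs{\cG}}}\sum_{g\in\cG}\ket{g}\ket{0}$. Query the black box once to obtain $\sum_g\ket{g}\ket{f(g)}$, then measure (or discard) the second register. Because $f$ is constant on cosets and distinct across cosets, this leaves a uniformly random coset state $\ket{g\cH}=\frac{1}{\sqrt{\abs{\cH}}}\sum_{h\in\cH}\ket{g+h}$. Applying the \QFT over $\cG$ yields $\sum_{\chi\in\cH^\perp}\chi(g)\sqrt{\abs{\cH}/\abs{\cG}}\ket{\chi}$. Here $\cH^\perp$ is the annihilator, i.e.\ the characters trivial on $\cH$. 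The computation is the usual character-orthogonality sum, $\sum_{h\in\cH}\chi(h)=\abs{\cH}$ if $\chi\in\cH^\perp$ and $0$ otherwise. Measuring therefore returns a uniformly random element of $\cH^\perp$, and the global phase $\chi(g)$ is irrelevant.

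Next, repeat the round $t=\bigO{\log\abs{\cG}}$ times. The sampled characters generate $\cH^\perp$ with probability at least $1-\nicefrac{1}{\abs{\cG}}$. The key estimate is this: if the current samples generate a proper subgroup $K<\cH^\perp$, a fresh uniform sample lies in $K$ with probability at most $\nicefrac{1}{2}$. Every proper subgroup in the increasing chain has index at least $2$, so the chain has length at most $\log\abs{\cG}$. A Chernoff/union bound over $c\log\abs{\cG}$ samples then pushes the failure probability below $\nicefrac{1}{\abs{\cG}}$ for a suitable constant $c$. Finally, I recover a generating set of $\cH=(\cH^\perp)^\perp$ classically. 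This amounts to solving the system of linear congruences $\sum_i a_i x_i/N_i\equiv 0\pmod 1$ for each sampled character $a$. The Smith normal form, or Hermite normal form over $\bZ$, does this in time $\poly{\log\abs{\cG}}$.

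I expect the main obstacle to be bookkeeping rather than ideas. First, the approximate \QFT for non-power-of-two moduli $N_i$ needs an error analysis, since its error must stay below $\nicefrac{1}{\abs{\cG}}$ in total variation. Second, the claimed running time ``$\log\abs{\cG}$'' must be read as $\polylog{\abs{\cG}}$ gates, or as $\bigO{\log\abs{\cG}}$ queries. A literal $\log\abs{\cG}$ time bound is not achievable once the cost of the \QFT is counted. I would handle the first point with Kitaev's phase-estimation-based \QFT over $\bZ_N$ (or by embedding into $\bZ_{2^m}$ with $2^m\gg N^2$). I would handle the second by stating the bound as $\bigO{\log\abs{\cG}}$ queries and $\polylog{\abs{\cG}}$ total time, which is what Theorem 2.2 of \cite{ettinger2000} asserts.
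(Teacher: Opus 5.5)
Your proposal is correct and is the standard coset-state/Fourier-sampling proof on which the cited Theorem~2.2 of \cite{ettinger2000} rests; the paper gives no argument beyond that citation, so there is nothing further to reconcile. Your interpretive choices are also the right ones: the bounds must be read as $\bigO{\log\abs{\cG}}$ queries and $\polylog{\abs{\cG}}$ gates, and $\cG$ must be taken Abelian (as the lemma's title intends, although the hypothesis only says ``finite group''), since the structure theorem, the character group and the duality $\cH=(\cH^\perp)^\perp$ all use commutativity; one small wording fix is that $\chi(g)$ is a unit-modulus relative phase rather than a global phase, which is why it leaves the outcome distribution uniform on $\cH^\perp$.
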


A critical component in efficiently solving the \HSP problem over finite Abelian groups is the existence of an efficient Quantum Fourier Transform (QFT) over $\cG$. 
\citet{fastqft} showed the existence of a $\bigO{\log n+\polylog{\nicefrac{1}{\varepsilon}}}$ depth and $\bigO{n\log{\nicefrac{n}{\varepsilon}}}$ size circuit to approximate QFT over the group $\bZ_{2^n}$. \citet{cheung01decomposing} proposed an efficient quantum decomposition algorithm, where given a generating set $\{g_1,\ldots,g_s\}$ of a finite Abelian group $\cG$, we can output a set of elements $d_1,\ldots,d_\ell$ s.t. $\cG=\bZ_{d_1}\times\ldots\times\bZ_{d_\ell}$ in time $\polylog{\abs{G}}$. Combining the results of \citet{fastqft,cheung01decomposing} gives us Shor's factoring algorithm (via \cref{lem:finiteabelianstructure}).

\subsubsection{HSP as a Concept class}
We now define a new concept class induced by the \HSP problem.
\begin{defn}[\HSP induced Concept class]\label{def:hspconcept}
    Any decision problem over an \HSP instance $\cP=(\cG,\cH)$ that can be posed as a membership problem over a set of instances $\cX$ is said to be a concept class $\cC:\cX\mapsto\{0,1\}$ induced by $\HSP_{\cP}$. 
\end{defn}
\begin{example}
    Let $\cP=(\cG,\cH)$ be an \HSP instance s.t. $\abs{\cG}=2^n$. The following decision problems $\cC:\ftn\mapsto\{0,1\}$ are concept classes induced by $\HSP_{\cP}$: 
    \begin{itemize}
        \item {Given a string $x\in\ftn$, is $\abs{\cH}\geq \abs{x}?$}
        \item Given a string $x\in\ftn$, does $\cH$ have a generating set of size at least $\abs{x}$?
    \end{itemize}
\end{example}
\begin{remark}
    \cref{def:hspconcept} is not restricted to the Abelian \HSP case and can be defined for \HSP over arbitrary finite groups.
\end{remark}

Classically, obtaining efficient learners for concept classes induced by \HSP over arbitrary finite Abelian groups is subject to various hardness assumptions, while concept classes induced by \HSP over finite Abelian groups seem to admit efficient quantum learning algorithms. This leads us to the following open question.
\begin{open}[Separation for Abelian \HSP-induced concept classes]\label{prop:learninghsp}
    If $\cC:\ftn\mapsto\{0,1\}$ is any concept class induced by some \HSP over a finite Abelian group $\cG$, can we show that there exists an efficient quantum learning algorithm for $\cC$? Does the same hold under restricted and unrestricted noise models?
\end{open}
\begin{remark}
    We conjecture that \cref{prop:learninghsp} is true at least in the noiseless case.
\end{remark}
\begin{obs}
    Assuming the hardness of \DLP~\cite{KV94blumintegers,angluin95mem}, \cref{prop:learninghsp} implies the separation between quantum and classical learning for several \HSP-induced concept classes even in the noiseless case (as seen in \cite{gotler,Liu2021}).
\end{obs}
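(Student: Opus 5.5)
The approach is to instantiate the open problem on a specific Abelian \HSP instance for which classical hardness is already known, and then combine the two directions. The concrete instance is the \DLP instance from the Example, $\cP=(\cG,\cH)$ with $\cG=\bZ_p\times\bZ_p$ and $\cH=\{(k,k\log_g x)\}_k$. The concept class is the one of \citet{Liu2021}: $\cC_{\DLP}=\{c_s\}_{s\in\bZ^*_p}$, where $c_s(x)=1$ iff $\log_g x\in[s,s+\tfrac{p-3}{2}]$. First I check that $\cC_{\DLP}$ fits \cref{def:hspconcept}. Each label $c_s(x)$ is a membership question about the generator of the hidden subgroup $\cH$ determined by the instance $x$. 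So $\cC_{\DLP}$ is a decision problem over $\HSP_{\cP}$, posed as membership over $\cX=\bZ^*_p$ encoded in $\lceil\log_2 p\rceil$ bits. I would repeat the same check for the Blum-integer class of \citet{gotler}, which is induced by the Abelian \HSP over $\bZ_N$ that underlies order finding. This is what justifies the word ``several''.

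For the quantum side, I take an affirmative answer to \cref{prop:learninghsp} in the noiseless case, so an efficient quantum learner exists for these classes. As a sanity check that the assumption is consistent, I would also sketch the direct route. The learner takes the training samples $(x_i,c_s(x_i))$, uses \cref{lem:HSPfinite} with $\abs{\cG}=p^2$ to compute each $\log_g x_i$ in $\polylog{p}$ time, and thereby reduces the task to learning an interval (a threshold) on $\bZ_p$. That is VC-dimension $2$, so \cref{lem:classicalequivquantumPACsample} gives polynomially many samples. Evaluating the output hypothesis on a new $x$ again calls the \HSP solver.

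For the classical side, I show that an efficient classical PAC learner for $\cC_{\DLP}$ would break \DLP. The reduction follows Blum--Micali~\cite{blum84crypto}, in the spirit of \citet{KV94blumintegers}. The reduction can generate labeled examples itself: pick a uniformly random $y\in\bZ_{p-1}$, output $x=g^y$ with label $\mathbb{I}_{y\in[s,s+\frac{p-3}{2}]}$. This simulates $\ex{\cU}{c_s}$ exactly under the uniform distribution, without ever computing a discrete log. Taking $s=0$, a hypothesis with error $\le\nicefrac12-\nicefrac{1}{\poly{n}}$ predicts the ``most significant bit'' of $\log_g x$ with non-negligible advantage. By the Blum--Micali hardcore-bit theorem, such a predictor can be amplified to recover $\log_g x$ in polynomial time, which contradicts the \DLP hardness assumption. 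The Blum-integer class is handled identically, using the hardness result of \citet{KV94blumintegers}.

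The main obstacle is this classical hardness step, and specifically the mismatch of parameters. The PAC learner only guarantees error $\varepsilon$ on the distribution it was trained on, with probability $1-\delta$. The Blum--Micali amplification, however, needs the predictor to work on random self-reduced queries $x\cdot g^{r}$. I would handle this in two ways. First, I fix the training distribution to be uniform over $\bZ^*_p$, so that the self-reduced queries are distributed exactly like training points. Second, I run the learner with constant $\varepsilon<\nicefrac14$ and $\delta<\nicefrac13$, and then boost by repetition and majority vote. This yields an average-case advantage that Blum--Micali can amplify. Together with the quantum side, this gives the claimed separation for these \HSP-induced classes in the noiseless setting.
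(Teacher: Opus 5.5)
Your proposal is correct and spells out the argument the paper leaves implicit when it cites \citet{gotler,Liu2021}: quantum learnability comes from an affirmative answer to \cref{prop:learninghsp} (or directly from Shor's algorithm), and classical hardness of $\cC_{\mathrm{DLP}}$ under uniform examples comes from the Blum--Micali hardcore-bit reduction, which is how \citet{Liu2021} argue it. A few minor fixes: the \DLP hidden subgroup lives in $\bZ_{p-1}\times\bZ_{p-1}$, since $g$ has order $p-1$ (the $\bZ_p\times\bZ_p$ in the paper's example is a slip), and the Blum-integer class of \citet{gotler} rests on the hardness of factoring rather than \DLP; moreover, order finding is an \HSP over $\bZ$ rather than over the finite group $\bZ_N$, so for that class you should rely on your direct Shor-based route rather than on \cref{prop:learninghsp}.
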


\subsubsection{The Hidden Subgroup Problem over Arbitrary Groups}
Even though known efficient algorithms for \HSP over Abelian groups require efficient subroutines for QFT, it is unknown if the ability to efficiently perform QFT over a group $G$ is either a necessary or a sufficient condition for efficiently solving \HSP over $\cG$ in the non-Abelian case. \citet{ettinger2004} show that there exist QFT based quantum algorithms to solve \HSP over arbitrary finite groups $\cG$, which make only $\log\abs{G}$ queries to $O_f$ but requires exponential running time.

Efficient algorithms exist for exactly calculating or approximating QFT over classes of \textit{well-behaved} non-Abelian groups, such as metacyclic groups\footnote{A metacyclic group $\cG$ is a group having a cyclic normal subgroup $\cN$ s.t. $\cG/\cN$ is also cyclic.}, or even metabelian groups\footnote{A metabelian group $\cG$ is a group having a normal subgroup $\cN$ s.t. $\cN$ and $\cG/\cN$ are both Abelian.}~\cite{moore06generalqft}.
The reader is encouraged to peruse the surveys by \citet{lomont2004hiddensubgroupproblem} and ~\citet{wang2010hiddensubgroupproblem} for a comprehensive overview of \HSP over arbitrary finite groups, which is an important problem in its own right, even outside the context of learning theory. Efficient quantum algorithms for \HSP over certain classes of finite non-abelian groups imply very interesting results in theoretical computer science, such as efficient solvability of \textrm{Graph Isomorphism} (\texttt{GI}) and \textrm{Shortest Vector Problem} (\SVP). \SVP, in particular, is central to the field of post-quantum cryptography, and its connection to learning is discussed in the next section.
\subsubsection{Other open questions}
Consider the symmetric group $S_n$. Since $S_n$ is a group with $n!$ elements, defining any \HSP-induced concept class that displays a separation between quantum and classical learnability would be very interesting. It might also be worthwhile looking at concept classes induced by subgroups of $S_n$ with more structure in pursuit of efficient quantum learning algorithms. 

\begin{open}[Separations for non-abelian \HSP-induced concept classes]\label{open:symmetric}
    Does there exist a concept class induced by \HSP on $S_n$ or any subgroup of $S_n$ (such as a permutation group or an alternating group), such that there is a super-polynomial separation between quantum and classical learning algorithms?
\end{open}

We note here that the answer to \cref{open:symmetric} is likely to be negative when we require the additional notion of efficient learnability for quantum algorithms. We specifically highlight the important case of dihedral groups (see \cref{dihedral}) due to their close connection to Lattice-based cryptography. 

Informally, an efficient quantum algorithm for solving dihedral \HSP implies that $\SVP\in\BQP$~\cite{oded09lwe}. This would essentially imply that $\NP\subseteq\BQP$ since \SVP has been shown to be $\NP$-hard under restricted settings~\cite{bennet2023svpsurvey}. The best-known algorithm for solving the dihedral \HSP is a subexponential time algorithm by \citet{kuperberg05}. The following open question underscores the above discussion.

\begin{open}[Hardness of learning Dihedral \HSP-induced concept classes]\label{open:dihedral}
    Can we give hardness guarantees for efficient quantum learning of any concept class induced by \HSP on an arbitrary dihedral group $\cG$ for $N\geq3$? 
\end{open}
\subsection{Separations based on Lattice-Based Problems}\label{sec:qllattices}
In computational hardness, we usually care about problems with known \textit{worst-case hardness guarantees} since we want to design algorithms that run efficiently even on the worst possible input. However, for cryptographic schemes to be secure, we require hardness guarantees to hold even for random keys, i.e., we require \textit{average-case hardness guarantees}. This is a challenging task since it is not immediately apparent how to create hard instances of problems even when it has worst-case hardness guarantees. In fact, for many $\NP$-hard problems like Graph coloring (see~\cite{dyer87}), such a guarantee does not exist.

\subsubsection{Introducing the Learning with Errors Problem}
\citet{ajtai96} showed that for certain lattice-based problems, such a reduction from worst-case to average-case hardness exists.  Following this, \citet{oded09lwe} introduced the \textrm{Learning with Errors} (\LWE) problem and showed that an efficient algorithm to solve the \LWE problem gives us an efficient algorithm\footnote{Efficiently solving \LWE actually implies an efficient solution to the Gap-\SVP problem.} for \SVP. We define the decision version of the \LWE problem now.
\begin{defn}[Decision Version of \LWE ]\label{def:LWE}
    We are given $m$ independent samples $(x,y)\in\bZ^n_q\times\bZ_q$, for some prime $q\geq2$, with the promise that they are sampled according to one of the following distributions:
    \begin{enumerate}
        \item The uniform distribution: $(x,y)\sim \fU\left(\bZ^n_q\times\bZ_q\right)$.
        \item The \LWE distribution: The marginal distribution of $x\in\bZ^n_q$ is uniform and $y=\inprod{s}{x}+\varepsilon\mod q$, where $s\in\bZ^n_q$ is a secret string, and $\varepsilon\sim{\fD}$ is some noise parameter.
    \end{enumerate}
    Determine if the samples are drawn from the \LWE distribution.
\end{defn}

The \LWE problem is also a generalization of another important learning problem - \textrm{Learning Parities with Noise} (\LPN) where $q=2$ and $\mathcal{\fD}$ is a Bernoulli distribution over $\bZ_2$. \LPN is an average-case version of the NP-hard problem decoding a linear code~\cite{lyubashevsky2005parity}. 
\begin{obs}
Without the presence of the random noise term $\varepsilon$, \LWE and \LPN can be efficiently solved using Gaussian elimination over the samples.   
\end{obs}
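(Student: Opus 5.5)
The statement is the observation that, without the noise term $\varepsilon$, both decision \LWE (\cref{def:LWE}) and \LPN can be solved efficiently by Gaussian elimination. The plan is to give an explicit polynomial-time procedure that decides between the two cases.

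\textbf{Structure of the samples.} In the noiseless \LWE case, each sample $(x_i,y_i)$ satisfies the linear equation $\inprod{s}{x_i}=y_i$ over the field $\bZ_q$; this is a field because $q$ is prime. Stacking $m$ samples gives a linear system $Xs=y$, where $X\in\bZ_q^{m\times n}$ has rows $x_i$. In the uniform case, $y$ is independent of $X$ and uniformly distributed. The test is therefore consistency of the system: accept ``\LWE'' if and only if $Xs=y$ has a solution.

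\textbf{Steps.} First, run Gaussian elimination on the augmented matrix $[X\mid y]$ over $\bZ_q$. This takes $\bigO{m n^2}$ field operations, each costing $\poly{\log q}$, so the running time is polynomial. Second, check that the test never errs on \LWE samples: the secret $s$ is itself a solution, so the system is always consistent. Third, bound the error on uniform samples. Take $m=n+k$. Conditioned on $X$, the column space of $X$ has size at most $q^n$ inside $\bZ_q^m$, and $y$ is uniform and independent of $X$. Hence the probability that $y$ lies in this column space is at most $q^{n}/q^{m}=q^{-k}\le 2^{-k}$. Choosing $k=\lceil\log(1/\delta)\rceil$ makes the error at most $\delta$.

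\textbf{The \LPN case.} This is the special case $q=2$, and the identical argument applies. In the search setting, once $X$ has rank $n$, the solution $s$ is unique and can be read off directly. With $m=\bigO{n+\log(1/\delta)}$ samples, $X$ has rank $n$ with high probability, because a uniform $m\times n$ matrix over $\bZ_q$ fails to have full column rank with probability at most $\sum_j q^{j-m}\le 2q^{n-m}$.

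\textbf{Main obstacle.} There is no serious obstacle. The only point needing care is the independence argument in the third step, which requires conditioning on $X$ before bounding the probability over $y$. It is also worth noting, for contrast, why noise breaks this approach: elimination forms linear combinations of many rows, and this amplifies the errors, so the test no longer works.
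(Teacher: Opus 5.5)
Your proof is correct and follows the route the paper indicates: Gaussian elimination over $\bZ_q$, with $q=2$ for \LPN. The paper states this observation without proof, so your consistency test with the $q^{-k}$ soundness bound and the full-column-rank bound for the search version supply the omitted details. You tacitly assume the learner may draw $m\geq n+\lceil\log(1/\delta)\rceil$ samples, which is harmless here.
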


\begin{remark}
    Unless explicitly specified, the \LWE and \LPN problems are assumed to be in the \RCN setting, i.e., $\varepsilon$ is \RCN noise. 
\end{remark}

In the previous section, we saw that \SVP reduces to the Dihedral \HSP, similar to \LWE. We make the connection between these two classes of problems explicit below.

\begin{remark}[Connection between \LWE and the Dihedral \HSP]
\LWE can be reduced (average-worst case reduction) to a stronger variant of the Dihedral \HSP, known as the robust Dihedral \HSP. Since known subexponential (quantum) algorithms for solving dihedral \HSP~\cite{kuperberg05} do not quite work for robust Dihedral \HSP, there do not exist any known subexponential (quantum) learning algorithms for \LWE.
\end{remark}
\subsubsection{Learning algorithms for \textrm{LWE} and \textrm{LPN}}
\citet{blum03lwe} provided the first classical sub-exponential time and sample learning algorithm for \LWE (and hence \LPN) that requires $2^{\bigO{n/\log{n}}}$ time and samples. Interestingly, \citet{ponnu06lpn} showed that \LPN with respect to uniform marginals in the \textit{agnostic setting} reduces to \LPN with respect to uniform marginals in the \RCN setting. Hence, the algorithm by \cite{blum03lwe} also implies a sub-exponential time/sample learning algorithm for \LPN in the agnostic setting (under uniform marginals).

In the quantum setting, however, there exist efficient quantum learning algorithms for \LWE~\cite{grilo2019LWE} and \LPN~\cite{cross2015LPN}, with access to a special oracle that we denote as the $\mathrm{QREX}_{\text{\LWE}}^{s,\varepsilon}$ oracle, and define as follows:
$$
    \mathrm{QREX}_{\text{\LWE}}^{s,\varepsilon}\ket{0}\ket{0}\mapsto\frac{1}{\sqrt{q^n}}\underset{x\in\bF^n_q}{\sum} \ket{x}\ket{\inprod{s}{x}+\varepsilon\mod{q}}\;\;,\varepsilon\sim\fD.
$$
It is important to note that the results of \citet{grilo2019LWE} only hold when the quantum learner is provided oracle access to the $\mathrm{QREX}_{\text{\LWE}}^{s,\varepsilon}$ oracle, unlike in the classical case, where we only have access to $m$ classical labeled examples. 
\begin{remark}
    The efficient learning results of LWE and LPN~\cite{grilo2019LWE,cross2015LPN} also follow from the MOS model~\cite{caro2024ITCS} as outlined in \cref{sec:mos}.
\end{remark}
Despite the presence of powerful oracle models providing separations, it is important to understand the exact avenue of speedup for quantum learners. The following question therefore naturally comes to mind:
\begin{open}[Quantum learning for \LWE under random examples]\label{open:lwe}
    Given access to quantum examples of the form: 
    $
    \frac{1}{\sqrt{\abs{\ccS}}}\underset{x\in\ccS\subseteq[m]}{\sum} \ket{x}\ket{\inprod{s}{x}+\varepsilon\mod{q}},
    $
    where $\ccS$ is unknown and $x\sim\fU\left(\bZ^n_q\right)$, does there exist an efficient quantum learning algorithm for \LWE? 
\end{open}

We note that it is likely that the answer \cref{open:lwe} to is a negative one, since \LWE is believed to be hard for many reasons. If efficient quantum algorithms exist for \LWE under random examples, then there would exist efficient quantum algorithms for breaking the \LWE cryptosystem. This is highly unlikely since such a quantum algorithm would also imply an efficient algorithm for \SVP. If \SVP is believed to be $\NP$-complete, a positive answer to \cref{open:lwe} would imply $\NP\subseteq\BQP$. 

Unlike \LWE, cryptographic applications for \LPN have been limited in scope~\cite{pietrzak2012LPN}. Recently, a series of works \cite{brakerski2019LPN,Yu2021LPN} showed that the worst-case hardness of the \NCP under very low noise rates is implied by the hardness of \LPN at very high noise rates. However, unlike the hardness guarantees of \SVP, \NCP is not believed to be $\NP$-hard~\cite{brakerski2019LPN}. Therefore, it is important to question the existence of an efficient quantum algorithm for \LPN as a separate question from \LWE.

\begin{open}[Quantum learning for \LPN under random examples]\label{open:lpn}
    Given access to quantum examples of the form: 
    $$
    \frac{1}{\sqrt{\abs{\ccS}}}\underset{x\in\ccS\subseteq[m]}{\sum} \ket{x}\ket{\inprod{s}{x}+\varepsilon\mod{2}},
    $$
    where $\ccS$ is unknown, $x\sim\fU\left(\ztn\right)$, and $\varepsilon\sim\mathrm{Ber}(\bZ_2)$, does there exist an efficient quantum learning algorithm for \LPN? 
\end{open}

\subsection{Learning separations for Halfspaces}
Halfspaces are a class of Boolean functions that are extremely important to the field of learning algorithms. Boosting algorithms, SVMs, and neural networks all employ some variant of the halfspace learning problem under the hood. A halfspace is a Boolean function $c\in\cC:\bR^d\mapsto\bF_2$ defined as follows: 
$$
c(x)=\sign{\inprod{a}{x}-\theta}=\sign{\sum_{i\in[m]} a_ix_i-\theta},\;\;\;\;a,x,\theta\in\bR^d.
$$

Any learner that takes labeled samples drawn according to a distribution $\fD$ over $\bR^d\times\bF_2$ and outputs a hypothesis $h$ s.t. $\err{\fD,\cC}{h}\leq\opterr{\fD}{\cC}+\beta+\varepsilon$ is a $beta$-optimal learner. The learner is considered to be efficient if it runs in time $\poly{d,\frac{1}{\varepsilon}}$. 

It is easy to see that in the realizable case/noiseless setting, halfspaces can be efficiently learnt using linear programming techniques. Classical efficient learners for halfspaces also exist for bounded noise models such as \RCN~\cite{blum96halfspace}, Massart noise~ \cite{diakonikolas20massart} and Tsybakov noise~\cite{diakonikolas21tsybakov} under log-concave marginal distributions. 

The most interesting case of halfspace learning seems to be beyond bounded noise models. Learning halfspaces in the agnostic setting under the uniform marginal assumption directly implies learning algorithms for \LPN. Hence, halfspace learning seems to be a much harder problem than \LPN. Various hardness results are known for improper~\cite{daniely2015weakhalfspace,tiegel2023halfspaces} as well as proper~\cite{guruswami2009hardness,ponnu06lpn,diakonikolas2021optimality} classical learning of halfspaces in the agnostic setting.

It is important to note that the hardness results are in the random classical example setting (without queries). Earlier, we have seen evidence of quantum and classical learning separations in the case of \LWE~\cite{grilo2019LWE} or \LPN~\cite{cross2015LPN} when the quantum learner has query access to the $\mathrm{QREX}$ oracle. It is worthwhile to ask if such a separation is possible even for halfspace learning.

\begin{open}[Efficient Quantum Agnostic Learning of Halfspaces]\sloppy
    Does there exist a sample/query/time efficient quantum agnostic learner with access to $\mathrm{QAEX}$ queries for halfspaces assuming uniform marginal over the instances?
\end{open}

\subsection{Learning separations for Decision Trees}
The decision tree representation of a Boolean function $f:\ftn\mapsto\bF_2$ is a binary tree such that every path from the root to any leaf corresponds to a truth table row of $f$. The leaf nodes are labeled by $\{0,1\}$ or $\{\pm 1\}$, the internal nodes are labeled by the literals: variables $x_i$ (or their negations), and every internal node has two outgoing edges labeled {true} and {false}.
The number of leaves (or the number of internal nodes) of the decision tree is known as its \textbf{size} while the longest path from the root of the decision tree to any leaf node is known as the \textbf{depth} of the decision tree. 
Since the decision tree representation of a Boolean function is not unique, computationally, we would prefer to represent $f$ in terms of the ``smallest" decision tree. However, \citet{hyafil76} showed that constructing the optimal decision tree corresponding to an arbitrary Boolean function is an $\NP$-complete problem.

The study of efficient learning algorithms for decision trees with polynomially bounded depth was initiated by \citet{ehrenfeucht1989learning}, who gave a classical proper quasi-polynomial time and sample learning algorithm in the noiseless setting. It was shown recently by \citet{koch2023superpolynomial} that the results of \citet{ehrenfeucht1989learning} are essentially optimal (even with MQ access). \citet{km91} in their seminal work circumvented the superpolynomial lower bound for decision tree learning by considering the improper learning setting. Their results were later extended to the agnostic setting~\cite{gopalan08dt,feldman09dt,kalaikanade}. However, all of these algorithms required access to membership queries. \citet{chatterjee2024efficient} gave quantum algorithms for learning decision trees in all noise models that are strictly polynomial in all parameters while only requiring access to sampling query oracles.

\subsubsection{Techniques and Open Questions}
\citet{km91} showed that if a Boolean function $f$ is computed by a decision tree with size $t$, then $\norm{\spectrum{f}}_1\leq t$. Alternatively, if a Boolean function $f$ is computed by a decision tree with at most $\bigO{t}$ nodes, then $\norm{\spectrum{f}}_1=\bigO{t}$. Hence, we can straightforwardly apply \cref{coro:epslearningl1norm} to yield a polynomial-time learning algorithm for polynomial-sized decision trees using \MQ queries. The tricky part is removing the dependence on \MQ or $\mathrm{QMQ}$ queries for efficient learnability. Below, we discuss how to achieve this in the noiseless case, and defer the discussion on learning under more restrictive noise settings to \citet{chatterjee2024efficient}.

\citet{bshouty95dnf} showed how to obtain an approximate Fourier sampling state\footnote{see \cref{lem:fouriersampling}} using $1$ call to the $\qex{\cU}{f}$ oracle  as follows: 
\begin{equation}\label{eq:approximateFS}
        \frac{1}{\sqrt{2}}\sum_{\ccS\subseteq[n]}\hat{f}(\ccS)\ket{\ccS}\ket{1} + \frac{1}{\sqrt{2}}\ket{0\ldots0}\ket{0}.
    \end{equation}
Applying \nameref{lem:AmplitudeAmplification} to the state obtained in \cref{eq:approximateFS}, we can now obtain the Fourier sampling state $\ket{\psi}=\sum_{\ccS\subseteq[n]}\hat{f}(\ccS)\ket{\ccS}$ using an additional $\bigTheta{1}$ queries to the $\qex{\cU}{f}$ oracle. In the spirit of \cref{coro:epslearningl1norm}, it is now straightforward to obtain a \textit{weak learner} for decision trees by sampling from the Fourier distribution, and obtain a corresponding strong learner using (quantum or classical) Boosting algorithms.

As the observant reader must have deduced, learning algorithms for improper decision tree learning (especially in the agnostic setting), both in the classical and quantum settings, assume a uniform marginal over the instances. This raises the following question:
\begin{open}[Efficient Quantum Agnostic Learning of Decision Trees under non-uniform marginals]
    Can we obtain efficient quantum (agnostic) learners for polynomially bounded decision trees with non-uniform marginal over the instances, without access to QMQ?
\end{open}
\subsection{Learning separations for DNFs}\label{sec:dnflearning}
A DNF function $f$ is a disjunction of DNF clauses, which in turn are conjunctions of Boolean literals $\Tilde{x}$. Mathematically,
$$f(x)=C_1(x)\lor C_2(x)\lor\ldots\lor C_m(x),$$
where each clause $C(x)=\land_{i\in\ccS} \Tilde{x}_i$ is defined with respect to some $\ccS\in[n]$. The width of DNF $f$ is the maximum number of literals in any clause of $f$, and the size of DNF $f$ is the total number of clauses of $f$.

When the learner only has access to random examples, \citet{daniely2016dnf} showed that even improperly learning DNFs would imply an efficient algorithm for refuting random $k$-SAT formulas.\footnote{If $\cP\neq\NP$, it is hard to refute $k$-SAT formulas~\cite{haastad2001some}.} Since learning DNFs (in the realizable/RCN settings) reduces to learning \LPN~\cite{ponnu06lpn}, the results of \cite{daniely2016dnf} also give hardness results for classically learning \LPN with random examples. Classically, we know of efficient improper learning algorithms for DNF (depth-2 $\AC^0$ circuits) in the noiseless/\RCN settings under the uniform (or ``near"-uniform) marginal assumption, only when the learner has access to membership queries~\cite{jacksonHarmonic}. In the quantum learning regime, however, \citet{bshouty95dnf} gave efficient improper quantum learning algorithms for DNFs in the realizable and \RCN settings using quantum learners with access to $\mathrm{QEX}$ or $\mathrm{QREX}$ oracles. 

In the non-uniform marginal setting, \citet{feldman2012dnf} showed that an efficient classical learning algorithm for DNFs exists under the membership query model when the instances are sampled from any product distribution. They also show that a DNF learning algorithm exists for random examples only when the instances are sampled from a smoothed product distribution{~\cite{spielman2004smooth}}. \citet{kanade2019dnf} showed that DNFs can be efficiently PAC learned by quantum learners with access to the $\mathrm{QEX}$ when the samples are drawn from a product distribution. 
\subsubsection{Techniques and Open Problems}
\citet{JACKSON1997414} showed that any $s$-term DNF is spectrally concentrated on at most $\nicefrac{1}{2}+\nicefrac{1}{\poly{s}}$ fraction of its domain. 
This allows us to straightforwardly apply \cref{lem:kmreal} to learn DNFs classically in the noiseless setting (with a polynomial blowup in $s$ and the noise parameter for the bounded noise setting). Using techniques similar to the ones described in the decision tree learning section, \citet{bshouty95dnf} got rid of the dependence on \MQ by using quantum sampling oracles. This leaves us with the following open question on learning DNFs in the unrestricted noise settings:

\begin{open}[Efficient Quantum Agnostic Learning of DNFs]\label{open:agnosticdnf}
    Does there exist an efficient agnostic quantum learning algorithm (with access to the $\mathrm{QAEX}$ oracle) for learning DNF formulas over uniform marginals?
\end{open}

We would also like to pose the following open question as a follow-up to the results of \citet{feldman2012dnf} and \citet{kanade2019dnf}.

\begin{open}[Efficient Quantum Agnostic Learning of DNFs under non-uniform marginals]\label{open:dnf}
    Does there exist an efficient agnostic quantum learning algorithm (with access to the $\mathrm{QAEX}$ oracle) for learning DNF formulas when the marginal distribution over the instances is a smoothed product distribution?
\end{open}

\subsection{Learning separations for Juntas}\label{sec:juntas}

A junta is a Boolean function that only depends on a subset of its input variables. More formally, a $k$-junta is a Boolean function on $n$ variables that depends only on an unknown subset of $k$ variables and is invariant under the remaining $n-k$ variables.

\begin{fact}
    Every $k$-junta can be expressed as a decision tree of size $2^k$ or a DNF formula of size $k$. Every decision tree of size $k$ can be expressed as a DNF formula of size $k$.
\end{fact}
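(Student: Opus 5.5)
The plan is to prove both parts of the Fact by explicit constructions. For a $k$-junta I will build a complete decision tree, and I will turn any decision tree into a DNF by reading off its accepting root-to-leaf paths. Composing the two constructions gives the junta-to-DNF claim.

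\textbf{Junta to decision tree.} Let $f$ be a $k$-junta depending only on the coordinates in a set $\ccS=\{i_1,\ldots,i_k\}\subseteq[n]$. Then there is a function $g:\bF_2^k\mapsto\bF_2$ with $f(x)=g(x_{i_1},\ldots,x_{i_k})$ for all $x$. I will build the complete binary decision tree of depth $k$ that queries $x_{i_1}$ at the root and $x_{i_j}$ at every node of depth $j-1$. Each of its $2^k$ leaves corresponds to exactly one assignment $a\in\bF_2^k$ to the relevant variables, and I label that leaf with $g(a)$. On every input $x$ the tree follows the path fixed by $(x_{i_1},\ldots,x_{i_k})$ and outputs $g$ of that tuple, which is $f(x)$. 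The tree therefore computes $f$ with exactly $2^k$ leaves, i.e.\ size $2^k$, and depth $k$.

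\textbf{Decision tree to DNF.} Let $T$ be a decision tree of size $s$, measured in leaves. For every leaf labeled $1$ I take the conjunction of the literals on its root-to-leaf path: $x_i$ for each \emph{true} edge and $\overline{x}_i$ for each \emph{false} edge. The DNF is the disjunction of these terms. If $T(x)=1$, the path of $x$ ends at a $1$-leaf, so the term for that leaf is satisfied. Conversely, suppose some term is satisfied by $x$. Then $x$ agrees with every query along that term's path, so $T$ routes $x$ exactly to that leaf and outputs $1$. Hence the DNF computes $T$. It has at most $s$ terms, which proves the second sentence, and each term has width at most the depth of $T$.

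\textbf{Junta to DNF.} Composing the two constructions, a $k$-junta is computed by a DNF whose terms each have width at most $k$ and which has at most $2^k$ terms. The only real obstacle is how to read the first sentence. Taken literally, ``DNF of size $k$'' would mean at most $k$ terms, and that is false: parity on $k$ bits is a $k$-junta, yet every DNF for it needs $2^{k-1}$ terms. The reason is that any term of width less than $k$ fixes fewer than $k$ variables, so flipping one of the free relevant variables flips the parity, and such a term would accept a $0$-input. I will therefore state and prove the claim in its correct form: every $k$-junta has a DNF of width $k$ and size at most $2^k$. This follows directly from the composition above.
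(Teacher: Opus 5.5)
Your proof is correct. The paper states this as a Fact with no proof, so there is no alternative route to compare against. The complete depth-$k$ decision tree and the one-term-per-accepting-leaf conversion are the standard justifications.

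You are also right that the first sentence is false as written, because the paper defines the size of a DNF as its number of clauses. The intended claim is a DNF of \emph{width} $k$ (with at most $2^k$ terms). This matches the paper's own later remark that learning $k$-juntas reduces to learning DNF formulas of width $k$.

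Three small tightenings:

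(i) Your parity counterexample refutes the literal claim only for $k\geq 3$. For $k\leq 2$ one has $2^{k-1}\leq k$, and in fact every Boolean function of two variables has a DNF with at most two terms.

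(ii) In the lower bound, the relevant condition is not ``width less than $k$'' but ``omits some relevant variable''. A term can have width at least $k$ while mentioning irrelevant variables. Every satisfiable implicant of parity must mention all $k$ relevant variables. So once the irrelevant variables are fixed, each term accepts at most one of the $2^{k-1}$ accepting assignments of the relevant variables, which gives the $2^{k-1}$ bound.

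(iii) The paper lets the size of a decision tree mean either its leaves or its internal nodes, and you used leaves. Under the internal-node convention, a tree with $s\geq 1$ internal nodes has $s+1$ leaves, so you need one more line. If all leaves are labeled $1$, the function is constant and a single empty term suffices; otherwise at most $s$ leaves are accepting.
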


Although the problem of learning $k$-juntas straightforwardly reduces to that of learning DNF functions of width $k$, we mention the results related to learning $k$-juntas separately here. Trivially, learning a $k$-junta from random classical examples can be done using $2^k\log{n}$ samples and $\bigO{n^k}$ time (by solving linear equations). 

\citet{mossel2004junta} gave the first non-trivial algorithm for learning $k$-juntas which takes time $\bigO{n^{\nicefrac{\omega k}{\omega+1}}\poly{n}}$, where matrix multiplication can be solved in time $\bigO{n^\omega}$.\footnote{Here, $\omega<2.376$ is the matrix multiplication exponent. Hence \cite{mossel2004junta} takes time $\bigO{n^{\nicefrac{2k}{3}}}$.} This was later improved by \cite{valiantjunta} to $\bigO{n^{0.6k}\poly{n}}$. \citet{valiantjunta} also gave a $\bigO{n^{0.8k}\poly{n,\nicefrac{1}{\varepsilon}}}$ for learning $k$-juntas in the \RCN setting with noise rate $\varepsilon>0$. 

In the quantum setting, the DNF learning algorithm of \cite{bshouty95dnf} implies a $\bigO{2^{6k}\poly{n,\nicefrac{1}{\varepsilon}}}$ time and sample learning algorithm for $k$-juntas using QEX queries. \citet{Atici2007} improved upon this result by giving a $\bigO{\frac{k}{\varepsilon}\log k}$ time and $2^k\log{\nicefrac{1}{\varepsilon}}$ sample learning algorithm using QEX queries. Both of these results also hold in the \RCN setting with noise rate $\eta$ with an additional multiplicative factor of $\poly{\nicefrac{1}{\eta}}$. \citet{chatterjee2024efficient} design the first quantum algorithm that learns $k$-juntas in the agnostic setting. However, the running time of their algorithm depends on $n$ and has far worse dependence on $k$ and $\varepsilon$. This brings us to the following question:
\begin{open}[Separations for Agnostic Learning of Juntas]
    What is the best time and sample complexity for (quantum) learning $k$-juntas in the agnostic setting without using membership queries?
\end{open}

\subsection{Learning separations for Shallow Circuits}

In this section, we focus on learning shallow classical circuits as concept classes. We start by considering the simplest classical circuit - the $\NC^0_d$ class.     An $\NC^0$ circuit is a constant-depth, polynomial-size Boolean circuit on $n$-bits that consists of \textrm{NOT}, and bounded fan-in \textrm{AND} and \textrm{OR} gates. When the circuit has depth $d=\bigO{1}$, we use the notation $\NC^0_d$. Since $\NC^0_d$ circuits have bounded fan-in (say $b=\bigO{1}$) and constant depth (say $d=\bigO{1}$), the output bit\footnote{Recall that we are only considering $1$-bit output Boolean functions, i.e., decision problems as concept classes in this survey.} is influenced by at most $b^d = \bigO{1}$ input variables. Hence, $\NC^0_d$ circuits are $\bigO{b^d}$ juntas! We refer the reader to the earlier discussion on learning juntas in \cref{sec:juntas}, for a more detailed understanding of learning concept classes generated by $\NC^0$ circuits.

We now move the discussion on separations between quantum and classical learning algorithms to learning the $\AC^0$ and $\TC^0$ circuit classes as concepts. These circuit classes are ubiquitous in complexity theory, cryptography, and learning theory. We formally define these circuit classes below.

\begin{defn}[$\AC^0_d$ circuits and $\TC^0_d$ circuits]\label{def:ac0tc0}
    An $\AC^0$ circuit is a constant-depth, polynomial-size Boolean circuit on $n$-bits that consists of \textrm{NOT}, and unbounded fan-in \textrm{AND} and \textrm{OR} gates. A $\TC^0$ circuit is a constant-depth, polynomial-size Boolean circuit on $n$-bits that consists of \textrm{NOT}, and unbounded fan-in \textrm{AND}, \textrm{OR}, and \textrm{MAJ} gates. When the circuit has depth $d=\bigO{1}$, we use the notation $\AC^0_d$ and $\TC^0_d$ respectively.
\end{defn}
\begin{fact}
    For any $d>0$, $\AC^0_d\subset\TC^0_d$.
\end{fact}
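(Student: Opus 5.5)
The plan is to argue directly from \cref{def:ac0tc0}, by a gate-by-gate simulation: every $\AC^0_d$ circuit is already a $\TC^0_d$ circuit, and some function separates the two classes. For the inclusion, fix a depth-$d$, polynomial-size $\AC^0$ circuit $C$ using \textrm{NOT}, unbounded fan-in \textrm{AND} and \textrm{OR} gates. The gate basis of $\TC^0$ contains \textrm{NOT}, \textrm{AND} and \textrm{OR} together with \textrm{MAJ}. So $C$, read verbatim, is a valid $\TC^0$ circuit of the same depth and size, and $\AC^0_d\subseteq\TC^0_d$ is immediate. Even in a threshold-only presentation, one can note that \textrm{AND} on $m$ inputs equals \textrm{MAJ} on those $m$ inputs padded with $m-1$ constant $0$ wires. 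Likewise \textrm{OR} equals \textrm{MAJ} padded with $m-1$ constant $1$ wires. This replacement preserves depth and at most doubles the fan-in, so size stays polynomial.

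For strictness, I would exhibit \textrm{PARITY} (or \textrm{MAJ} itself) as a witness. \textrm{MAJ} on $n$ bits is a single $\TC^0_1$ gate. On the other hand, by the H\aa stad switching-lemma lower bound (Razborov--Smolensky also works), \textrm{MAJ} on $n$ bits requires size $2^{n^{\Omega(1/d)}}$ for depth-$d$ $\AC^0$ circuits. In particular, \textrm{MAJ} lies in $\TC^0_d\setminus\AC^0_d$ for every constant $d\geq1$. This last step is imported from circuit complexity rather than proved from results earlier in the paper. I expect it to be the only nontrivial ingredient, and I would simply cite it. If a self-contained argument is wanted, one could reduce \textrm{PARITY} to \textrm{MAJ} in constant depth and cite the parity lower bound instead.

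One subtlety must be handled when $d$ is small or when the statement is read as being about fixed depth. For $d=1$, \textrm{MAJ} is not a single \textrm{AND} or \textrm{OR} of literals, which can be checked directly by considering two inputs of weight $\lceil n/2\rceil$. For general $d$, the superpolynomial lower bound handles it. Combining the inclusion with the witness gives $\AC^0_d\subsetneq\TC^0_d$.
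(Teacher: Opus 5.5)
Your argument is correct. The paper gives no proof of this Fact; it is meant to be read off \cref{def:ac0tc0}, where the $\TC^0$ gate basis literally contains the $\AC^0$ basis. That is exactly your first observation. The padding trick is only needed for a threshold-only basis, and your counts are right, since $\textrm{MAJ}$ on $2m-1$ wires fires iff at least $m$ of them are $1$.

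What you add is strictness, which the paper neither spells out nor argues. Exhibiting the family $\{\textrm{MAJ}_n\}_n$, which lies in $\TC^0_1\subseteq\TC^0_d$, and citing the H\aa stad or Razborov--Smolensky lower bound is the standard route to $\AC^0_d\subsetneq\TC^0_d$. Your proof therefore covers both readings of $\subset$, at the cost of importing a genuine $\AC^0$ lower bound, as you acknowledge.

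Two small points. First, your separate $d=1$ paragraph is redundant, because a depth-$1$ circuit is also a depth-$2$ circuit and the general lower bound already rules it out. Second, the check via two inputs of weight $\lceil n/2\rceil$ only excludes an \textrm{AND} of literals, whose accepting set is a subcube. To exclude an \textrm{OR} of literals, note that a positive literal $x_i$ accepts the weight-$1$ input with a single $1$ in position $i$, while a clause of only negative literals accepts $0^n$; both are wrong for $n\geq 3$. For $n\leq 2$, $\textrm{MAJ}_n$ really is a literal, \textrm{AND} or \textrm{OR}, which is why the separation is a statement about the family.
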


In the context of Boolean functions and learning theory, $\AC^0$ and $\TC^0$ circuit classes are quite important. We know that $\mathrm{DNF}\in\AC^0_2$ and intersections/majority of halfspaces can be computed by $\TC^0$ circuits. Hence, efficient learning algorithms for $\AC^0$ and $\TC^0$ circuits would imply efficient learning algorithms for $\mathrm{DNF}$ and intersections of halfspaces. However, \citet{arunachalam2021lwe} showed the following impossibility results on quantum learning of $\AC_0$ and $\TC_0$ circuits. 
\begin{lem}[Impossibility of quantum learning shallow classical circuits~\cite{arunachalam2021lwe}]\label{lem:impossac0}
The following impossibility results hold for learning $\TC_0$ circuits.
    \begin{enumerate}
    \item If there does not exist a $\BQP$ algorithm to solve the \LWE problem with random examples, then no quantum learner can efficiently learn $\TC^0_2$ even with access to membership queries.
    \item If there does not exist a $\BQP$ algorithm to solve the Ring-\LWE problem\footnotemark with random examples, then no quantum learner can efficiently learn $\TC^0$ even with access to membership queries.
\end{enumerate}
    Further, if there does not exist a strongly subexponential time quantum algorithm to solve the Ring-\LWE problem with random examples, then there is no quasi-polynomial time quantum learner for $\AC^0$ even with access to membership queries.
\end{lem}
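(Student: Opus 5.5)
The plan is a cryptographic reduction. We show that an efficient quantum learner for the relevant circuit class, even one with $\mathrm{QMEM}$ access, yields a $\BQP$ distinguisher for the corresponding \LWE variant given only random examples. The template is the Kearns--Valiant paradigm: a pseudorandom function family (PRF) secure against quantum adversaries cannot be learned, because a learner's hypothesis predicts fresh points with advantage over $\nicefrac{1}{2}$, whereas a truly random function cannot be predicted. So the three items reduce to two tasks. First, build (quantum-secure) PRFs from \LWE, respectively Ring-\LWE. Second, show that these PRFs are computable in $\TC^0_2$ (item 1), in $\TC^0$ (item 2), and, after scaling down the security parameter, in $\AC^0$ (the final claim).

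\textbf{Step 1: the PRF.} For item 1 I would use the Banerjee--Peikert--Rosen learning-with-rounding construction. The function is $F_{\mathbf{s}}(x)=\lfloor \inprod{a_x}{\mathbf{s}}\rceil_p$, or the synthesizer/subset-product version $\lfloor \mathbf{a}^{\top}\prod_i S_i^{x_i}\rceil_p$. Its pseudorandomness follows from decision \LWE (\cref{def:LWE}) by the standard hybrid argument. That hybrid argument is a classical black-box reduction, so it remains valid against quantum distinguishers. Superposition queries are not a problem: by Zhandry's result, any PRF secure against classical-query quantum adversaries built this way is also secure under quantum queries, so $\mathrm{QMEM}$ access gives the adversary nothing extra. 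For item 2 I would use the ring variant, where products of polynomials over $R_q$ collapse the subset product and keep the degree of the construction small.

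\textbf{Step 2: circuit implementation.} This is the step I expect to be the main obstacle, particularly hitting depth exactly $2$ in item 1. The argument rests on a few facts. Iterated addition of $n$ numbers of $\bigO{\log q}$ bits lies in $\TC^0$. Multiplication and iterated multiplication modulo small moduli also lie in $\TC^0$ (via Chinese remaindering, Hesse--Allender--Barrington). Rounding is a threshold comparison. For item 1, I would hard-wire $\mathbf{s}$ into the circuit so that $\inprod{a_x}{\mathbf{s}}$ becomes a weighted sum of input bits. Each output bit of $\lfloor\cdot\rceil_p$ is then a symmetric-type function of a weighted sum. Such a function can be written as a threshold of thresholds: first a layer of \textrm{MAJ} gates testing which interval mod $q$ the sum lies in, then a single output gate. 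That gives depth $2$. For item 2 the ring PRF needs iterated products, which places it in $\TC^0$ of some larger constant depth.

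\textbf{Step 3: the learner-to-distinguisher argument and the $\AC^0$ scaling.} Suppose $A$ learns the class to error $\varepsilon<\nicefrac{1}{2}-\nicefrac{1}{\poly{n}}$ under the uniform distribution. The distinguisher runs $A$, answering its queries (classical or $\mathrm{QMEM}$) with the oracle it is testing. It then evaluates the returned hypothesis on a fresh uniform $x$ and compares against the oracle's value. For a PRF the agreement rate is noticeably above $\nicefrac{1}{2}$. For a random function, since $x$ is unqueried with overwhelming probability, the agreement rate is $\nicefrac{1}{2}$. This gives a $\BQP$ solver for (Ring-)\LWE, contradicting the assumption. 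For the $\AC^0$ claim, I would instantiate Ring-\LWE with security parameter $m=\polylog{n}$. Then the $\TC^0$ PRF circuit of size $\poly{m}$ can be converted into an $\AC^0$ circuit of size $2^{\polylog{n}}$, using the fact that a majority on $\polylog{n}$ bits has quasi-polynomial-size $\AC^0$ circuits, and the circuit remains polynomial-size in $n$ after padding. A quasi-polynomial-time learner would then break Ring-\LWE in time $2^{\polylog{n}}=2^{m^{\bigO{1}}}$, which is strongly subexponential in $m$ once the polylog exponent is chosen suitably. That contradicts the final hypothesis. Balancing these parameters is routine but must be done carefully.
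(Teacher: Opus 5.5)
Your route for item 2 and the $\AC^0$ claim matches the cited work. That route is: the Banerjee--Peikert--Rosen Ring-\LWE PRF lies in $\TC^0$, it is secure against superposition queries via Zhandry-type arguments, and Kharitonov-style scaling of the dimension down to $\polylog{n}$ moves it into $\AC^0$. \textbf{The gap is item 1.}

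Your depth-2 circuit in Step 2 computes $F_{\mathbf{s}}(x)=\lfloor\inprod{a_x}{\mathbf{s}}\rceil_p$ with $a_x$ linear in $x$, i.e.\ a single rounded linear form. For such a function, \LWE gives only \emph{weak}-PRF security: samples $(a_i,\lfloor\inprod{a_i}{\mathbf{s}}\rceil_p)$ with $a_i$ uniform, chosen by the challenger and handed over classically. Against chosen or superposition queries the family is insecure:
\begin{itemize}
\item For linear $a_x$, $F_{\mathbf{s}}(0)=0$ always, so a single classical query distinguishes it from a random function.
\item Take the case where $x\mapsto a_x$ is a bijection onto $\bZ_q^n$ (e.g.\ binary decomposition with $q$ a power of $2$, which is where a reduction from \LWE applies). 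Since $F_{\mathbf{s}}$ depends only on $\inprod{a}{\mathbf{s}}\bmod q$, its Fourier spectrum over $\bZ_q^n$ lives on multiples of $\mathbf{s}$. One $\mathrm{QMEM}$ query, or one uniform quantum example, relabelled to $\bZ_q^n$ and followed by a QFT over $\bZ_q^n$, therefore exposes $\mathbf{s}$. This is the same mechanism that makes \LWE and \LPN easy from quantum examples in \cref{sec:qllattices}.
\end{itemize}
Zhandry's theorem does not help here. It proves quantum-query security for specific strong PRFs such as BPR's; it is not a transfer principle from weak to strong security. The genuine PRF on your list, $\lfloor\mathbf{a}^{\top}\prod_i S_i^{x_i}\rceil_p$, needs an iterated product of $n$ matrices over $\bZ_q$. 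That is not known to be in $\TC^0$ at all, which is exactly why the ring version is used for item 2. So your depth-2 argument does not reach it, and no \LWE-based PRF in $\TC^0_2$ is known.

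The cited proof of item 1 does not use a PRF; it follows Kearns--Valiant and Klivans--Sherstov. The decryption function of Regev's \LWE-based public-key cryptosystem is computable by a depth-2 threshold circuit. A distinguisher holding the public key generates labelled encryptions as examples from the ciphertext distribution, runs the quantum \PAC learner, and decrypts a fresh challenge with the hypothesis. Because the security reduction to \LWE is classical, this yields a $\BQP$ algorithm for \LWE. The result is distribution-independent hardness from examples, not uniform-distribution hardness. It also does not survive membership queries: these are decryption queries, i.e.\ a chosen-ciphertext attack, which Regev-type schemes do not withstand. So the membership-query clause of item 1 is more than this argument delivers; the version it proves is \PAC learning from examples.

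Two smaller repairs are needed elsewhere.
\begin{itemize}
\item With $\mathrm{QMEM}$ access, ``$x$ is unqueried'' is not meaningful. You need a quantum-query bound showing that a random function cannot be predicted on a fresh point.
\item In the $\AC^0$ step, a circuit of size $2^{\polylog{n}}$ is not polynomial-size, and padding does not fix this. Use instead that majority on $m=\log^c n$ bits has depth-$d$ $\AC^0$ circuits of size $2^{m^{O(1/d)}}$. This is $\poly{n}$ once $d$ is a large enough multiple of $c$, so the depth must grow with the learner's quasi-polynomial exponent.
\end{itemize}
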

We note that \cref{lem:impossac0} implies that there are no efficient quantum learning algorithms for learning intersections of halfspaces (this class of Boolean functions is in $\TC^0_2$), even with membership queries, assuming the hardness of \LWE. \footnotetext{See~\cite{rlwe} for a formal definition of the \texttt{RLWE} problem and its applications to cryptography.}

Unlike the strong hardness of learning $\TC^0$ circuits, all hope is not lost in the case of $\AC^0$ circuits. We have seen previously that the class $\AC^0_2$ is much easier to learn quantumly using just $\mathrm{QEX}$ and $\mathrm{QREX}$ queries~\cite{bshouty95dnf}, while efficient algorithms are known for classical learners only with access to membership queries. The best-known classical learning algorithm for learning $\AC^0_3$ circuits (\textrm{AND} of DNFs) with bounded top fan-in stems due to \citet{ding2017ac0} in a distribution-free setting. This leads us to the following questions:
\begin{open}[Separations for learning $\AC^0_d$ circuits]\label{open:ac03}
    Does there exist an efficient quantum learner (with access to queries) for $\AC^0_3$ circuits (with bounded top fan-in) under uniform marginals? Alternatively, what is the smallest $d>1$ for which there does not exist an efficient quantum learner (with access to queries) for $\AC^0_d$ circuits (with bounded top fan-in) under uniform marginals? 
\end{open}

\begin{remark}
Although a natural progression to this section is to focus on learning \textit{quantum} counterparts of shallow circuits, we note that such a discussion does not fall under the purview of this survey.
\end{remark}

\section{Conclusion}
Upon closer examination of the numerous results surveyed above, a pattern begins to emerge. For many inherently classical problems, quantum oracles/quantum examples seem to capture the essence of the learning task more naturally compared to their corresponding classical counterparts. We motivate the above statement with the following open question.
\begin{open}[Sampling oracle vs Random Examples in the Classical setting]
    Does there exist a learning task where classical learning algorithms can demonstrate a separation between access to a \textit{succinct representation} of the distribution (for example, in the form of access to sampling oracles) over learning with random examples?
\end{open}
In fact, at first glance, it seems to be extremely difficult to formally distinguish between these two models in the context of classical PAC learning. On the other hand, this is a perfectly natural question to ask for quantum learning algorithms, since quantum algorithms naturally manipulate the succinct representations before the measurement/sampling step. Answering this question for classical learning problems in either direction represents an interesting step towards future research in comparing the relative power of learning oracles and their separations. A negative answer to the above question (the more likely scenario), cements the theoretical advantages of quantum algorithms with respect to learning quantum-encoded classical functions.

\listoftheorems[ignoreall,show={open},title={List of Open Problems}]

\section*{Acknowledgements}
The author thanks Debajyoti Bera, Ambuj Tewari, Piyush Srivastava, and Min Hsiu-Hsieh for providing detailed comments on an earlier draft of this manuscript.
\bibliography{refs}
\begin{appendices}
\section{Quantum Computing}\label{sec:quantappendix}
Classical Boolean circuits are finite directed acyclic graphs with \textrm{And}, \textrm{Or}, and \textrm{Not} gates which take $n$-input bits and outputs $m$-output bits to compute some function $f\in\ztn\mapsto\bZ^m_2$.
Classical circuits take $n$-bits as input and output $m$-bits. A quantum circuit is a generalization of a classical circuit where the input is an $n$-qubit state, the output is an $m$-qubit state, and the \textrm{And}, \textrm{Or}, and \textrm{Not} gates are replaced by \textit{quantum gates}. In this survey, we consider the following augmented (non-minimal) universal gate set as the canonical quantum basis gates: $\left\{X,Y,Z,H,CNOT,S,T\right\}$. 
Sometimes, in quantum circuits, we assume that we are given black-box access to a unitary operator (also known as an \textbf{oracle}), which encodes some information only accessible by queries. See \cref{fig:differentoracles} for examples of oracles.
\begin{figure}[htp]
    \centering
    \begin{tabular}{ll}
     \textbf{Phase Oracles.}& $O^{\pm}_{x}:\ket{i,a}\mapsto{(-1)}^{a.x_i}\ket{i,a}$\\&\\
     \textbf{Function Oracles.}&$O_{f}:\ket{x,a}\mapsto \ket{x,a\oplus f(x)}$\\&\\
     \textbf{Probability Oracles.}&$O_{p}:\ket{x,0,0}\mapsto \sqrt{p_x}\ket{x,1,\psi^1_x}+\sqrt{1-p_x}\ket{x,0,\psi^0_x}$
\end{tabular}    
    \caption{Some common oracles used in quantum algorithms}
    \label{fig:differentoracles}
\end{figure}

In the standard quantum query model formally introduced by \citet{beals01query}, a quantum query algorithm has 3 sets of registers - ancillas $\cA$, workspace $\cW$, and query registers $\cQ$ all initialized to $\ket{0}$. A $k$-query quantum query algorithm proceeds by making $k$ rounds of interleaved unitary operations and calls to some quantum query oracle. The algorithm ends by measuring the ancilla register. We now list two important quantum subroutines that have historically provided most of the speedups in quantum learning theory.

Suppose we have access to a unitary $U$ that prepares a quantum state 
$$
\ket{\psi}=\sqrt{\alpha_\mathrm{good}}\ket{\psi_\mathrm{good}}+\sqrt{1-\alpha_\mathrm{good}}\ket{\alpha_{\mathrm{bad}}},
$$
and we wanted to obtain the "good" state $\ket{\psi_\mathrm{good}}$ with high probability. Simply measuring $\ket{\psi}$ would only give us $\ket{\psi_\mathrm{good}}$ with probability ${\abs{\alpha_\mathrm{good}}}^2$. Hence, on expectation it would take us $\nicefrac{1}{{\abs{\alpha_\mathrm{good}}}^2}$ independent measurements to obtain $\ket{\psi_\mathrm{good}}$ with high probability. However, \citet{bhmt} showed that this can be improved using \nameref{lem:AmplitudeAmplification} as stated below.  
\begin{lem}[Amplitude Amplification]
\label{lem:AmplitudeAmplification}\sloppy
 Let $p>0$ be a constant, and $U$ be a unitary operator s.t. $U\ket{0}=\sqrt{p_0}\ket{\phi_0}\ket{0}+\sqrt{1-p_0}\ket{\phi_1}\ket{1}$ for an unknown $p_0\geq p>0$. There exists a quantum algorithm that makes $\Theta(\sqrt{p^{\prime}/p})$ expected number of calls to $U$ and $U^{-1}$ and outputs the state $\ket{\phi_0}$ with a probability $p^{\prime}>0$.
\end{lem}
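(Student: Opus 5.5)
The plan is to reduce to the standard two-dimensional rotation picture of Grover's algorithm, following \citet{bhmt}. Write $U\ket{0}=\sin\theta\,\ket{\Phi_0}+\cos\theta\,\ket{\Phi_1}$, where $\ket{\Phi_0}=\ket{\phi_0}\ket{0}$ is the ``good'' component, $\ket{\Phi_1}=\ket{\phi_1}\ket{1}$ is the ``bad'' one, and $\sin^2\theta=p_0$. Define the reflection $S_0=I-2\ket{0}\bra{0}$ about the initial state. Define $S_\chi$ as the reflection flipping the sign of states whose flag qubit is $0$; it is implementable with a single $Z$-type gate on the flag register. Set $Q=-US_0U^{-1}S_\chi$. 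Each application of $Q$ costs one call to $U$ and one to $U^{-1}$.

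\textbf{Step 1.} Show that $\mathrm{span}\{\ket{\Phi_0},\ket{\Phi_1}\}$ is invariant under $Q$, and that $Q$ acts there as a rotation by $2\theta$. This gives
\begin{equation*}
Q^{j}U\ket{0}=\sin((2j+1)\theta)\ket{\Phi_0}+\cos((2j+1)\theta)\ket{\Phi_1}.
\end{equation*}
This is a direct $2\times 2$ computation using the fact that the product of two reflections is a rotation by twice the angle between their axes.

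\textbf{Step 2.} If $p_0$ were known, choose $j\approx \pi/(4\theta)-1/2$. Since $\theta\geq\arcsin\sqrt{p}\approx\sqrt{p}$, this uses $\Theta(1/\sqrt{p})$ calls, and measuring the flag then yields $0$, leaving $\ket{\phi_0}$, with constant probability. More generally, for a target success probability $p'$ it suffices to take $j$ with $\sin^2((2j+1)\theta)\geq p'$, which gives $\Theta(\sqrt{p'/p})$ calls. This matches the statement when $p'$ is read as the desired success level relative to the baseline $p$.

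\textbf{Step 3: the main obstacle.} Since $p_0$ is unknown, we cannot pick $j$ directly, and overshooting the rotation angle destroys the success probability. I would use the exponential-search strategy of \citet{bhmt}. Maintain a bound $M$, starting at $1$. In each round, pick $j$ uniformly from $\{0,\dots,M-1\}$, apply $Q^jU$, and measure the flag. On success, output the remaining register; otherwise set $M\leftarrow\min(\lambda M,1/\sqrt{p})$ for a constant $1<\lambda<4/3$.

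The key lemma is the averaging bound
\begin{equation*}
\frac{1}{M}\sum_{j<M}\sin^2((2j+1)\theta)=\frac12-\frac{\sin(4M\theta)}{4M\sin 2\theta}\geq \frac14 \quad\text{once } M\geq 1/\sin 2\theta.
\end{equation*}
From this, before $M$ reaches $\Theta(1/\sqrt{p_0})$ the cost is a geometric series bounded by $O(1/\sqrt{p_0})$. After that point, each round succeeds with probability at least $1/4$, so the expected remaining cost is again geometric and $O(1/\sqrt{p_0})\le O(1/\sqrt{p})$. Scaling by the required $p'$ gives the claimed $\Theta(\sqrt{p'/p})$ expected calls.

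For the lower-bound half of $\Theta$, I would simply cite the optimality of Grover search via the BBBV hybrid argument rather than reprove it.
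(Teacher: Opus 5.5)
The paper gives no proof of its own and simply defers to \citet{bhmt}. Your Steps 1 and 3 reconstruct exactly that argument: the rotation by $2\theta$ from two reflections, then the randomized exponential search with the averaging bound $P_M\geq\frac14$ for $M\geq 1/\sin 2\theta$. This correctly gives the case $p'=1$ that the paper actually uses, namely $\ket{\phi_0}$ with certainty after $O(1/\sqrt{p_0})\leq O(1/\sqrt{p})$ expected calls. You do need one extra line for $p_0\geq\frac12$, where $1/\sin2\theta$ blows up as $p_0\to1$: there every round already succeeds with probability at least $\frac14$.

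The only soft spot is ``scaling by $p'$'', which is not an argument. For general $p'$ the upper bound follows by running the search only with probability $p'$, at expected cost $O(p'/\sqrt{p})\leq O(\sqrt{p'/p})$. But the same trick gives $o(\sqrt{p'/p})$ expected calls when $p'\to0$, so a BBBV citation (after a Markov truncation) supports the lower half of $\Theta$ only for constant $p'$.
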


Let $\ket{\phi_0}=\ket{\psi_\mathrm{good}}$, and $p_0=\abs{\alpha_\mathrm{good}}^2$ in \cref{lem:AmplitudeAmplification}. Then, setting $p^{\prime}=1$ in \cref{lem:AmplitudeAmplification} gives us $\ket{\psi_\mathrm{good}}$ by making $\bigO{\nicefrac{1}{{\abs{\alpha_\mathrm{good}}}}}$ queries to $U$ and $U^{-1}$ in expectation. This gives us a quadratic speedup over the classical case.

One issue with \nameref{lem:AmplitudeAmplification} is that we may not know $p_0$ and hence, how many times we need to invoke $U$ and $U^{-1}$ in order to obtain $\ket{\psi_\mathrm{good}}$. We use the following result by \citet{ambainis:LIPIcs.STACS.2008.1333} to obtain an estimate of $p_0$ with relative error.

\begin{lem}[Relative Amplitude Estimation]\sloppy
Given a constant $p>0$, an error parameter $\varepsilon>0$, a constant $k\geq1$, and a unitary $U$ such that $U\ket{0}=\sqrt{p_0}\ket{\phi_0}\ket{0}+\sqrt{1-p_0}\ket{\phi_1}\ket{1}$ where either ${p_0}\geq p$ or $p_0=0$. Then there exists a quantum algorithm that produces an estimate $\Tilde{p}_0$ of the success probability $p_0$ with probability at least $1-\nicefrac{1}{2^k}$ such that $\left|p_0-\Tilde{p}_0\right|\leq \eps\cdot p_0$ when $p_0\geq p$, and makes $
  O\left(\frac{k}{\varepsilon\sqrt{p}}\left(1+\log{\log{\frac{1}{p}}}\right)\right) 
$ calls to $U$ and $U^{-1}$ in expectation.\label{lem:RelativeEstimation}
\end{lem}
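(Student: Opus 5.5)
The statement is \cref{lem:RelativeEstimation}, relative amplitude estimation. I would obtain it by combining the standard additive-error amplitude estimation of \citet{bhmt} with an exponential search over scales that locates $p_0$ first.

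The base primitive is phase estimation on the Grover iterate $G=-U S_0 U^{-1} S_\chi$. Here $S_\chi$ reflects about states whose last qubit is $\ket{0}$ and $S_0$ reflects about $\ket{0}$. On the two-dimensional invariant subspace spanned by the good and bad components, $G$ acts as a rotation by angle $2\theta$, where $\sin^2\theta=p_0$. Running phase estimation with $M$ applications of $G$ returns $\tilde\theta$ satisfying $|\tilde\theta-\theta|\le \pi/M$ with probability at least $8/\pi^2$. Consequently
\[
|\tilde p_0-p_0|\le 2\pi\frac{\sqrt{p_0(1-p_0)}}{M}+\frac{\pi^2}{M^2},
\]
and this costs $O(M)$ calls to $U$ and $U^{-1}$. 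To make this error at most $\varepsilon p_0$, it suffices to take $M=\Theta\left(1/(\varepsilon\sqrt{p_0})\right)$, since $\sqrt{p_0}/M\lesssim \varepsilon p_0$ and $1/M^2\lesssim \varepsilon^2 p_0\le \varepsilon p_0$.

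Since $p_0$ is unknown, I would first find a constant-factor estimate of it. For $j=0,1,2,\ldots$ up to $\lceil\log(1/\sqrt p)\rceil$, run amplitude estimation with $M_j=2^j$ and stop at the first $j$ whose estimate certifies $\theta\gtrsim 1/M_j$. In the regime $p_0\ge p$, stopping occurs at $2^{j^\ast}=\Theta(1/\sqrt{p_0})\le O(1/\sqrt p)$. The doubling makes the total cost geometric, so it is $O(1/\sqrt{p_0})$. Once the scale is known, one final run with $M=\Theta(2^{j^\ast}/\varepsilon)$ gives the relative-error estimate at cost $O(1/(\varepsilon\sqrt{p_0}))\le O(1/(\varepsilon\sqrt p))$.

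The promise that either $p_0\ge p$ or $p_0=0$ handles the remaining case. If $p_0=0$, then $\theta=0$ and phase estimation returns $0$ deterministically, so the search runs to its cap and outputs $0$. This is consistent with the statement, which only requires relative error when $p_0\ge p$.

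The main obstacle is the failure probability: the target is $1-2^{-k}$ overall, together with the $\log\log(1/p)$ term. Each stage succeeds only with constant probability, so I would take the median of $O(k+\log j)$ independent repetitions at stage $j$. This union-bounds the error over the $O(\log(1/p))$ stages, and it is where the extra $1+\log\log(1/p)$ factor enters. The total expected cost is then
\[
O\left(\frac{k}{\varepsilon\sqrt p}\left(1+\log\log\frac1p\right)\right).
\]
The delicate point is ensuring that false early stopping at a stage much too small (overestimating $p_0$) has probability summably small. A second delicate point is that the expectation, rather than the worst case, absorbs runs that overshoot. If these issues fail, I would fall back on the variable-time amplitude estimation analysis of \citet{ambainis:LIPIcs.STACS.2008.1333} directly.
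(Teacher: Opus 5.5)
Your argument is correct. It is the standard BHMT-based construction (doubling search over $M=2^j$, a final run at $M=\Theta(2^{j^\ast}/\varepsilon)$, and median amplification with a union bound over the $O(\log(1/p))$ scales) that underlies the result of \citet{ambainis:LIPIcs.STACS.2008.1333}, which the paper cites without proof, and it in fact gives the stronger expected cost in terms of $p_0$ rather than $p$. For the statement as literally written, though, the search and both of your ``delicate points'' are unnecessary: for $\varepsilon\le 1$, a single estimate with $M=\lceil 4\pi/(\varepsilon\sqrt{p})\rceil$ already has error at most $\frac{\varepsilon}{2}\sqrt{p_0p}+\frac{\varepsilon^2p}{16}\le\varepsilon p_0$ whenever $p_0\ge p$ (and outputs exactly $0$ when $p_0=0$), so a median of $O(k)$ runs gives the claim with $O(k/(\varepsilon\sqrt{p}))$ calls even in the worst case.
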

Classically, we can use multiplicative Chernoff bounds to perform mean estimation with relative error by sampling $\tildeO{\frac{1}{\eps^2}}$ independent copies of $U\ket{0}$ and setting $p=\bigO{1/m}$, where $\ket{\phi_0}$ is a superposition over $m$ basis states. 
Hence, using \nameref{lem:RelativeEstimation}, we can perform mean estimation with relative error $\eps$ with a quadratic speedup over classical mean estimation techniques in terms of dependence on $\eps$ and $m$.
\section{Boolean functions continued}\label{sec:booleanappendix}
There are many semantically equivalent representations of Boolean functions. In this survey, we are concerned with three main types of representations of Boolean functions, as listed below. 
\begin{itemize}
    \item \textbf{Truth Table}: In this representation, we explicitly list the value of the Boolean function for all of its $2^n$ possible inputs.
    \item \textbf{Boolean Circuits}: A Boolean circuit is a finite \textrm{DAG} with clearly designated input nodes, internal nodes, and an output node. For this survey, every node has a fan-in of at most $2$. The input nodes are the bits of the input string, and have fan-in $0$; the internal nodes consist of rudimentary Boolean functions belonging to some basis set of Boolean functions (eg.  \textrm{AND}s, \textrm{OR}s, and \textrm{NOT}s), and the output node (with fan-out $0$) stores a binary value corresponding to the output of the evaluation of the input bits.
    \item \textbf{Boolean Formulas / Propositional Formulas}: We can express any arbitrary Boolean function $f$ as an algebraic formula consisting of rudimentary Boolean functions belonging to some basis set of Boolean functions, applied to its input bits. Formally, any Boolean circuit where every node has fan-out at most $1$ is a Boolean formula. Some well-known canonical representations of Boolean functions are the \DNF and \CNF representations, defined with respect to  \textrm{OR}s and \textrm{AND}s over the input literals. Unlike Truth tables or Circuits, Boolean formulas correspond to a covering of $f^{-1}(0)$ or $f^{-1}(1)$.  
\end{itemize}
\begin{remark}
    Given a universal basis set $\cB$ for Boolean functions, we can define Boolean circuits such that the internal nodes only contain gates from $\cB$. For example, consider the basis sets $\cB_1=\{\textrm{NAND}\}$ or $\cB_1=\{\textrm{NOR}\}$. Similar to Boolean circuits, we can also write a formula in terms of different universal basis sets over Boolean functions. However, for this survey, the canonical basis set for classical Boolean circuits and formulas is $\{\textrm{AND}, \textrm{OR}, \textrm{NOT}\}$. 
\end{remark}
The representation size of a Boolean function can be defined depending on the underlying choice of representation. 
\begin{example}
    A standard representation choice for a Boolean function $f$ is the depth of the \textit{minimal} binary decision tree representation of $f$. Another standard choice would be the size of the minimal width DNF representation of $f$.
\end{example}

\subsection{Fourier Analysis of Boolean functions}\label{sec:prefourier}
Fourier-analytic techniques constitute a very powerful family of tools for designing learning algorithms in general and particularly for learning over the Boolean hypercube~\cite{LMN93}. In this section, we will limit our discussions to Fourier analysis for $\bF_2$-valued ($\pm 1$-valued) or $\bZ_2$-valued ($\{0,1\}$-valued) functions, but most results will also apply to real-valued functions over the Boolean hypercube. 
Every Boolean function $f(x)$, can be \textit{uniquely} represented as a multilinear polynomial known as the \textbf{Fourier representation} of $f$, as follows.
\begin{equation}\label{eq:fourierexpansion}
    f(x) = \underset{\ccS\subseteq[n]}{\sum} \hat{f}(\ccS) \,\nchi_{\ccS}(x),\;\;f:\ftn\mapsto\bF_2.
\end{equation}
In \cref{eq:fourierexpansion}, $\nchi_{\ccS}(x)=\prod_{i\in\ccS}x_i$ (we define explicitly $\nchi_{\ccS}(\varnothing)=-1$) is known as a \textbf{parity monomial}, and $\hat{f}(\ccS)=\simpexpect{f\cdot\nchi_{\ccS}}$ is known as the \textbf{Fourier coefficient} of $f$ with respect to to the subset of literals $\ccS$. Note that if $f:{\{0,1\}}^n\mapsto\bF_2$, then $\nchi_{\ccS}(x)={(-1)}^{\sum_{i\in\ccS}x_i}$. 
The set of all parity monomials $\bP=\big\{\nchi_{\ccS} \st \ccS\subseteq[n]\big\}$ forms a basis for all real-valued functions on the Boolean hypercube $\ftn$. The vector space spanned by $\bP$ forms an inner product space, i.e., for any $f,g:\ftn\mapsto\bF_2$,
    \begin{equation}
        \inprod{f}{g} = \frac{1}{2^n}\underset{x\in\ftn}{\sum} f(x)\cdot g(x) = \expect{\cU\left(\ftn\right)}{f(x)\cdot g(x)} 
        1-2\underset{\cU\left(\ftn\right)}{\prob}\left[f\neq g\right].
    \end{equation}
The above definition of inner product captures the notion of distance between two Boolean functions, and also shows us that the set of all parity monomials forms an orthonormal basis set, since $\inprod{\nchi_{\ccS}}{\nchi_{\ccS}}=1$ for all $\ccS\subseteq[n]$, and $\inprod{\nchi_{\ccS}}{\nchi_{\cT}}=0$ if $\ccS\neq\cT$. This allows us to obtain the following observation.
\begin{obs}[Fourier spectrum of a Boolean function]
    For any Boolean function $f$, the set of squared Fourier coefficients forms a probability distribution known as the \textbf{Fourier spectrum} or \textbf{Fourier distribution} of $f$, and is denoted by $\spectrum{f}$.
    $$\underset{\ccS\subseteq[n]}{\sum}{{\hat{f}(\ccS)}^2}=1\; f,g:\ftn\mapsto\bF_2.$$
\end{obs}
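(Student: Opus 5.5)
Parseval's identity (the displayed sum) is the statement to prove. My plan is to derive it from the orthonormality of the parity monomials $\bP$ already recorded above, together with the fact that $f$ is $\pm1$-valued.

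First I will take the inner product of $f$ with itself in two ways. On one side, since $f(x)\in\{-1,+1\}$ we have $f(x)^2=1$ for every $x$, so
\[
\inprod{f}{f}=\expect{\cU\left(\ftn\right)}{f(x)^2}=1 .
\]
Equivalently, this is the distance formula $1-2\prob_{\cU}[f\neq f]=1$. On the other side, I will substitute the Fourier expansion of \cref{eq:fourierexpansion} into both arguments and use bilinearity of the inner product:
\[
\inprod{f}{f}=\sum_{\ccS,\cT\subseteq[n]}\hat{f}(\ccS)\hat{f}(\cT)\inprod{\nchi_{\ccS}}{\nchi_{\cT}} .
\]
Orthonormality, $\inprod{\nchi_{\ccS}}{\nchi_{\cT}}=\mathbb{I}_{\ccS=\cT}$, kills every cross term and leaves $\sum_{\ccS}\hat{f}(\ccS)^2$. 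Equating the two expressions gives the identity.

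It then remains to note that the numbers $\hat{f}(\ccS)^2$ are nonnegative reals, since $\hat{f}(\ccS)=\simpexpect{f\cdot\nchi_{\ccS}}$ is real. Because they sum to $1$, they define a probability distribution on subsets of $[n]$, which is $\spectrum{f}$.

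The only step needing care is justifying orthonormality itself, should it not be taken as given. For $\ccS\neq\cT$, the product $\nchi_{\ccS}\nchi_{\cT}$ equals $\nchi_{\ccS\triangle\cT}$, using $x_i^2=1$. Pick an index $i\in\ccS\triangle\cT$. Under the uniform distribution, flipping the bit $x_i$ is a bijection of $\ftn$ that negates $\nchi_{\ccS\triangle\cT}$, so the expectation of $\nchi_{\ccS\triangle\cT}$ is $0$.

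A notational issue arises in the stated convention $\nchi_{\varnothing}=-1$. I would use $\nchi_\varnothing\equiv1$ instead; in any case only $\nchi_\varnothing^2=1$ matters, so the computation is unaffected.
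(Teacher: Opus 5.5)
Your proof is correct and follows the paper's route: the paper gets the observation directly from the orthonormality of the parity monomials $\bP$ (Parseval's identity), and you make explicit the two ingredients it leaves implicit, namely $\inprod{f}{f}=\simpexpect{f^2}=1$ because $f$ is $\pm1$-valued, and the bit-flip argument for orthogonality. Your remark on the $\nchi_{\varnothing}$ sign convention is also sound, since neither orthonormality nor $\hat{f}(\varnothing)^2$ depends on the sign of any basis element.
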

We now state the \nameref{lem:fouriersampling} lemma by \citet{BV97}.
\begin{lem}[Fourier Sampling]\label{lem:fouriersampling}
    Given access to a function oracle $O_{f}$ (see \cref{fig:differentoracles}), there exists a quantum algorithm (see \cref{alg:FourierSampling}) that produces the state $\sum_{\ccS\in[n]}\hat{f}(\ccS)\ket{\ccS}$, using only $1$ query to $O_f$ and $\bigO{n}$ gates.     
\end{lem}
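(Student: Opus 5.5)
The plan is the standard Bernstein--Vazirani Fourier sampling circuit, written in the $\{0,1\}$ encoding. I will use the function oracle $O_f:\ket{x,a}\mapsto\ket{x,a\oplus f(x)}$ in its phase-kickback form. Identify $\bF_2=\{-1,+1\}$ with $\{0,1\}$ via $b\mapsto(-1)^b$, so that $f(x)=(-1)^{\tilde f(x)}$ for a $\{0,1\}$-valued $\tilde f$. Then $\nchi_{\ccS}(x)=(-1)^{\sum_{i\in\ccS}x_i}=(-1)^{s\cdot x}$, where $s\in\ztn$ is the indicator vector of $\ccS$.

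\textbf{Step 1 (phase oracle).} Prepare $\ket{0}^{\otimes n}\ket{1}$ and apply $H^{\otimes(n+1)}$. This gives $2^{-n/2}\sum_x\ket{x}\otimes\ket{-}$. One call to $O_f$ acts on $\ket{x}\ket{-}$ as $(-1)^{\tilde f(x)}\ket{x}\ket{-}$, since $\ket{a\oplus\tilde f(x)}$ merely swaps the two components of $\ket{-}$. Discarding the ancilla, which is now in the product state $\ket{-}$, leaves $2^{-n/2}\sum_x f(x)\ket{x}$.

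\textbf{Step 2 (Hadamard transform).} Apply $H^{\otimes n}$ again. Using $H^{\otimes n}\ket{x}=2^{-n/2}\sum_s(-1)^{s\cdot x}\ket{s}$, the state becomes
\[
\sum_{s}\Bigl(\frac{1}{2^n}\sum_x f(x)(-1)^{s\cdot x}\Bigr)\ket{s}=\sum_{\ccS\subseteq[n]}\simpexpect{f\cdot\nchi_{\ccS}}\ket{\ccS}=\sum_{\ccS\subseteq[n]}\hat f(\ccS)\ket{\ccS}.
\]
This is exactly the claimed state. Its normalization is Parseval, $\sum_{\ccS}\hat f(\ccS)^2=1$, which is the observation on the Fourier spectrum above.

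\textbf{Resource count.} The circuit uses $2n+1$ Hadamards plus one $X$ to prepare $\ket{1}$, so $\bigO{n}$ gates, and exactly one query to $O_f$.

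The only real obstacle is bookkeeping the conventions. The paper's footnote defining $\nchi_{\varnothing}$ should be read as the constant character $\nchi_\varnothing\equiv1$, which corresponds to $s=0$. One must also note that the ancilla returns to $\ket{-}$ independently of $x$, so tracing it out leaves a pure state on the first register. Once these are fixed, the calculation above is immediate.
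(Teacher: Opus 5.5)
Your proposal is correct and is essentially the paper's argument: the paper's Fourier Sampling algorithm is the same Hadamard--query--Hadamard circuit you analyze, and your amplitude computation $2^{-n}\sum_x f(x)(-1)^{s\cdot x}=\hat{f}(\ccS)$ is the intended verification. What you add is the explicit phase-kickback step with a $\ket{-}$ ancilla, which turns the XOR-type function oracle of the figure into the phase oracle $\ket{x}\mapsto f(x)\ket{x}$ that the paper's algorithm simply takes as input, together with the reading $\nchi_{\varnothing}\equiv 1$ that the identity needs to hold exactly.
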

Unlike in the quantum case, explicitly constructing the distribution $\spectrum{f}$ is harder classically, since any $\hat{f}(\ccS)=\simpexpect{f\cdot\nchi_{\ccS}}$ depends on all $2^n$ values of $f$. The state produced in \nameref{lem:fouriersampling} allows us to sample from $\spectrum{f}$, i.e., sample $\ccS$ with probability ${\hat{f}(\ccS)}^2$. We can use \nameref{lem:fouriersampling} to learn linear functions $f$ using one query to $O_f$~\cite{BV97}, whereas, classically, this takes $\bigO{n}$ queries.
\begin{algorithm}[H]
    \DontPrintSemicolon
    \SetKwComment{Comment}{~$\vartriangleright$~}{}
    \SetKwInOut{Input}{Output}
    \KwIn{Oracle $O_f:\ket{x}\mapsto f(x)\ket{x}$, providing black-box access to $f$.}
    {Start with $\ket{0^n}$ and apply $H^{\otimes n}$ to obtain 
    $
        \ket{+^{\otimes n}}=\frac{1}{\sqrt{2^n}}\sum_{x\in\ftn}\ket{x}.
    $}\;
    {Query $O_f$ to obtain the state $\ket{\psi}=\frac{1}{\sqrt{2^n}}\sum_{x\in\ftn}f(x)\ket{x}.
    $}\;
    {Apply $H^{\otimes n}$ to $\ket{\psi}$.}{\label{line:fourierbasis}}\;
    \caption{The Fourier Sampling Algorithm}
    \label{alg:FourierSampling}
    \KwOut{The state $\sum_{\ccS\in[n]}\hat{f}(\ccS)\ket{\ccS}$.}
\end{algorithm}

\section{Computational Learning Theory}\label{sec:learningappendix}
\subsection{The Hypothesis Class associated with a Learner}\label{sec:hypoappendix} 
An implicit point central to the study of computational aspects of learning theory is that every learning algorithm $A$ has its own associated hypothesis class $\cH_A$. The learner $A$ can only output a hypothesis from $\cH_A$.\footnote{We drop the subscript from the hypothesis class when the choice of the learner is clear.} We also assume here that all hypothesis classes $\cH_A$ are \textbf{polynomially evaluatable} by $A$, i.e., given any $x\in\cX$ and $h\in\cH_A$, if $A$ can compute $h(x)$ in time $\poly{n,\size{\cH_A}}$.

\begin{obs}
    We note that certain learning algorithms are \textit{more expressive} than others, courtesy of their respective associated hypothesis classes. However, a learning algorithm with the most expressive hypothesis class may not always be the best choice since learners with larger hypothesis classes would need more computational resources to find the optimal hypothesis in their associated hypothesis class.
\end{obs}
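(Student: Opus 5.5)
The observation has two halves. I would make each precise and prove it separately: first that a larger hypothesis class is \emph{more expressive}, then that it \emph{may cost more resources}. Throughout, fix two learners $A_1,A_2$ whose associated hypothesis classes are nested, $\cH_{A_1}\subseteq\cH_{A_2}$. Expressivity will mean the achievable error $\opterr{\fD,c}{\cH_A}$, or $\opterr{\fD}{\cH_A}$ in the agnostic setting of \cref{def:agnosticpacv1}.

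\textbf{Expressivity.} This is immediate from the definition of optimal error as a minimum. Since $\cH_{A_1}\subseteq\cH_{A_2}$, we get $\min_{h\in\cH_{A_2}}\err{\fD,c}{h}\leq\min_{h\in\cH_{A_1}}\err{\fD,c}{h}$ for every $\fD$ and $c$. Hence $A_2$ can match any benchmark that $A_1$ can, and realizability in \cref{def:pac} is preserved when passing to the larger class. To show the inequality can be strict, I would take $\cH_{A_1}$ to be parities and $\cH_{A_2}$ to be decision trees, with $c$ an AND of two variables under $\cU(\ftn)$. Then $c\in\cH_{A_2}$, while every parity disagrees with $c$ on a constant fraction of inputs.

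\textbf{Cost.} The resource cost comes from two sources.
\begin{enumerate}
\item \emph{Information.} Nested classes satisfy $\mathrm{VCdim}(\cH_{A_1})\leq\mathrm{VCdim}(\cH_{A_2})$. By the lower bounds quoted before \cref{lem:classicalequivquantumPACsample}, and by \cref{lem:classicalequivquantumPACsample} itself, any learner that outputs an $\varepsilon$-good hypothesis from $\cH_{A_2}$ for all targets in $\cH_{A_2}$ therefore needs $\bigOmega{\frac{d_2}{\varepsilon}}$ samples (or $\bigOmega{\frac{d_2}{\varepsilon^2}}$ agnostically). This holds for classical and quantum learners alike, and the bound can be much larger when $d_2\gg d_1$. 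Parities versus decision trees again give the separation: parities have VC-dimension $n$, while decision trees of size $t$ have VC-dimension of order $t$.
\item \emph{Computation.} Selecting the optimal hypothesis, i.e.\ running ERM over $\cH_{A_2}$, can be intractable even when it is easy over $\cH_{A_1}$. For parities, ERM in the realizable case is Gaussian elimination. For minimal decision trees it is $\NP$-hard by \citet{hyafil76}, consistent with the superpolynomial lower bound of \citet{koch2023superpolynomial}.
\end{enumerate}

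The qualifier ``may not always'' only asks for \emph{some} pair where extra expressivity is paid for, so an explicit example suffices. I expect the main obstacle to be the computational half. Hardness of finding an optimal hypothesis over $\cH_{A_2}$ is only conditional (on $\P\neq\NP$, or on the cryptographic assumptions behind \cref{lem:impossac0}). I would therefore state that part conditionally and keep the information-theoretic VC-dimension argument as the unconditional core.
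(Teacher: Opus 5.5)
The paper gives no argument for this observation. It is an informal remark, essentially the bias--complexity tradeoff of \citet{uml2014}, so your formalization has to carry everything. Your expressivity half is fine. So is the sample-complexity half, provided it is read agnostically, with each learner competing against $\opterr{\fD}{\cH_A}$ for its own class. Note that this lower bound is driven by the benchmark class, not the output class: for a fixed target class $\cC\subseteq\cH_A$, a richer $\cH_A$ forces no extra samples. The genuine gap is in the computational half, because your example does not exhibit the phenomenon. For all parities to lie in $\cH_{A_2}$, the trees must have unbounded size, since a parity on $k$ variables needs $2^k$ leaves. So $\cH_{A_2}$ is the class of all Boolean functions on $n$ bits, and ERM over it is trivially polynomial. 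Repeatedly split on a coordinate on which the sample points at the current node differ until every leaf is pure; this gives a consistent tree with at most $m$ leaves. In the agnostic case, first replace each sample point's label by its majority label. The hardness in \citet{hyafil76}, and the proper-learning lower bound of \citet{koch2023superpolynomial}, concern size-bounded trees. Trees of size $\leq s$ contain no parity on more than $\log s$ variables, so the nesting with parities breaks. The same tension affects your VC remark, since size-$t$ trees contain all parities only when $t\geq 2^n$. Your pair is in fact a case where the price of expressivity is purely statistical ($d_2=2^n$ versus $d_1=n$) and not computational at all.

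The repair needs more than a different citation, because the observation's ``since'' clause is false as a general principle: the cost of finding an optimal hypothesis is not monotone under inclusion. Besides the all-functions class above, the pair 3-term DNF $\subseteq$ 3-CNF goes the other way. Consistency is $\NP$-hard for the smaller class (Pitt--Valiant, via graph 3-colouring) but polynomial for the larger one. So the computational half can only be an existential claim with a carefully chosen witness. One such witness is conjunctions $\subseteq$ 3-term DNF: the elimination algorithm solves the first, while consistency is $\NP$-hard for the second. That pair has only a constant-factor VC gap, so use separate witnesses for the two halves, which your existential reading permits. The real obstacle is therefore finding a pair where ERM over the larger class is genuinely hard, not the conditional nature of that hardness.
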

\begin{obs}
Note that given an arbitrary choice of distribution $\fD$ over the $\cX$ and some unknown labeling concept $c\in\cC$, assuming that a learning algorithm $A$ can output an ideal hypothesis is not always realistic. For example, consider scenarios when $\cH_A$ is extremely large or the concept $c$ does not belong to $\cH_A$.    
\end{obs}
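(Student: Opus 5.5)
The statement is informal, so the plan is to make it precise and then exhibit two concrete scenarios in which no learner $A$ can be expected to output a hypothesis $h$ with $\err{\fD,c}{h}=0$ (an ideal hypothesis in the sense of \cref{rem:ideal}). Each scenario corresponds to one of the two situations named in the observation. The claim to prove is: there exist natural choices of $(\cC,\cH_A,\fD)$ for which demanding an ideal hypothesis is either impossible outright or impossible with polynomially many examples.

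\textbf{Case 1: $c\notin\cH_A$.} I would take $\cX=\ftn$ finite and let $\fD$ have full support, so $\mu:=\min_{x}\fD(x)>0$. For every $h\in\cH_A$ we have $h\neq c$ as functions, so some $x_h$ satisfies $h(x_h)\neq c(x_h)$. Then
\[
\err{\fD,c}{h}\geq\fD(x_h)\geq\mu>0 .
\]
Hence $\opterr{\fD,c}{\cH_A}\geq\mu$, and no output of $A$ can be ideal, whatever its sample or time budget. This argument is immediate; the only point to note is that full support is what prevents the disagreement points from carrying zero mass.

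\textbf{Case 2: $\cH_A$ is very large.} Here I would show that exact zero error forces a blow-up in samples. Let $\cC=\cH_A$ shatter $\ftn$ (e.g.\ all Boolean functions, or truth tables as in \cref{rem:representationchoice}), let $\fD=\cU(\ftn)$, and draw $c$ uniformly from $\cC$. If some $x$ never appears in the training set $\ccS$, then $c(x)$ is a uniform bit independent of $\ccS$. So $A$ errs on $x$ with probability $\nicefrac{1}{2}$, which gives error at least $2^{-n}$. By the coupon-collector bound, all $2^n$ points appear only when $m=\bigOmega{2^n n}$, which exceeds any $\poly{n,\nicefrac{1}{\delta}}$. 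Hence, for polynomial $m$, the probability that $A$ outputs an ideal hypothesis is at most $\nicefrac{1}{2}+o(1)$. By \cref{lem:classicalequivquantumPACsample}, the same conclusion holds for quantum learners with $\mathrm{QEX}$ access. The main obstacle is turning ``$c$ drawn at random'' into a fixed worst-case $c$; I would handle this by a standard averaging (Yao-style) argument.

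As a complementary computational remark, even when $\ccS$ does determine $c$, finding a consistent $h$ in a rich class (e.g.\ a minimal decision tree, via the $\NP$-completeness result of \citet{hyafil76}) can be intractable. This reinforces that relaxing to $\err{\fD,c}{h}\leq\varepsilon$ in \cref{def:pac} is the realistic requirement.
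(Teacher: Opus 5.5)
The paper gives no proof of this observation. It is an informal remark, and its only supporting context is the preceding observation that larger hypothesis classes need more computational resources to search. Your proposal goes further by formalizing both scenarios, and most of it is sound.

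Case 1 is correct and is the natural reading of ``$c\notin\cH_A$''. You rightly note that full support of $\fD$ is essential.

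The classical half of Case 2 is also correct, but coupon collector is overkill. Once $m<2^n$, at least $2^n-m$ points are unseen deterministically. So for uniformly random $c$, the probability of an ideal output is at most $2^{-(2^n-m)}$, far below $\nicefrac{1}{2}+o(1)$. Passing from random $c$ to a fixed $c$ is plain averaging, not an obstacle.

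The step that does not go through as written is the quantum extension. \cref{lem:classicalequivquantumPACsample} bounds PAC sample complexity, which is a worst case over all distributions. Its lower bound is witnessed by highly non-uniform distributions, so it says nothing about a learner that only has to succeed under $\fD=\cU(\ftn)$. Your unseen-point argument also has no quantum analogue, since every $\mathrm{QEX}$ copy is a superposition over all of $\ftn$.

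A correct route is the paper's own argument in the proof of \cref{lem:exactsim}. An ideal output distinguishes $c$ from the function $c'$ obtained by flipping $c$ at one point. Under $\cU(\ftn)$ the two $\mathrm{QEX}$ states have per-copy overlap $1-2^{-n}$, which forces $\bigOmega{2^n\log(\nicefrac{1}{\delta})}$ copies. Alternatively, Holevo plus Fano gives $m(n+1)\geq p\,2^n-1$ for success probability $p$.

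There are two framing caveats.

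First, the paper allows sample complexity polynomial in $\size{\cH_A}$. \cref{rem:representationchoice}, which you cite, says the truth-table class is efficiently learnable in exactly this sense. With $\size{\cH_A}=2^n$, some $\bigO{2^n(n+\log(\nicefrac{1}{\delta}))}$ samples reveal every point and yield the ideal hypothesis. So you must state explicitly that you measure cost in $n$ alone.

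Second, the obstruction in Case 2 comes from $\cC$, not from the size of $\cH_A$. The same single-point argument applies to the conjunctions $x_1\wedge\cdots\wedge x_n$ and $x_1\wedge\cdots\wedge x_{n-1}$, which differ only at $(1,\ldots,1,0)$. Conversely, a huge $\cH_A$ paired with a small $\cC$ poses no statistical difficulty. So Case 2 exhibits a large $\cH_A$ with an unrealistic ideal output, but not because $\cH_A$ is large. The paper's intended reading of ``extremely large'' is computational, which your closing remark only partly captures. Note also that a decision tree consistent with $\ccS$ is trivial to build by memorization; it is minimality that \citet{hyafil76} show to be $\NP$-complete.
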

In settings where obtaining an ideal hypothesis is both computationally and mathematically impossible, the main objective of a learning algorithm is to \textit{efficiently} output a hypothesis $h$ such that $h\approx c$ with respect to some accuracy metric such as error.
\subsection{Classifying PAC learners based on output hypotheses}\label{sec:tax1}
\subsubsection{Proper and Improper PAC learning} 
Recall \cref{rem:representationchoice} and the preceding discussions while defining the notions of learnability and efficient learnability. In \cref{def:pac}, the choice of representation of the output hypothesis $h\in\cH_A$ produced by a learning algorithm $A$ need not necessarily be the same as the unknown target concept class $\cC$ it is trying to learn.
If the choice of representation of $\cH_A$ is the same as the choice of representation as $\cC$, i.e., $h\in\cC$, then the learner $A$ is a \textbf{proper \PAC learner} for $\cC$. Otherwise, when $h\notin\cC$, $A$ is considered an \textbf{improper \PAC learner} for $\cC$.\footnote{(Im)Proper learning is also known as representation-(in)dependent learning.}

Similar to the realizable setting, in the agnostic setting, if $A$ outputs a hypothesis from the benchmark class $h\in\cC$, then $A$ is a proper agnostic learner for $\cC$. Otherwise, $A$ is an improper agnostic learner for $\cC$.

\subsubsection{Weak and Strong Learning}
Directly designing a \PAC learner for an unknown target concept with respect to an unknown distribution is often challenging. 
In practice, it is easier to design a \textit{weak} \PAC learner - something that succeeds with $\sim 51\%$ accuracy, and then amplify the success probability of such learners to be arbitrarily close to $100\%$. In other words: Given an ensemble of learning algorithms that are \PAC learners in a \textit{weak} sense, we can combine the above ensemble of \textit{weak} learners using \textbf{ensemble learning algorithms} to obtain \textit{strong} \PAC learners\footnote{In the realizable setting, strong \PAC learners satisfy \nameref{def:pac}.}. Later, we formalize the notions of \textit{weak} and \textit{strong} PAC learners under the various noise models explored earlier.

\subsection{Criticisms of \textrm{MQ} oracles and their variants}\label{sec:orclcritism}
Even though a significant portion of the learning theory literature has been dedicated to constructing \PAC learning algorithms with access to \MQ oracles for various Boolean concept classes, it has not been the subject of much practical work. This is possibly due to the paradoxical nature of \MQ oracles themselves: If it were possible to construct \MQ oracles for some target concept class $\cC$ in practice, there would be no need to construct \PAC learners for $\cC$. On the other hand, sampling query oracles such as \ac{EX}, \ac{REX}, \ac{AEX} (and their quantum counterparts) simply extend the usual notion of supervised learning by reframing the definitions in terms of oracle queries.

\begin{remark}
    An infamous experiment was conducted by \citet{baum1992query} where they tried to learn a linear classifier for handwritten characters and digits using human annotators as \MQ oracles. It was seen that, in addition to making out-of-distribution queries, sometimes the learner tended to make out-of-domain queries as well. This may be noted as anecdotal evidence in support of the hypothesis that \MQ oracles may not be useful in practice.
\end{remark}

\subsection{PAC learning under various Label Noise Models}\label{sec:prenoise}
\subsubsection{The Realizable~/~Noiseless setting}
The learning setup introduced in \cref{sec:prelearning} is called the \textbf{realizable setting} since we assume that there always exists a target concept $c\in\cC$ that \textbf{realizes} the labeling of the instances. We also denote this setting as the \textbf{noiseless setting} since the labels in the training set are available to the learner without change. 

\subsubsection{Bounded Noise setting}
We can generalize the realizable setting to account for more realistic learning scenarios as follows. Let $\ccS=\{(x_1,c(x_1)),\ldots,(x_m,c(x_m))\}$ be the training set used to train a learning algorithm $A$. Suppose an adversary \textit{corrupts} the labels of the training set $\ccS$ before it is obtained by the learner $A$ and used to formulate the hypothesis. It is straightforward to see that the sample complexity and time complexity guarantees of \nameref{def:pac} would have to depend on the noise rate. 

In the \RCN setting introduced by \cite{Angluin1988}, we assume that the labels of the training set are flipped with probability $0\leq\eta<\nicefrac{1}{2}$, i.e., given training set $\ccS={(x_i,y_i)}_{i\in[m]}$, set each $y_i=\overline{y_i}$ independently with probability $\eta$. In this setting, the sample and time complexity of a PAC learner depends on $\poly{\nicefrac{1}{\varepsilon},\nicefrac{1}{\delta},n,\size{\cC},\nicefrac{1}{1-2\eta}}$. Later, we will redefine the \RCN setting in terms of noisy oracles. An example of a more challenging class of bounded noise is \textbf{Massart noise}, where each label $y_i$ is flipped with probability $\eta_i\leq\eta<\nicefrac{1}{2}$.

\subsubsection{Unbounded Noise setting} A far more challenging noise model is the agnostic \PAC setting~ \cite{haussler1992agnostic,kearns92agnostic}. In this noise model, each label $y_i$ is corrupted with probability $\eta_i<\nicefrac{1}{2}$. Here, we note that there is no bound $\eta$ s.t. $\eta_i<\eta,\forall i\in[\abs{\ccS}]$. 
If the rate of corruption of labels is unbounded, it is unclear whether a learning algorithm $A$, which was an efficient learner for $\cC$, even remains a \PAC learner for $\cC$ after the noise has been added to the training set.
In the unbounded noise model, we no longer assume that there is a fixed underlying concept $c\in\cC$ that labels the instances $x\in\cX$, since the training set itself can be inconsistent.\footnote{If, for some $i\in[\abs{\ccS}]$, the noise rate $\eta_i$ is very close to $\nicefrac{1}{2}$, the training set can contain samples $\{x_i,+1\}$ and $\{x_i,-1\}$. Hence, the labeling of instances is no longer considered a function.}\footnote{Inconsistency also arises in the bounded noise model, but can be resolved easily by majority vote.} Rather, we assume that the labeled instances $(x,y)$ are jointly sampled from a distribution $\fD$ over $\cX\times\cY$. Therefore, the learning algorithm is \textit{agnostic} to the choice of the target concept.
\subsection{Fourier analytic techniques for Learning}
The crux of learning techniques for a Boolean function $f$ is as follows: 
\begin{itemize}
    \item Given some information about the Fourier spectrum of $f$, explicitly construct a weak learner for $f$.
    \item Use an ensemble learning algorithm such as Boosting to obtain an approximate strong learner for $f$ w.h.p., i.e., obtain a PAC learner for $f$.
\end{itemize}

The \textbf{Fourier weight} of a Boolean function $f$ at \textbf{degree} $k$ is defined as the sum of squared Fourier coefficients of subsets of Hamming weight exactly equal to $k$, i.e., $\cW^{=k}{f}=\underset{\ccS\subseteq[n],\abs{\ccS}=k}{\sum}{{\hat{f}(\ccS)}^2}.$ We can similarly define the quantities $\cW^{\leq k}{f}$ and $\cW^{\geq k}{f}$. A Boolean function $f$ is $\varepsilon$-spectrally concentrated on a set $\cA\subseteq 2^{[n]}$ if
    $
        \underset{\ccS\subseteq[n],\ccS\notin \cA}{\sum}{{\hat{f}(\ccS)}^2}\leq\varepsilon.
    $
Therefore, a Boolean function $f$ is $\varepsilon$-spectrally concentrated on \textbf{degree} $\leq k$, if  $\cW^{>k}{f}\leq\varepsilon.$
We now outline some important results.

\begin{lem}[Low Degree Learning~\cite{LMN93}]\label{lem:lowdegree}
    If a Boolean function $f$ is $\eps$-concentrated on a set $\cA$, and we know $\cA$ explicitly, then there exists a learning algorithm $A$ that can $(\eps,\delta)$-PAC learn $f$ in time $\poly{\abs{\cA},n,\nicefrac{1}{\eps},\nicefrac{1}{\delta}}$ with query access to a $\ex{\cU}{f}$  oracle. If $f$ is $\eps$-concentrated up to degree $k$, then $A$ can learn $f$ in time $\poly{n^k,\nicefrac{1}{\eps},\nicefrac{1}{\delta}}$ with query access to a $\ex{\cU}{f}$  oracle.
\end{lem}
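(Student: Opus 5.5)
The plan is the standard Linial--Mansour--Nisan low-degree algorithm: estimate every Fourier coefficient indexed by $\cA$ from uniform random examples, and output the sign of the truncated expansion. Concretely, draw $m$ labeled examples $(x_i,f(x_i))$ from $\ex{\cU}{f}$ and, for each $\ccS\in\cA$, form the empirical estimate
\[
\tilde{f}(\ccS)=\frac{1}{m}\sum_{i\in[m]} f(x_i)\,\nchi_{\ccS}(x_i).
\]
This is unbiased for $\hat{f}(\ccS)=\simpexpect{f\cdot\nchi_{\ccS}}$, and each summand lies in $\{-1,+1\}$. Set $g=\sum_{\ccS\in\cA}\tilde{f}(\ccS)\nchi_{\ccS}$ and output $h=\sign{g}$. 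This $h$ is evaluatable in time $\poly{\abs{\cA},n}$, which gives the claimed running time.

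\textbf{Sample bound.} By Hoeffding's inequality and a union bound over $\cA$, taking
\[
m=\bigO{\frac{\abs{\cA}}{\eps}\log\frac{\abs{\cA}}{\delta}}
\]
ensures that, with probability at least $1-\delta$, every $\ccS\in\cA$ satisfies $\abs{\tilde{f}(\ccS)-\hat{f}(\ccS)}\leq\sqrt{\eps/\abs{\cA}}$.

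\textbf{Error bound.} The key analytic step is to relate the misclassification error of $h$ to the $\ell_2$ distance between $f$ and $g$. Whenever $\sign{g(x)}\neq f(x)$ with $f(x)\in\{\pm1\}$, we have $\abs{f(x)-g(x)}\geq1$. Hence
\[
\prob[h\neq f]\leq\simpexpect{(f-g)^2}.
\]
By Parseval, using the orthonormality of $\bP$ noted above,
\[
\simpexpect{(f-g)^2}=\sum_{\ccS\notin\cA}\hat{f}(\ccS)^2+\sum_{\ccS\in\cA}\bigl(\hat{f}(\ccS)-\tilde{f}(\ccS)\bigr)^2\leq\eps+\abs{\cA}\cdot\frac{\eps}{\abs{\cA}}=2\eps.
\]
The first sum is at most $\eps$ by the concentration hypothesis; the second is at most $\eps$ by the sample bound. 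Rescaling $\eps$ by a factor of $2$ gives error at most $\eps$ with probability at least $1-\delta$, from $\poly{\abs{\cA},\nicefrac{1}{\eps},\log\nicefrac{1}{\delta}}$ samples and time $\poly{\abs{\cA},n,\nicefrac{1}{\eps},\nicefrac{1}{\delta}}$.

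\textbf{Degree-$k$ case.} Take $\cA=\{\ccS:\abs{\ccS}\leq k\}$, which is explicitly enumerable with $\abs{\cA}=\sum_{j\leq k}\binom{n}{j}\leq n^k+1$. Substituting into the general bound gives time $\poly{n^k,\nicefrac{1}{\eps},\nicefrac{1}{\delta}}$.

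The only step needing care is the sign-rounding inequality $\prob[h\neq f]\leq\simpexpect{(f-g)^2}$, which converts the $\ell_2$ approximation into a classification error bound. It holds pointwise as explained above, so I do not expect a real obstacle; the rest is bookkeeping with Hoeffding and Parseval.
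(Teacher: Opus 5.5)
Your argument is correct and is the standard Linial--Mansour--Nisan low-degree algorithm: estimate the Fourier coefficients on $\cA$ empirically (Hoeffding plus a union bound), then apply sign-rounding via $\prob[\sign{g}\neq f]\leq\simpexpect{(f-g)^2}$ together with Parseval; the paper gives no proof of its own and simply invokes this argument by citing \cite{LMN93}. One small correction: the factor $2$ can only be absorbed into the estimation accuracy, not into the concentration level, which is fixed by $f$, so what you actually prove is error $\eps+\eps'$ for any $\eps'>0$ (equivalently, error $\eps$ from $\nicefrac{\eps}{2}$-concentration), which is the usual constant-factor slack already implicit in the lemma's phrasing.
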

\begin{remark}
    If $k$ is not a constant, then the learner in \cref{lem:lowdegree} does not yield an \textit{efficient} PAC learner for $f$.
\end{remark}

One shortcoming of \cref{lem:lowdegree} is the requirement that we have explicit knowledge of the subsets of literals on which the Boolean function is spectrally concentrated. We can get around this requirement by upgrading our choice of oracle from the $\ex{\cU}{f}$ oracle to the $\mem{f}$ oracle.

\begin{lem}[Learning from Spectral Concentration~\cite{gl89,km91}]\label{lem:kmreal}
    Let $\eps,\delta>0$, and $f:\ftn\mapsto\bF_2$. If $f$ is $\eps$-concentrated up to degree $k$, $\exists$ a learning algorithm $A$ with query access to a $\mem{f}$ oracle that $(\eps,\delta)$-PAC learns $f$ in time $\poly{n^k,\nicefrac{1}{\eps},\nicefrac{1}{\delta}}$.
\end{lem}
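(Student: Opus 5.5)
The plan is the Kushilevitz--Mansour route: use the membership oracle to find every heavy low-degree Fourier coefficient, estimate each one, and output the sign of the resulting truncated Fourier expansion. The hypothesis is
\[
h=\operatorname{sign}\Bigl(\sum_{\ccS\in L}\tilde f(\ccS)\nchi_{\ccS}\Bigr),
\]
where $L$ is a candidate list of subsets of size at most $k$ and $\tilde f(\ccS)$ are estimates of the coefficients $\hat f(\ccS)$. The analysis then has two parts: show that $L$ can be found efficiently with $\mem{f}$, and show that sign-rounding a good $\ell_2$ approximation gives small classification error.

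First, I reduce the error analysis to $\ell_2$ approximation. For $g=\sum_{\ccS\in L}\tilde f(\ccS)\nchi_{\ccS}$ we have
\[
\prob[f\neq \operatorname{sign}(g)]\le \simpexpect{(f-g)^2},
\]
because whenever the signs disagree, $\abs{f-g}\ge 1$. By Parseval,
\[
\simpexpect{(f-g)^2}=\sum_{\ccS\notin L}\hat f(\ccS)^2+\sum_{\ccS\in L}\bigl(\hat f(\ccS)-\tilde f(\ccS)\bigr)^2 .
\]
Let $\cA$ be the family of subsets of size at most $k$. The excluded mass splits into two pieces. Outside $\cA$, the mass is at most $\eps$ by concentration. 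Inside $\cA$ but not in $L$, the mass is small as long as $L$ contains every $\ccS\in\cA$ with $\hat f(\ccS)^2\ge\theta$, for a threshold $\theta=\eps/\abs{\cA}$ with $\abs{\cA}\le n^k$. Each coefficient $\hat f(\ccS)=\simpexpect{f\cdot\nchi_{\ccS}}$ can be estimated to additive accuracy $\sqrt{\eps/\abs{L}}$ by sampling uniform $x$ and querying $\mem{f}$. Hoeffding bounds and a union bound over $L$ make all estimates good with probability $1-\delta/2$, at cost $\poly{\abs{L},1/\eps,\log(1/\delta)}$. Overall this gives error $O(\eps)$, and rescaling $\eps$ finishes this part.

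The main step is finding $L$. The brute-force option is simply to take $L=\cA$ and estimate all $\abs{\cA}\le\sum_{j\le k}\binom{n}{j}=O(n^k)$ coefficients directly. That already meets the stated $\poly{n^k,\nicefrac1\eps,\nicefrac1\delta}$ bound. In fact it only uses uniform random labeled examples, so it is exactly \cref{lem:lowdegree} applied with $\cA$ known.

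To make the use of $\mem{f}$ substantive, I would instead run the Goldreich--Levin/Kushilevitz--Mansour bucket search, truncated at depth $k$. For a prefix $\alpha\in\{0,1\}^i$, define the bucket weight $W_\alpha=\sum_{\beta}\hat f(\alpha\beta)^2$. It has the identity
\[
W_\alpha=\expect{x,y,z}{f(yx)f(zx)\nchi_\alpha(y)\nchi_\alpha(z)},
\]
whose right-hand side can be estimated with membership queries at the chosen points $yx$ and $zx$. The search keeps only buckets whose estimated weight is at least $\theta/2$, and it prunes any prefix already containing more than $k$ ones.

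The obstacle I anticipate is controlling the estimation errors and the number of retained buckets. Parseval gives $\sum_\alpha W_\alpha\le1$ at each level, so at most $2/\theta$ buckets survive per level. Every estimate is needed only to accuracy $\theta/4$, which costs $O(\theta^{-2}\log(n/(\theta\delta)))$ samples each. A union bound over the $O(n/\theta)$ estimates then gives success probability $1-\delta/2$. Since $1/\theta=\abs{\cA}/\eps\le n^k/\eps$, all costs are $\poly{n^k,1/\eps,1/\delta}$, and combining the two failure probabilities completes the argument.
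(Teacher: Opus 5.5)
Your proposal is correct. The paper gives no proof of this lemma; it only cites \cite{gl89,km91}. Your second route is exactly the cited Goldreich--Levin/Kushilevitz--Mansour argument, carried out correctly: the bucket identity $W_\alpha=\expect{x,y,z}{f(yx)f(zx)\nchi_\alpha(y)\nchi_\alpha(z)}$, pruning below $\theta/2$ and above $k$ ones, then estimation and sign-rounding.

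Your first observation is the more informative one. With running time allowed to be $\poly{n^k,\nicefrac{1}{\eps},\nicefrac{1}{\delta}}$, the lemma as stated is just the second half of \cref{lem:lowdegree}, because $\mem{f}$ simulates $\ex{\cU}{f}$ by querying uniformly random points. Taking $L=\cA$ is also cheaper here. It needs $\bigO{(\abs{\cA}/\tau)\log(\abs{\cA}/\delta)}$ samples shared by all coefficients. The search with $\theta=\eps/\abs{\cA}$ costs roughly $n\theta^{-3}\approx n(\abs{\cA}/\eps)^3$ queries. The membership-query search only pays off in \cref{coro:epslearningsubset,coro:epslearningl1norm}. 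There the concentrating family is unknown, and the running time must be polynomial in its size, or in $\norm{\spectrum{f}}_1$ with threshold $\theta=(\eps/\norm{\spectrum{f}}_1)^2$, rather than in $n^k$. In that setting you drop the degree pruning and keep the rest of your analysis.

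There are two small corrections:
\begin{enumerate}
\item A surviving bucket has estimated weight at least $\theta/2$, so its true weight is only guaranteed to be at least $\theta/4$. Parseval therefore allows up to $4/\theta$ survivors per level, not $2/\theta$. This is harmless.
\item You cannot ``rescale $\eps$'' at the end, because the same $\eps$ is the concentration parameter. The tail mass $\sum_{\abs{\ccS}>k}\hat f(\ccS)^2\le\eps$ enters $\prob[f\neq\sign{g}]\le\simpexpect{(f-g)^2}$ undiminished. Choosing $\theta=\tau/\abs{\cA}$ and estimation accuracy $\sqrt{\tau/\abs{L}}$, your argument actually gives error $\eps+\bigO{\tau}$ for any $\tau>0$, at cost polynomial in $1/\tau$. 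This is the standard form of the guarantee; equivalently, $\eps/2$-concentration gives error $\eps$. The lemma as written should be read that way.
\end{enumerate}
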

There are two important corollaries of \cref{lem:kmreal}. The first result allows us to improve upon \cref{lem:lowdegree} by learning a function that is spectrally concentrated on a subset $\ccS$, without explicit knowledge of $\ccS$. 
\begin{coro}[Learning from $\eps$-concentrated subset of parity monomials]\label{coro:epslearningsubset}
    Let $\eps,\delta>0$, and $f:\ftn\mapsto\bF_2$. If $f$ is $\eps$-concentrated on a subset $\ccS$, $\exists$ a learning algorithm $A$ with query access to a $\mem{f}$ oracle that $(\eps,\delta)$-PAC learns $f$ in time $\poly{\abs{\ccS},\nicefrac{1}{\eps},\nicefrac{1}{\delta}}$.
\end{coro}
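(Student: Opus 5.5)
Let $\cA$ denote the (unknown) collection of subsets of $[n]$ on which $f$ is $\eps$-concentrated; the corollary calls it a subset $\ccS$. The plan is to reduce \cref{coro:epslearningsubset} to the Kushilevitz--Mansour / Goldreich--Levin machinery behind \cref{lem:kmreal}. That machinery is really a procedure that, with $\mem{f}$ access, finds all heavy Fourier coefficients, not just the low-degree ones. Then \cref{lem:lowdegree} is applied to the recovered list. The degree-$k$ hypothesis in \cref{lem:kmreal} is used only to bound the number of candidates by $n^k$; here that role is played by $\abs{\cA}$.

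\textbf{Step 1 (locating the heavy coefficients).} Set $\theta=\eps/\abs{\cA}$. Run the KM bucket-search: for a prefix $\alpha\in\{0,1\}^j$, estimate
\[
W_\alpha=\sum_{\beta\in\{0,1\}^{n-j}}\hat f(\alpha\beta)^2=\mathbb{E}_{x,y,z}\bigl[f(yx)f(zx)\nchi_\alpha(y)\nchi_\alpha(z)\bigr]
\]
to within $\theta/4$ using membership queries. Keep only the prefixes with estimate at least $\theta/2$. By Parseval, at most $2/\theta$ prefixes survive per level, so there are $\bigO{n/\theta}$ estimations in total. Each uses $\bigO{\theta^{-2}\log(n/(\theta\delta))}$ queries by Hoeffding, with a union bound over all estimates. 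The output is a list $L$ with $\abs{L}\le 2/\theta$ that contains every $T$ with $\hat f(T)^2\ge\theta$.

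\textbf{Step 2 (accounting for the discarded weight).} Split the coefficients not in $L$ into two groups. Those in $\cA\setminus L$ each have $\hat f(T)^2<\theta$, so their total weight is at most $\abs{\cA}\theta=\eps$. Those outside $\cA$ have total weight at most $\eps$ by concentration. So $f$ is $2\eps$-concentrated on $L$; rescaling $\eps$ by a constant fixes the factor.

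\textbf{Step 3 (building the hypothesis).} Estimate $\hat f(T)=\mathbb{E}[f\nchi_T]$ for each $T\in L$ to accuracy $\sqrt{\eps/\abs{L}}$, using uniform queries, which $\mem{f}$ simulates. Output $h=\mathrm{sign}\bigl(\sum_{T\in L}\tilde f(T)\nchi_T\bigr)$. The standard bound $\Pr[f\ne\mathrm{sign}(g)]\le\mathbb{E}[(f-g)^2]$ then gives error $\bigO{\eps}$, exactly as in \cref{lem:lowdegree}.

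\textbf{Complexity and the main obstacle.} The total cost is $\poly{n,\abs{\cA},\nicefrac1\eps,\log\nicefrac1\delta}$. The main obstacle is that the statement claims $\poly{\abs{\cA},\nicefrac1\eps,\nicefrac1\delta}$ with no explicit $n$. The KM tree has depth $n$, and writing down a parity already costs $n$, so some factor of $n$ seems unavoidable. I would either absorb it by reading the statement as polynomial in the input size, as is implicit in \cref{def:pacoracle}, or note that it is harmless. A second subtlety is that $\abs{\cA}$ may be unknown. If so, I would use a doubling guess on $\abs{\cA}$ together with hypothesis testing on fresh samples to pick a good output.
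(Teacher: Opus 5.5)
Your proposal is correct and follows the route the paper intends. The paper gives no separate proof: it presents the statement as a corollary of the Goldreich--Levin/Kushilevitz--Mansour heavy-coefficient search behind \cref{lem:kmreal}, followed by coefficient estimation on the recovered list as in \cref{lem:lowdegree}, which is exactly your Steps 1--3. Your caveats about the unavoidable $\poly{n}$ factor and the constant-factor slack in the error are accurate readings of a loosely stated result. One cosmetic slip: with estimates accurate to $\theta/4$ and threshold $\theta/2$, a survivor is only guaranteed true weight $\ge\theta/4$, so the per-level and final list bounds are $4/\theta$ rather than $2/\theta$.
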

The next corollary helps us define \PAC learners from parity monomials that have large Fourier coefficients.
Let $f$ be a Boolean function. Then, the $\ell_1$ norm of $\spectrum{f}$ is defined as
\begin{equation}\label{eq:l1norm}
\norm{\spectrum{f}}_1=\sum_{\ccS\subseteq[n]} \abs{\hat{f}(\ccS)}.
\end{equation}
\begin{coro}[Learning from parity monomials with large Fourier coefficients]\label{coro:epslearningl1norm}
    Let $\eps,\delta>0$, and $f:\ftn\mapsto\bF_2$. $\exists$ a learning algorithm $A$ with query access to a $\mem{f}$ oracle that $(\eps,\delta)$-PAC learns $f$ in time $\poly{\norm{\spectrum{f}}_1,\nicefrac{1}{\eps},\nicefrac{1}{\delta}}$.
\end{coro}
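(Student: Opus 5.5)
We derive \cref{coro:epslearningl1norm} from \cref{coro:epslearningsubset}. Write $L:=\norm{\spectrum{f}}_1$. The plan is to show that $f$ is spectrally concentrated on a small, explicitly bounded family of subsets, namely those with large Fourier coefficients. Then the Goldreich--Levin/Kushilevitz--Mansour machinery behind \cref{lem:kmreal} finds this family using $\mem{f}$ queries.

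\textbf{Step 1: concentration on heavy coefficients.} Fix a threshold $\theta := \eps/L$ and let $\cA := \{\ccS\subseteq[n] : \abs{\hat f(\ccS)}\geq \theta\}$. The weight outside $\cA$ is bounded by
\begin{equation*}
\sum_{\ccS\notin\cA}\hat f(\ccS)^2 \;\leq\; \max_{\ccS\notin\cA}\abs{\hat f(\ccS)}\cdot\sum_{\ccS}\abs{\hat f(\ccS)} \;<\; \theta L \;=\; \eps .
\end{equation*}
So $f$ is $\eps$-concentrated on $\cA$. Moreover $\abs{\cA}\leq L/\theta = L^2/\eps$, since each member contributes at least $\theta$ to the $\ell_1$ norm. (Alternatively, Parseval gives $\abs{\cA}\leq 1/\theta^2$.)

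\textbf{Step 2: finding $\cA$.} \cref{coro:epslearningsubset} promises a learner with running time $\poly{\abs{\cA},\nicefrac1\eps,\nicefrac1\delta}=\poly{L,\nicefrac1\eps,\nicefrac1\delta}$. If instead we open its proof, the procedure is KM: recursively estimate, for each prefix $\alpha\in\{0,1\}^k$, the bucket weight $\sum_{\beta}\hat f(\alpha\beta)^2$ via $\mem{f}$ queries. Such a weight can be written as an expectation $\simpexpect{f(yx)f(zx)\nchi_\alpha(y)\nchi_\alpha(z)}$ and estimated by Chernoff sampling. Buckets whose weight falls below $\theta^2/2$ are pruned. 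By Parseval, at most $2/\theta^2$ buckets survive at each of the $n$ levels, so the procedure runs in time $\poly{n,L/\eps,\log(1/\delta)}$ and returns a list $\cA'\supseteq\cA$ of size $\bigO{L^2/\eps^2}$.

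\textbf{Step 3: the hypothesis.} Estimate each $\hat f(\ccS)$ for $\ccS\in\cA'$ to additive accuracy $\gamma\approx\sqrt{\eps/\abs{\cA'}}$ using uniform samples. Set $g=\sum_{\ccS\in\cA'}\tilde f(\ccS)\nchi_{\ccS}$ and output $h=\sign{g}$. By Parseval,
\begin{equation*}
\prob[h\neq f]\leq \simpexpect{(f-g)^2} \leq \eps+\abs{\cA'}\gamma^2 = \bigO{\eps} .
\end{equation*}
Rescaling $\eps$ by a constant then gives the claimed $(\eps,\delta)$ guarantee, with the failure probability handled by a union bound over all estimates.

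The main obstacle is Step 2. A naive appeal to \cref{coro:epslearningsubset} conceals that the bound $\abs{\cA}$ must be polynomial in $L$ even though $\cA$ is not known in advance. Care is needed to set the pruning threshold, Chernoff accuracies and union bound so that every estimate stays within $\poly{L,\nicefrac1\eps,\log\nicefrac1\delta}$. The dependence on $n$ is implicit in the statement, since any algorithm must at least read $n$ bits.
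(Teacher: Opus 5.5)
Your proposal is correct and matches the paper's intended route. The paper gives no separate proof; it presents the statement as a consequence of the Goldreich--Levin/Kushilevitz--Mansour heavy-coefficient search behind \cref{lem:kmreal}, and your Steps 1--3 spell that out: threshold $\eps/L$, $\eps$-concentration on the at most $L^2/\eps$ coefficients of magnitude $\geq \eps/L$, KM pruning, and output $\sign{g}$ for the estimated sparse polynomial $g$. As you note, what this actually gives is time $\poly{n,L,\nicefrac{1}{\eps},\log\nicefrac{1}{\delta}}$ under the uniform distribution, with $L$ known (or found by doubling), and the paper's statement should be read that way.
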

\begin{obs}\label{obs:l1learn}
    \cref{coro:epslearningl1norm} tells us that if the $\ell_1$ norm of any Boolean function is polynomially bounded, then we can construct a \PAC learner for $f$ using queries to an \MQ oracle.
\end{obs}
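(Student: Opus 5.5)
The statement is \cref{coro:epslearningl1norm}. I will derive it from \cref{coro:epslearningsubset}, with the Kushilevitz--Mansour (Goldreich--Levin) procedure behind \cref{lem:kmreal} as the engine. The idea is that a Boolean function with small spectral $\ell_1$ norm must be spectrally concentrated on a small set of large coefficients. That set can be found with membership queries, and the low-degree-style estimation procedure of \cref{lem:lowdegree} then gives the hypothesis.

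\textbf{Step 1: concentration on heavy coefficients.} Write $L=\norm{\spectrum{f}}_1$ and set $\theta=\eps/L$. Let $\ccS_\theta=\{\ccS\subseteq[n] : \abs{\hat f(\ccS)}\geq\theta\}$. The light coefficients carry little weight:
\begin{equation*}
\sum_{\ccS\notin\ccS_\theta}\hat f(\ccS)^2\leq \max_{\ccS\notin\ccS_\theta}\abs{\hat f(\ccS)}\cdot\sum_{\ccS}\abs{\hat f(\ccS)}<\theta L=\eps .
\end{equation*}
There are also few heavy coefficients: $\abs{\ccS_\theta}\leq L/\theta=L^2/\eps$. So $f$ is $\eps$-concentrated on a set of size $\poly{L,\nicefrac{1}{\eps}}$.

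\textbf{Step 2: find the set.} I use the KM bucket-splitting search with the $\mem{f}$ oracle. For a prefix $\alpha\in\{0,1\}^k$, the total squared weight of coefficients whose index extends $\alpha$ equals $\expect{x,y,z}{f(yx)f(zx)\nchi_\alpha(y\oplus z)}$, which membership queries can estimate to additive accuracy $\theta^2/4$. The search keeps only buckets whose estimated weight is at least $\theta^2/2$. By Parseval, at most $\bigO{1/\theta^2}$ buckets survive at each of the $n$ levels, so the search runs in $\poly{n,L,\nicefrac{1}{\eps},\log\nicefrac{1}{\delta}}$ time. It returns a list containing every $\ccS$ with $\abs{\hat f(\ccS)}\geq\theta$, and the list has size $\bigO{1/\theta^2}$. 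A union bound over the polynomially many estimates gives failure probability at most $\delta/2$.

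\textbf{Step 3: build the hypothesis.} I estimate each listed coefficient to accuracy $\sqrt{\eps/\abs{\text{list}}}$ from uniform queries and output $h=\sign{\sum \tilde f(\ccS)\nchi_\ccS}$. The standard argument gives $\prob[h\neq f]\leq\expect{}{(f-g)^2}\leq 2\eps$, where $g=\sum\tilde f(\ccS)\nchi_\ccS$. Rescaling $\eps$ finishes the proof; this is exactly the content of \cref{coro:epslearningsubset} applied to the found list.

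\textbf{Main obstacle.} Step 1 is a single inequality, so the real work is Step 2. The point to check is that concentration on an unknown set, rather than on low degree, is still handled by KM. This works because the pruning threshold $\theta^2$ depends only on $L$ and $\eps$, not on the degree. So I will check that the surviving-bucket count bound holds level by level.
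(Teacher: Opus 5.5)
Your argument is correct, but it proves more than the paper does at this point. The paper gives the observation no separate proof. It treats \cref{coro:epslearningl1norm} as a black box, cited to Goldreich--Levin and Kushilevitz--Mansour. Substituting $\norm{\spectrum{f}}_1\leq p(n)$, for a polynomial $p$, into that corollary's running time $\poly{\norm{\spectrum{f}}_1,\nicefrac{1}{\eps},\nicefrac{1}{\delta}}$ immediately gives a polynomial-time uniform-distribution PAC learner with $\mem{f}$ queries.

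You instead re-derive the corollary itself.
\begin{itemize}
\item Your Step 1 is the real reduction: the threshold $\theta=\eps/L$ gives $\eps$-concentration on at most $L^2/\eps$ coefficients. At that point \cref{coro:epslearningsubset}, as the paper states it, already finishes, since it does not require the concentration set to be known.
\item Your Steps 2 and 3 re-prove that corollary via the KM bucket search and the sign of an estimated sparse approximation. Your estimates there are sound: surviving buckets have true weight at least $\theta^2/4$, so at most $4/\theta^2$ survive per level, and $\prob[h\neq f]\leq\simpexpect{(f-g)^2}\leq 2\eps$.
\end{itemize}

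What your route buys is self-containedness, plus two details the black-box statement hides.
\begin{itemize}
\item The running time necessarily depends on $n$; the paper's version of the corollary omits this.
\item Choosing $\theta$ requires an upper bound on $L$. The ``polynomially bounded'' hypothesis supplies exactly this, so you should take $\theta=\eps/p(n)$. The Step 1 bound still holds, because $\theta L\leq\eps$.
\end{itemize}
Finish with the one-line remark that $\poly{n,p(n),\nicefrac{1}{\eps},\log\nicefrac{1}{\delta}}$ is polynomial, which is the observation itself.
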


\section{Group Theory}
A group is a non-empty set $\cG$ is a set with an associative binary operation $(\cdot):\cG\times\cG\mapsto\cG$, such that there exists an identity element in $\cG$ with respect to $(\cdot)$, and every element of $\cG$ has an inverse element in $\cG$ w.r.t $(\cdot)$. If for all $x,y\in\cG$, $x(\cdot)y=y(\cdot)x$, then $\cG$ is an Abelian group. A generating subset $\cH$ of $\cG$ is a subset s.t. every element of $\cG$ can be expressed as a combination of finitely many elements of $\cH$ and their inverses. If $\cG$ is a cyclic group, it has an element $x$ that serves as the generator of the group. 
Any arbitrary finite Abelian group can be decomposed into cyclic groups as follows:
\begin{lem}[Fundamental Structure Theorem for Finite Abelian Groups]\label{lem:finiteabelianstructure}
    Every Abelian group $\cG$ is isomorphic to a group of the form 
    ${\bZ}_{p^{a_1}_1}\times{\bZ}_{p^{a_2}_2}\times
    \ldots\times
    {\bZ}_{p^{a_k}_k},
    $
    where $p_1,\ldots,p_k$ are prime numbers (not necessarily unique), and the decomposition is unique up to the order in which the factors $\left\{p_1^{a_1},\ldots,p_k^{a_k}\right\}$ are written.
\end{lem}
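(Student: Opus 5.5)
We read the statement for finite Abelian groups $\cG$, as the section title indicates; for infinite groups such as $\bZ$ it is false. We write $\cG$ additively. The plan is the standard three-stage argument: split $\cG$ into $p$-primary parts, split each $p$-primary part into cyclic groups, and then prove uniqueness by counting invariants.

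\textbf{Step 1 (primary decomposition).} Let $\abs{\cG}=\prod_{i} p_i^{e_i}$ with the $p_i$ distinct primes. Set $\cG_{p_i}=\{g\in\cG \st p_i^{e_i}g=0\}$, which is a subgroup because $\cG$ is Abelian. Put $m_i=\abs{\cG}/p_i^{e_i}$. The $m_i$ have greatest common divisor $1$, so Bezout gives integers $u_i$ with $\sum_i u_im_i=1$. Hence every $g$ can be written as $g=\sum_i u_im_ig$, and $u_im_ig\in\cG_{p_i}$ by Lagrange. The sum is direct: suppose $\sum_i h_i=0$ with $h_i\in\cG_{p_i}$. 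Multiplying by $m_j$ kills every $h_i$ with $i\neq j$, so $m_jh_j=0$. Since $\gcd(m_j,p_j^{e_j})=1$, this forces $h_j=0$. Therefore $\cG\cong\bigoplus_i\cG_{p_i}$, and it suffices to treat a finite Abelian $p$-group $P$.

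\textbf{Step 2 (cyclic decomposition of a $p$-group).} We use induction on $\abs{P}$. Pick $a\in P$ of maximal order $p^a$. The key claim is that $\langle a\rangle$ has a complement: there is a subgroup $K$ with $P=\langle a\rangle\oplus K$. Applying induction to $K$ then finishes this step. To prove the claim, we do a second induction. If $P\neq\langle a\rangle$, we first find an element $b\notin\langle a\rangle$ of order exactly $p$. Take any $c\notin\langle a\rangle$ of minimal order. Then $pc\in\langle a\rangle$, say $pc=ka$. Maximality of the order of $a$ forces $p\mid k$, and so $b=c-(k/p)a$ works. We then pass to $\bar P=P/\langle b\rangle$. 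The image $\bar a$ still has maximal order $p^a$ in $\bar P$, because $\langle a\rangle\cap\langle b\rangle=0$. By induction, $\bar P=\langle\bar a\rangle\oplus\bar K$, and the preimage $K$ of $\bar K$ is a complement to $\langle a\rangle$ in $P$.

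This complement step is the main obstacle. The delicate points are checking that orders are preserved in the quotient and that the preimage really meets $\langle a\rangle$ trivially. We will verify both carefully: an element $na\in K$ maps into $\bar K\cap\langle\bar a\rangle=0$, so $na\in\langle b\rangle\cap\langle a\rangle=0$.

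\textbf{Step 3 (uniqueness).} Suppose $P\cong\bigoplus_j\bZ_{p^{a_j}}$. For each $i\geq0$, the quotient $p^iP/p^{i+1}P$ is an $\bF_p$-vector space. Its dimension equals $\#\{j\st a_j>i\}$, since each factor $\bZ_{p^{a_j}}$ contributes one dimension exactly when $a_j>i$. These dimensions depend only on the isomorphism type of $P$, so they determine the multiset $\{a_j\}$. Combined with the uniqueness of the primary parts $\cG_p$, which are intrinsically defined in Step 1, this gives uniqueness of the decomposition up to the order of the factors.
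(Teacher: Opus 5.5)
The paper does not prove this lemma. It is quoted in the appendix as standard background, and the word ``finite'' appears only in its title and the sentence introducing it, so there is no argument of the paper's to compare yours against. Your restriction to finite $\cG$ is the right reading. Your three-step argument is the standard elementary proof and is correct: primary decomposition via B\'ezout, splitting off a cyclic subgroup $\langle a\rangle$ of maximal order by quotienting by an order-$p$ element $b\notin\langle a\rangle$, and uniqueness from the dimensions of $p^iP/p^{i+1}P$.

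Three small points to tighten when you write it out.

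First, in the complement step you only check $\langle a\rangle\cap K=0$. Also record that $\langle a\rangle+K=P$: if $\bar x=n\bar a+\bar k$ with $\bar k\in\bar K$, then $x-na\in K$. Alternatively, count: $\abs{K}=p\,\abs{\bar K}=\abs{P}/\abs{\langle a\rangle}$.

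Second, ``combined with the uniqueness of the primary parts'' needs one more line. Let $p^e$ be the exact power of $p$ dividing $\abs{\cG}$. In any decomposition $\cG\cong\bigoplus_i\bZ_{p_i^{a_i}}$, the elements killed by $p^{e}$ are exactly those supported on the factors with $p_i=p$, because $p^e$ acts injectively on the factors for other primes. Hence those factors form a decomposition of $\cG_p$, and Step 3 applied to $\cG_p$ determines their exponents.

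Third, uniqueness needs the convention $a_i\geq 1$, since otherwise you can pad with trivial factors $\bZ_{p^0}=\bZ_1$. Also rename the exponent in ``order $p^a$'' so it does not clash with the element $a$.
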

\begin{remark}
    $\bZ_k$ is a finite Abelian group of order $k$ with respect to the $+_{\mathrm{mod}\,k}$ operation. $\bZ_k$ is also a cyclic group if $k=1,2,4$ or some power of an odd prime.
\end{remark}

\begin{defn}[Dihedral Groups]\label{dihedral}
    A dihedral group $D_{2N}$ is a non-abelian group of order $2N$ that is generated by two elements $x$ and $y$ s.t.
    $$
        x^N=1,\;\;\; y^2=1,\;\;\; yxy=-x.
    $$
    $D_{2N}$ is group of symmetries of a regular polygon of $2N$ sides. The elements $x$ and $y$ are rotation and reflection operations about the vertical axis. 
\end{defn}

\end{appendices}

\end{document}